%% file: main.tex
\documentclass[11pt, a4paper]{article}
\usepackage{subfiles}  
\usepackage{titling}

\usepackage[a4paper,top=2cm,bottom=2cm,left=2.5cm,right=2.5cm,marginparwidth=1.75cm]{geometry}

\usepackage{amsmath}
\usepackage{graphicx}
\usepackage{kotex}
\usepackage[colorlinks=true, allcolors=blue]{hyperref}
\usepackage{orcidlink}
\usepackage[title]{appendix}
\usepackage{mathrsfs}
\usepackage{amsfonts}
\usepackage{amssymb}
\usepackage{amsthm}
\usepackage{booktabs} % For \toprule, \midrule, \botrule
\usepackage{threeparttable} % For table footnotes
\usepackage{algpseudocode}
\usepackage{listings}
\usepackage{enumitem}
\usepackage{chngcntr}
\usepackage{lipsum}
\usepackage{subcaption}
\usepackage{authblk}
\usepackage[T1]{fontenc}    % Font encoding
\usepackage{csquotes}       % Include csquotes
\usepackage{diagbox}
\usepackage{natbib}
\usepackage{bbm}
\usepackage{xspace}

\newtheorem{theorem}{Theorem}

\usepackage{setspace}
\usepackage{titlesec}
\titleformat{\section} % Redefine section numbering format
  {\normalfont\Large\bfseries}{\thesection.}{1em}{}

\newcommand{\GJRM}{{\fontfamily{qcr}\selectfont GJRM}\xspace}

\usepackage{float}   % for customizing caption position
\usepackage{caption} % for customizing caption format


\title{Bayesian Nonparametric Modeling for Multivariate Conditional Copula Regression with Varying Coefficients}
\author[1]{Yujin Jeong}
\author[1,2]{Seonghyun Jeong\thanks{ Corresponding author: \texttt{sjeong@yonsei.ac.kr}}}
\affil[1]{Department of Statistics and Data Science, Yonsei University}
\affil[2]{Department of Applied Statistics, Yonsei University}

\date{}  % Remove date

\begin{document}
%---------------
\maketitle
\begin{abstract}
Multivariate mixed-type outcomes are difficult to model jointly, and additional complexity arises when both marginal effects and dependence structures vary with a covariate such as age or time. Existing approaches often impose restrictive dependence assumptions or lack sufficient flexibility to accommodate heterogeneous response types in a unified framework. To address this issue, we propose a Bayesian nonparametric framework for multivariate conditional copula regression with varying coefficients. The proposed model combines adaptive spline-based marginal regressions with an infinite mixture of Gaussian copulas whose weights vary with the covariate through a probit stick-breaking process. This construction provides flexible covariate-dependent dependence modeling while avoiding explicit global constraints on functional correlation matrices. We further establish approximation results for the proposed copula representation and develop a Markov chain Monte Carlo algorithm for posterior inference. Simulation studies show accurate recovery under correct specification and robust performance under copula misspecification. In an analysis of the BRFSS 2023 data, the proposed model reveals age-varying marginal effects and dependence patterns among multiple health outcomes, providing a coherent joint view of multimorbidity beyond separate marginal analyses. %\lipsum[1]
\end{abstract}

\textbf{Keywords}: Bayesian nonparametrics, conditional copula, varying coefficients, Gaussian copula mixture, mixed responses, multimorbidity. 

%-------------------------------------------
% Paper Body
%-------------------------------------------
%--- Section ---%
\section{Introduction}\label{sec1}

\subsection{Background and Motivation}
\label{sec:mot}

In many scientific fields, datasets often include multivariate response variables observed simultaneously, comprising both discrete and continuous measurements. Modeling such multi-response data is challenging because no natural multivariate distribution exists for mixed-type observations.
In this respect, copula regression models serve as a useful compromise between sufficient flexibility and accessible parameterization \citep{kolev2009copula}.
Within the multivariate regression framework, for $i=1,\dots,n$, let $\mathbf y_i=(y_{i1},\dots,y_{im})^\top\in\mathbb R^m$ denote the $m$-dimensional response vector. 
Denote its marginal distributions by $F_1(\cdot\,;\mathbf x_i),\dots,F_m(\cdot\,;\mathbf x_i)$, each depending on the $p$-dimensional fixed explanatory variables $\mathbf x_i=(x_{i1},\dots,x_{ip})^\top\in\mathbb R^p$.
A copula $C:[0,1]^m\rightarrow[0,1]$, defined as an $m$-dimensional distribution function with uniform marginals on $[0,1]$ for each coordinate, specifies the joint distribution of $\mathbf y_i$ as $C(F_1(y_{i1};\mathbf x_i),\dots,F_m(y_{im};\mathbf x_i))$.
If all marginals are continuous distributions, the joint distribution is uniquely specified by a copula $C$ according to Sklar's theorem \citep{Skla59}. When some marginals are discrete, Sklar's theorem does not hold, but a copula still provides a fully valid construction for the joint distribution \citep{geenens2020copula}.

The copula regression framework typically employs a parametric linear predictor for each marginal distribution, namely $\mathbf x_i^\top \boldsymbol{\beta}_k$ with coefficients $\boldsymbol{\beta}_k\in\mathbb R^p$ for $k=1,\dots,m$. However, analyses of real datasets often reveal a more dynamic structure, where the effects of $\mathbf x_i$ vary with another fixed covariate $t_i\in\mathcal T $ for some $\mathcal T \subset\mathbb R$, such as observation time or subject age.
Additionally, the dependence structure may also change with $t_i$, necessitating a copula function $C$ that accommodates such varying effects. 
We illustrate this using the BRFSS 2023 data, a large-scale health survey system conducted by the U.S. Centers for Disease Control and Prevention (CDC) in collaboration with state health departments (\url{https://www.cdc.gov/brfss/annual_data/annual_2023.html}).
We examine the relationship between an individual's health status and socio-demographic variables.
For the explanatory variables $\mathbf x_i$, we include socio-demographic characteristics of each respondent, such as gender, race, education level, smoking and drinking behavior, and insurance coverage, along with other relevant factors and an intercept term (see Table~\ref{tab:predictors} in Section~\ref{sec6} for the complete list of variables). For the response vector $\mathbf y_i$, we use eight variables indicating health status measures: diabetes ($y_{i1}$; $1=\text{no}$, $2=\text{prediabetes}$, $3=\text{yes}$), high blood pressure ($y_{i2}$; $0=\text{no}$, $1=\text{yes}$), high cholesterol ($y_{i3}$; $0=\text{no}$, $1=\text{yes}$), stroke ($y_{i4}$; $0=\text{no}$, $1=\text{yes}$), heart disease ($y_{i5}$; $0=\text{no}$, $1=\text{yes}$), arthritis ($y_{i6}$; $0=\text{no}$, $1=\text{yes}$), asthma ($y_{i7}$; $0=\text{no}$, $1=\text{yes}$), and the log-transformed body mass index (BMI) measure ($y_{i8}$; continuous).
Given that the influence of socio-demographic factors on an individual’s health status is expected to vary nonlinearly with age \citep{lynch2005life}, we allow the relationship between $\mathbf x_i$ and $\mathbf y_i$ to change with respondent’s age denoted by $t_i$. For illustration, the age variable $t_i$ is divided into five groups
($<30$, $30$--$44$, $45$--$59$, $60$--$74$, $\ge75$), and each response variable $y_{ik}$ is  separately regressed on the explanatory variables $\mathbf x_i$ within each age group using appropriate models: cumulative logit regression for $y_{i1}$, logistic regression for $y_{i2},\dots,y_{i7}$, and linear regression for $y_{i8}$. In total, this yields 40 separate analyses. Figure~\ref{fig:motiv_marginal} shows the estimated regression coefficients and their 95\% confidence intervals from these analyses, revealing distinct patterns across the age groups. To assess pairwise dependencies among the response variables in their joint distributions conditional on the explanatory variables, we calculate the correlations of standardized residuals within each age group and construct 95\% confidence intervals using the Fisher transformation; the results are shown in Figure~\ref{fig:motiv_corr}. The figure demonstrates not only that the health status conditions are correlated but also that the strength of these dependencies varies across the age groups, consistent with findings in the multimorbidity literature \citep{barnett2012epidemiology,chmiel2014spreading}.

\begin{figure}[p!]
    \centering
    \includegraphics[width=\textwidth]{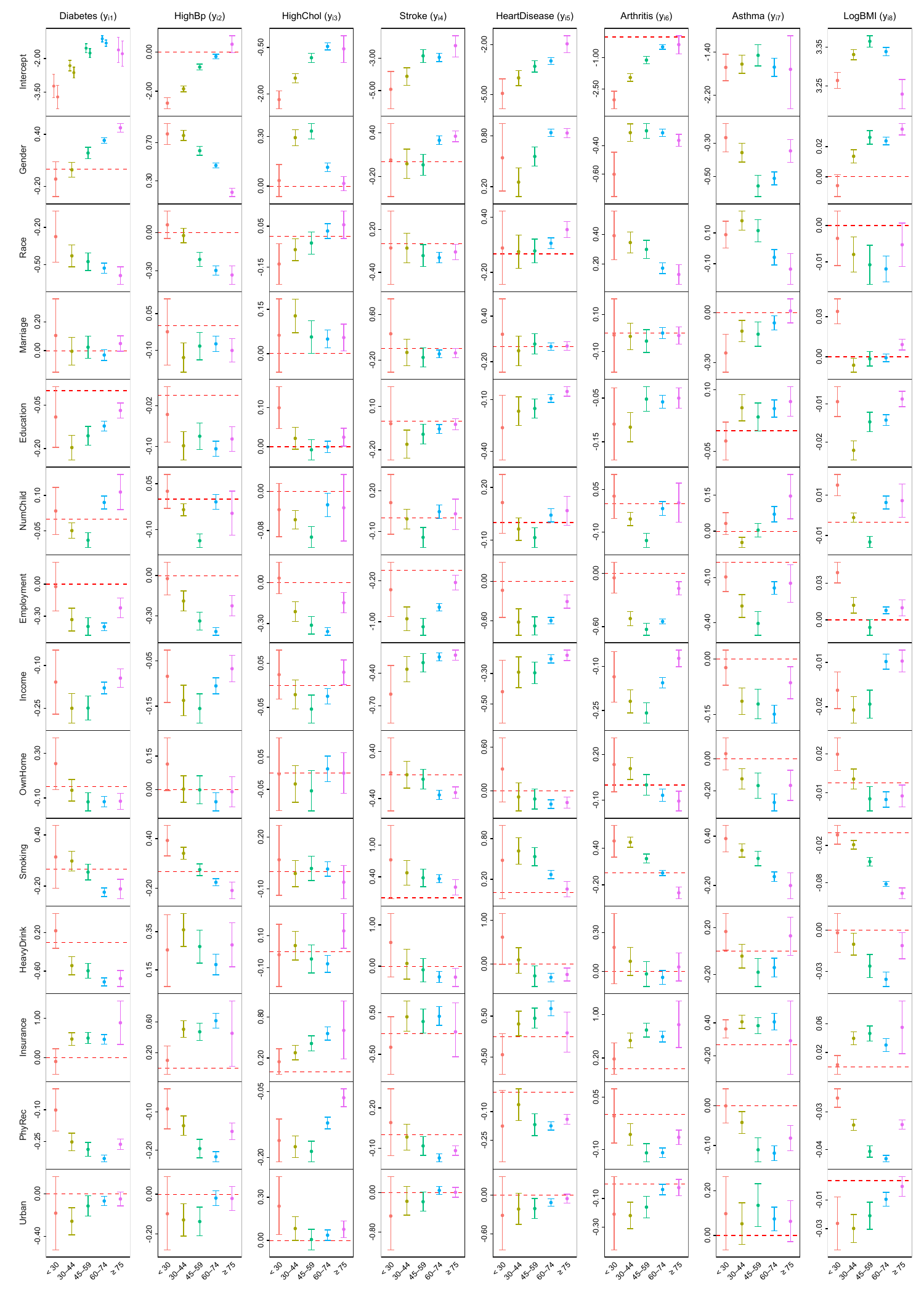}
    \caption{Parameter estimates and 95\% confidence intervals from the analyses of the BRFSS 2023 data. Each column presents the results of a separate analysis relating the corresponding response variable to the explanatory variables. For the diabetes variable, the two intercepts represent the cut points, $-\text{logit}\Pr\{y_{i1}\le b\mid t_i\}$, $b=1,2$, when $\mathbf x_i$ is zero.}
    \label{fig:motiv_marginal}
\end{figure}
%---------------------------------------

%------------------------- corr plot
\begin{figure}[t!]
    \centering
    \includegraphics[width=\textwidth]{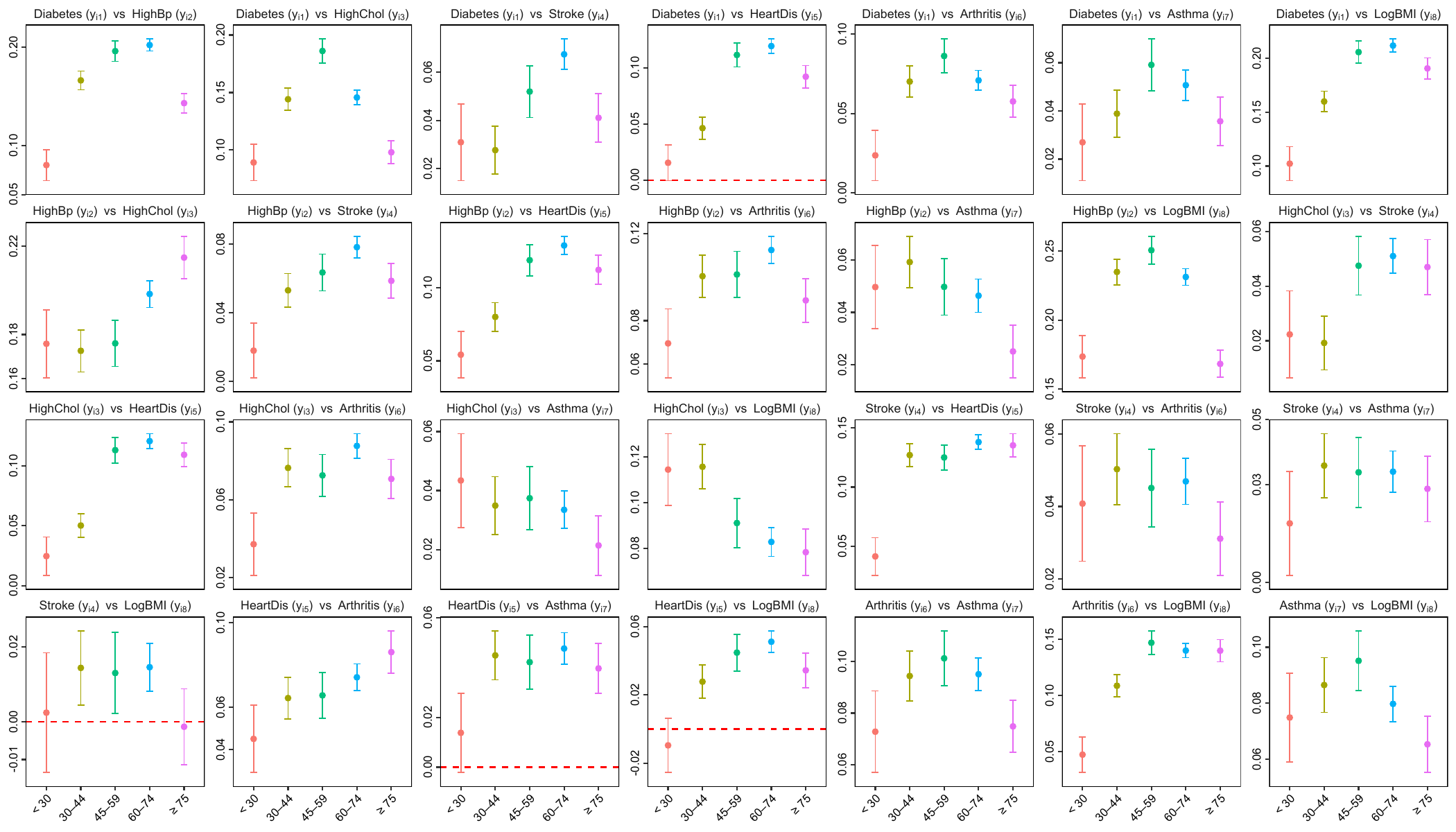}
    \caption{Estimates and 95\% confidence intervals of pairwise correlations of standardized residuals $(y_{ik}-\widehat{\mathbb E}( y_{ik}))/\widehat{sd}(y_{ik})$, $k=1,\dots,m$, where $\widehat{\mathbb E}( y_{ik})$ and $\widehat{sd}(y_{ik})$ denote the estimated mean and standard deviation of $y_{ik}$, respectively.
    }
    \label{fig:motiv_corr}
\end{figure}
%---------------------------------------

While the preliminary analysis is informative, it has several limitations.
First, discretizing age into a small number of groups may obscure smooth age-dependent variation and lead to a loss of information.
Moreover, the marginal models are fitted separately for each response variable. The dependence among the responses is therefore assessed only through correlations of residuals rather than being modeled directly within a joint framework.
Lastly, the dependence structure itself may vary with age, yet the preliminary analysis does not provide a coherent way to model such covariate-dependent dependence patterns.
These limitations motivate the development of a more flexible copula regression model that allows age-dependent effects in both the marginal and dependence structures.

\subsection{Contributions}
\label{sec:cont}

Motivated by the analysis of the BRFSS 2023 data in Section~\ref{sec:mot}, we consider a covariate-dependent copula regression model.
Specifically, the parametric regression parameters are extended to varying coefficients, $t\mapsto{\boldsymbol \beta}_k(t)=( \beta_{1k}(t),\dots,  \beta_{pk}(t))^\top$, where each $\beta_{jk}:\mathcal T\rightarrow\mathbb R$ is a univariate function \citep{hastie1993varying}. In addition, the copula function is allowed to depend on the covariate through the mapping $t\mapsto C(\cdot\,;t)$. 
The varying-coefficient structure can be readily implemented using flexible nonparametric approaches, such as splines. In contrast, the covariate-dependent copula modeling, often referred to as conditional copula modeling \citep{patton2006modelling}, involves substantially more complexity \citep{gijbels2011conditional,acar2011dependence,abegaz2012semiparametric,craiu2012mixed,klein2016simultaneous}.
Because a wide range of flexible bivariate copulas, such as Archimedean copulas, is readily available, much of the literature on conditional copulas has focused on bivariate response variables, with multivariate extensions typically pursued through pair-copula constructions using vine copulas \citep{aas2009pair,acar2012beyond,stoeber2013simplified,zhang2018vine}.
However, such constructions become increasingly complex and difficult to interpret as dimensionality grows. In addition, the pair-copula decomposition is not unique, so the resulting inference depends heavily on the user-specified decomposition setup \citep{czado2022vine}.
Another approach to multivariate modeling is to directly employ suitable multivariate copulas, such as Gaussian or $t$ copulas \citep{demarta2005t}, allowing the dependence structure to vary with covariate $t_i$. This approach also presents challenges because the covariate-dependent copula parameters must satisfy constraints such as the positive definiteness of scale matrices. For example, when a Gaussian copula has a correlation matrix that varies with $t_i$ (see Section~\ref{sec:copula} for more details), positive definiteness must hold over the entire domain $\mathcal T$, which becomes increasingly difficult to enforce as $m$ grows.
A few decomposition-based strategies, such as using the Cholesky decomposition or factor models, have been proposed to address this issue \citep{wilson2011generalised,fox2015bayesian,lan2019flexible}. However, such decompositions impose global structural constraints on the dependence pattern because pairwise correlations are determined through shared latent factors or triangular matrix representations, which may be restrictive when dependence relationships vary heterogeneously across response pairs.
Moreover, the desirable properties of the Gaussian copula, such as its connection to Kendall's and Spearman's rank correlations, no longer hold when some marginals are discrete \citep{geenens2020copula}. Because our primary interest lies in incorporating heterogeneous marginals, the appeal of such parametric copula structures is diminished, motivating the development of a more flexible and computationally efficient nonparametric framework.

In this study, we propose a Bayesian nonparametric approach to modeling conditional copula regression for multivariate responses.
Our first goal is to incorporate varying coefficients, $t\mapsto{\boldsymbol \beta}_k(t)$, to capture how the effects of the explanatory variables $\mathbf x_i$ vary with an additional covariate $t_i$ in the marginal distributions, leading to the expression $y_{ik}\sim F_k(\cdot\,;\mathbf x_i,t_i)$, $k=1,\dots,m$. To achieve this, we employ Bayesian knot selection with suitably designed natural cubic spline (NCS) basis functions \citep{smith1996basis,kang2024gam}, which allows local adaptation to varying smoothness.
The second goal is to develop a covariate-dependent copula model $t\mapsto C(\cdot\,;t)$ that accommodates a dependence structure varying with $t_i$. To this end, we construct an infinite mixture of Gaussian copulas with covariate-dependent mixture weights defined through the probit stick-breaking process (PSBP) \citep{chung2009nonparametric,rodriguez2011nonparametric}. As established theoretically in Section~\ref{sec:app} and empirically in Section~\ref{sec:simmis}, this PSBP framework provides greater flexibility than a single parametric family such as the Gaussian copula with varying correlations and enables multivariate conditional copula modeling without relying on pair-copula constructions, as it avoids explicit constraints on copula parameters.
Although several approaches to finite and infinite mixtures of Gaussian copulas have been proposed \citep{wu2015bayesian,tewari2023estimation}, to the best of our knowledge, this is the first work to incorporate covariate dependence using the PSBP.
Similar to the varying coefficients for the marginals, to ensure flexible modeling of the varying weights as a function of $t_i$, we employ knot selection with NCS basis functions. The resulting joint model specification is written as 
$\mathbf y_i\sim C(F_1(y_{i1};\mathbf x_i,t_i),\dots,F_m(y_{im};\mathbf x_i,t_i);t_i)$, $i=1,\dots,n$. The proposed method is implemented in the R package \texttt{bnpmccr}, available at \url{https://github.com/YujinJeong0202/bnpmccr}.

The remainder of this paper is organized as follows. Section~\ref{sec2} presents the proposed modeling framework, including the marginal specification with varying coefficients and the covariate-dependent copula construction. Section~\ref{sec3} describes the prior specification for both the marginal and dependence components. Section~\ref{sec4} develops the posterior sampling procedure based on Markov chain Monte Carlo (MCMC). Section~\ref{sec5} evaluates the proposed method through numerical studies, including an illustrative example and a copula misspecification study. Section~\ref{sec6} applies the proposed model to the BRFSS 2023 data. Section~\ref{sec7} concludes with a discussion. Additional details on posterior computation, likelihood derivations, and theoretical proofs are provided in the supplementary material.

\section{Modeling Framework
%: Conditional Copula Regression with Varying Coefficients
}\label{sec2}

In this section, we present the modeling framework for the proposed method. The model specification consists of two key components: modeling the marginal distributions as functions of the explanatory variables $\mathbf x_i$ and the covariate $t_i$, and defining a copula model indexed by $t_i$ to capture the covariate-dependence structure.

\subsection{Marginal Modeling with Varying Coefficients}

First, we assume that the marginal distribution of $y_{ik}$ is specified by the varying linear predictor $\mathbf x_i^\top {\boldsymbol \beta}_k(t_i)$, where ${\boldsymbol \beta}_k:\mathcal T\rightarrow \mathbb R^p$, together with additional Euclidean parameters ${\boldsymbol\xi}_k$.
This parameterization provides a flexible interpretation of how the effects of the explanatory variables $\mathbf x_i$ change with the covariate $t_i$. The additional parameters ${\boldsymbol\xi}_k$ remain constant with respect to $\mathbf x_i$ and $t_i$. We write the resulting expression of the law of $y_{ik}$ as
$$  
{y}_{ik} \sim F_k(\cdot \,;\mathbf x_i,t_i)=F_k(\cdot \,;{\boldsymbol \beta}_k,{\boldsymbol\xi}_k,\mathbf x_i,t_i).
$$
There are numerous ways to relate the varying linear predictor $\mathbf x_i^\top {\boldsymbol \beta}_k(t_i)$ to the marginal distribution. 
It is often directly linked to the primary quantity of interest, such as the expected value for continuous $y_{ik}$ or the probabilities for discrete $y_{ik}$. 

A convenient yet versatile approach is the generalized linear model (GLM) framework, which defines $g_k(\mathbb E(y_{ik})) = \mathbf x_i^\top {\boldsymbol \beta}_k(t_i)$ with a known link function $g_k$, making our joint model correspond to the vector GLM \citep{song2009joint}.
For example, in the case of a binary response variable $y_{ik}\in\{0,1\}$, we specify $\mathbb E(y_{ik}) = \Phi(\mathbf x_i^\top {\boldsymbol \beta}_k(t_i))$ or  $\mathbb E(y_{ik}) = \text{expit}(\mathbf x_i^\top {\boldsymbol \beta}_k(t_i))$ using a probit or logistic model, where $\Phi$ denotes the distribution function of the standard normal distribution, and $\text{expit}(x)=e^x/(1+e^x)$ denotes the standard logistic function. 
We extend the logit model to handle binomial data with multiple trials, while using the probit model only for binary responses owing to its convenient data augmentation for inference (see Section~\ref{sec3} for details).
A count response $y_{ik}\in\{0,1,\dots\}$ with overdispersion can be modeled with a negative binomial distribution.

Beyond the single-parameter GLM framework, our marginal specification offers further flexibility. For example, an ordinal response variable $y_{ik}\in\{1,\dots,B_k\}$ can be incorporated via an ordinal probit model as $F_k(b ;{\boldsymbol \beta}_k,{\boldsymbol\xi}_k,\mathbf x_i,t_i)=\Phi(\xi_{kb}-\mathbf x_i^\top {\boldsymbol \beta}_k(t_i))$, $b=1,\dots,B_k-1$, where ${\boldsymbol\xi}_k=(\xi_{k1},\dots,\xi_{k,B_k-1})^\top$ represents the threshold parameters. By including a vector of ones in $\mathbf x_i$ to incorporate a varying intercept, we fix $\xi_{k1}=0$ to ensure identifiability. 
Table~\ref{tab:dist} summarizes the marginal specifications considered in this study. To fully specify the marginal distribution, it remains necessary to model the nonparametric functions ${\boldsymbol \beta}_k$ and to assign suitable priors to both ${\boldsymbol \beta}_k$ and ${\boldsymbol \xi}_k$. Details of these modeling components are provided in Section~\ref{sec:prior1}.

\begin{table}[t!]
\caption{Summary of the marginal parameterizations used in this study. The parameter $\xi_k$ represents the variance in a Gaussian marginal, the shape parameter in a gamma marginal, and the size parameter in a negative binomial marginal, while $\xi_{kb}$ denotes the $b$th threshold parameter in an ordinal probit model. For a negative binomial marginal, to recover the common parameterization in which $y_{ik}$ denotes the number of failures given $\xi_k$ successes, the success probability is expressed as $\xi_k/(\xi_k+\mathbb E(y_{ik}))$.}
\centering
\resizebox{1\textwidth}{!}{%
\begin{tabular}{lll}
\toprule
Response Type              & Model             & Parameterization \\
\midrule
Continuous on $\mathbb R$  & Gaussian          & $\mathbb E(y_{ik}) = \mathbf x_i^\top {\boldsymbol \beta}_k(t_i)$ , $\text{Var}(y_{ik})=\xi_k$\\
Continuous on $(0,\infty)$ & Gamma             &    $\mathbb E(y_{ik}) =\exp(\mathbf x_i^\top {\boldsymbol \beta}_k(t_i))$, $\text{Var}(y_{ik}) = \mathbb E(y_{ik})^2/\xi_k$ \\
Binary                     & Probit            &    $\mathbb E(y_{ik}) = \Phi(\mathbf x_i^\top {\boldsymbol \beta}_k(t_i))$ \\
Binomial with $B_k$ trials                  & Logit                 & $\mathbb E(y_{ik}) = B_k \text{expit}(\mathbf x_i^\top {\boldsymbol \beta}_k(t_i))$  \\
Count                      & Negative binomial &     $\mathbb E(y_{ik}) = \exp(\mathbf x_i^\top {\boldsymbol \beta}_k(t_i))$, $\text{Var}(y_{ik}) = \mathbb E(y_{ik}) + \mathbb E(y_{ik})^2/\xi_k$    \\
Ordinal with $B_k$ categories                   & Ordinal probit    &  $F_k(b ;{\boldsymbol \beta}_k,{\boldsymbol\xi}_k,\mathbf x_i,t_i)=\Phi(\xi_{kb}-\mathbf x_i^\top {\boldsymbol \beta}_k(t_i))$, $b=1,\dots,B_k-1$\\
\bottomrule
\end{tabular}
}
\label{tab:dist}
\end{table}

\subsection{Dependence Modeling with Conditional Copula}
\label{sec:copula}

The second key modeling component concerns the dependence structure of the response vector $\mathbf y_i$. We adopt a Bayesian nonparametric framework based on a covariate-dependent infinite mixture of Gaussian copulas. Let $\mathcal Q^m(\lambda_0)$ denote the space of $m\times m$ correlation matrices whose minimum eigenvalue is bounded below by $\lambda_0>0$. For $\mathbf R_0\in \mathcal Q^m(\lambda_0)$, define the Gaussian copula
$C_G(u_1,\dots,u_m;\mathbf R_0)=\Phi_{\mathbf R_0}[\Phi^{-1}(u_1),\dots,\Phi^{-1}(u_m)]$, where $\Phi_{\mathbf V}:\mathbb R^m\rightarrow [0,1]$ denotes the distribution function of an $m$-dimensional zero-mean normal distribution with covariance matrix $\mathbf V$. We model the dependence structure by an infinite-mixture copula model
$t\mapsto C(\cdot\,;t)$ given by
\begin{align}
\begin{split}
    C(\cdot\,;t)&=C(\cdot\,; \{\pi_h,\mathbf R_h\}_{h=1}^{\infty},t) 
        = \sum_{h=1}^\infty \pi_h(t) C_G(\cdot\,;\mathbf R_h),\quad t\in\mathcal T,
\end{split}
    \label{eqn:mixcop}
\end{align}
where $\pi_h : \mathcal T\rightarrow [0,1]$ are weight functions satisfying $\sum_{h=1}^\infty \pi_h(\cdot)=1$, and $\mathbf R_h\in \mathcal Q^m(\lambda_0)$ are positive definite correlation matrices.
The construction in \eqref{eqn:mixcop} reduces to the standard Gaussian copula if $\pi_1(\cdot)=1$.
To the best of our knowledge, no previous work has incorporated covariate dependence in mixtures of Gaussian copulas; our study provides the first such contribution.

As an alternative approach to covariate-dependent copula modeling, one may consider employing a Gaussian copula mapping $t\mapsto C_G(\cdot;\tilde{\mathbf R}(t))$ with a functional correlation matrix $\tilde{\mathbf R}:\mathcal T\to \mathcal Q^m(\lambda_0)$, as discussed in Section~\ref{sec:cont}. This formulation poses challenges for modeling and inference. 
In contrast, our construction in \eqref{eqn:mixcop} enables straightforward inference through mixture modeling. Specifically, for the prior on infinite mixtures with varying weights, we adopt the PSBP \citep{rodriguez2011nonparametric}, as detailed in Section~\ref{sec:prior2}. 
Additionally, the construction in \eqref{eqn:mixcop} offers greater flexibility through its nonparametric structure. In Theorem~\ref{thm:appGC}, we show that the mapping $t\mapsto C_G(\cdot;\tilde{\mathbf R}(t))$ can be well approximated by our copula construction in \eqref{eqn:mixcop}, whereas the reverse does not hold. Moreover, beyond this restricted family, Theorem~\ref{thm:appGCM} shows that \eqref{eqn:mixcop} can also approximate kernel mixtures of Gaussian copulas well under mild conditions. See Section~\ref{sec:app} for details.

The resulting joint distribution $H(\cdot\, ;\mathbf x_i,t_i)$ of $\mathbf y_i$ is expressed as
$H(y_{i1},\dots,y_{im};\mathbf x_i,t_i)=C(F_1(y_{i1};\mathbf x_i,t_i),\dots,F_m(y_{im};\mathbf x_i,t_i);t_i)$.
An equivalent expression for $\mathbf y_i \sim H(\cdot\, ;\mathbf x_i,t_i)$ is given by
\begin{align} 
        {y}_{ik} = h^{-1}_{ik}(\tilde z_{ik}), \quad \tilde{\mathbf{z}}_i \sim \sum_{h=1}^\infty \pi_h(t_i)\mathrm{N}_m(\mathbf{0},\textbf{R}_h),
        \label{eqn:model}
\end{align}
where $\tilde{\mathbf{z}}_i=(\tilde z_{i1},\ldots,\tilde z_{im})^\top$ and $h_{ik}(\cdot) = \Phi^{-1}(F_k(\cdot \,; {\boldsymbol \beta}_k,{\boldsymbol\xi}_k,\mathbf x_i,t_i))$.
As noted by \citet{pitt2006gcr}, $h_{ik}^{-1}$ is one-to-one for a continuous $y_{ik}$, but not for a discrete $y_{ik}$. This distinction between continuous and discrete cases implies that $\tilde z_{ik}$ is an unobserved latent variable for a discrete $y_{ik}$ and should be generated within the algorithm.

\subsubsection{Approximation via Covariate-Dependent Mixtures of Gaussian Copulas}
\label{sec:app}

We now investigate the approximation properties of the infinite Gaussian copula mixture model in \eqref{eqn:mixcop}. As discussed above, one may alternatively consider the mapping $t \mapsto C_G(\cdot;\tilde{\mathbf R}(t))$ as a model for covariate-dependent copulas. Although our mixture modeling framework does not strictly contain the class $t \mapsto C_G(\cdot;\tilde{\mathbf R}(t))$, it can approximate any such Gaussian copula model arbitrarily well to any prescribed level of accuracy. The following theorem formalizes this result, where we denote the density of $C_G$ by $c_G$.

\begin{theorem} 
\label{thm:appGC}
Suppose $\mathcal T\subset\mathbb R$ is compact. For any continuous map $\tilde{\mathbf R}:\mathcal T\rightarrow \mathcal Q^m(\lambda_0)$ and any $\epsilon>0$, there exist correlation matrices $\{\mathbf R_h\}_{h=1}^H \subset \mathcal Q^m(\lambda_0)$ and continuous weights $\{\pi_h\}_{h=1}^H $ with finite $H$ such that
\begin{align*}
    \sup_{t\in\mathcal T}\bigg\lVert c_G(\cdot \,;\tilde{\mathbf R}(t))-\sum_{h=1}^H \pi_h(t)c_G(\cdot\,;\mathbf R_h)\bigg\rVert_{L^1[0,1]^m}\le \epsilon.
\end{align*}
\end{theorem}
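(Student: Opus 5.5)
The approach combines uniform continuity of the map $\mathbf R\mapsto c_G(\cdot\,;\mathbf R)$ in $L^1[0,1]^m$ with a continuous partition of unity on the compact set $\mathcal T$. The weights $\pi_h(t)$ will interpolate between finitely many ``anchor'' copula densities $c_G(\cdot\,;\tilde{\mathbf R}(t_h))$. Then $\mathbf R_h:=\tilde{\mathbf R}(t_h)$ automatically lie in $\mathcal Q^m(\lambda_0)$, since $\tilde{\mathbf R}$ takes values there.

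\textbf{Step 1: continuity of the copula density in $\mathbf R$.} The change of variables $u_k=\Phi(z_k)$ gives
\[
\lVert c_G(\cdot\,;\mathbf R)-c_G(\cdot\,;\mathbf R')\rVert_{L^1[0,1]^m}=\lVert \phi_{\mathbf R}-\phi_{\mathbf R'}\rVert_{L^1(\mathbb R^m)},
\]
where $\phi_{\mathbf V}$ is the $\mathrm N_m(\mathbf 0,\mathbf V)$ density. Indeed, $c_G(\mathbf u;\mathbf R)=\phi_{\mathbf R}(\mathbf z)/\prod_k\phi(z_k)$ and $d\mathbf u=\prod_k\phi(z_k)\,d\mathbf z$. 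By Scheff\'e's lemma, or explicitly via Pinsker's inequality and the Gaussian KL divergence
\[
\tfrac12\bigl\{\mathrm{tr}(\mathbf R'^{-1}\mathbf R)-m+\log\det\mathbf R'-\log\det\mathbf R\bigr\},
\]
this total variation distance tends to zero as $\mathbf R\to\mathbf R'$ in operator norm. The convergence is uniform on $\mathcal Q^m(\lambda_0)$, because eigenvalues are bounded below by $\lambda_0$ and, for correlation matrices, above by $m$. Thus the composite $t\mapsto c_G(\cdot\,;\tilde{\mathbf R}(t))\in L^1$ is continuous on compact $\mathcal T$, hence uniformly continuous. So there is $\delta>0$ such that $|t-s|<\delta$ implies that the $L^1$ distance between the corresponding densities is at most $\epsilon$. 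I expect this step to be the main, though mild, technical point: the uniformity needs either the explicit KL bound or a compactness argument on $\mathcal Q^m(\lambda_0)$, which is a closed and bounded set.

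\textbf{Step 2: partition of unity.} Cover $\mathcal T$ by finitely many open intervals $(t_h-\delta,t_h+\delta)$ with $t_h\in\mathcal T$, $h=1,\dots,H$, which compactness allows. Take a continuous partition of unity $\{\pi_h\}$ subordinate to this cover: for example, piecewise-linear hat functions, or $\pi_h=\psi_h/\sum_l\psi_l$ with $\psi_h(t)=(\delta-|t-t_h|)_+$. These are continuous, nonnegative, sum to one, and satisfy $\pi_h(t)=0$ whenever $|t-t_h|\ge\delta$.

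\textbf{Step 3: the bound.} For each $t$, the triangle inequality and convexity give
\[
\Bigl\lVert c_G(\cdot\,;\tilde{\mathbf R}(t))-\sum_h\pi_h(t)c_G(\cdot\,;\mathbf R_h)\Bigr\rVert_{L^1}\le\sum_h\pi_h(t)\,\lVert c_G(\cdot\,;\tilde{\mathbf R}(t))-c_G(\cdot\,;\tilde{\mathbf R}(t_h))\rVert_{L^1}.
\]
Only indices with $|t-t_h|<\delta$ contribute, and each of those terms is at most $\epsilon$ by Step 1. Since the weights sum to one, the right-hand side is at most $\epsilon$. Taking the supremum over $t$ completes the proof.
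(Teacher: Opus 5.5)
Your proof is correct, but it is organized differently from the paper's.

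The paper first proves a quantitative Lipschitz bound $\lVert c_G(\cdot\,;\mathbf R_1)-c_G(\cdot\,;\mathbf R_2)\rVert_{L^1[0,1]^m}\le C\lVert \mathbf R_1-\mathbf R_2\rVert_F$, with $C$ depending on $m$ and $\lambda_0$. It gets this from the same change of variables you use, the Gaussian KL formula, a matrix perturbation lemma, and Pinsker. It then takes a finite Frobenius $\delta$-net $\{\mathbf R_h\}$ of the compact range $\tilde{\mathbf R}(\mathcal T)$ and uses bump functions of $\lVert\tilde{\mathbf R}(t)-\mathbf R_h\rVert_F$. So its partition of unity lives on the range and is pulled back through $\tilde{\mathbf R}$.

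You instead partition the domain $\mathcal T$ with hat functions in $t$ and take anchors $\mathbf R_h=\tilde{\mathbf R}(t_h)$. You need only qualitative $L^1$-continuity of $\mathbf R\mapsto c_G(\cdot\,;\mathbf R)$ (Scheff\'e suffices) together with Heine--Cantor on $\mathcal T$. Your remark about uniformity over $\mathcal Q^m(\lambda_0)$ is therefore unnecessary. The eigenvalue floor $\lambda_0$ plays no role in your argument beyond membership of the anchors, which is automatic in your construction. The paper, by contrast, must implicitly pick its net inside the range to guarantee this.

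Your route is more elementary and gives explicit weights that do not depend on $\tilde{\mathbf R}$. The paper's route buys two things:
\begin{itemize}
\item A quantitative bound on $H$. In the paper, $H$ is controlled by Frobenius covering numbers of $\tilde{\mathbf R}(\mathcal T)\subset\mathcal Q^m(\lambda_0)$ at scale of order $\epsilon/C$. It is therefore bounded uniformly over all continuous $\tilde{\mathbf R}$, however rapidly $\tilde{\mathbf R}$ varies in $t$. Your $H$ depends on the modulus of continuity of $t\mapsto c_G(\cdot\,;\tilde{\mathbf R}(t))$.
\item Reuse in Theorem~\ref{thm:appGCM}. The same Lipschitz bound and range-side partition of unity carry over verbatim there, where $K(t,\cdot)$ is a measure on $\mathcal Q^m(\lambda_0)$. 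A domain-side partition would instead produce anchors $c_K(\cdot\,;t_h)$, which are not themselves finite Gaussian copula mixtures, so a further approximation step would be needed.
\end{itemize}
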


Theorem~\ref{thm:appGC} supports our mixture framework in \eqref{eqn:mixcop} for conditional copula modeling. The reverse statement does not hold: the mapping $t\mapsto C_G(\cdot;\tilde{\mathbf R}(t))$ cannot approximate \eqref{eqn:mixcop} since the latter is not elliptical for any $t$. 

While Theorem~\ref{thm:appGC} illustrates the flexibility of our framework, it is easy to see that the class in \eqref{eqn:mixcop} approximates a much broader family. The next theorem shows that \eqref{eqn:mixcop} can approximate any kernel mixture of Gaussian copulas under mild conditions.
To proceed, let $\mathscr B$ be a $\sigma$-field on $\mathcal Q^m(\lambda_0)$. A Markov kernel $K:\mathcal T\times \mathscr B\rightarrow [0,1]$ is defined so that for any $t\in\mathcal T$, the map $\mathcal A \mapsto K(t,\mathcal A)$ is a probability measure on $\mathscr B$, and for any $\mathcal A\in\mathscr B$, the map $t\mapsto K(t,\mathcal A)$ is measurable. A kernel $K$ is called weak-Feller if and only if $t\mapsto \int_{\mathcal Q^m(\lambda_0)} f(\mathbf R) K(t,d\mathbf R)$ is continuous for every bounded continuous function $f$ \citep[][Chapter 7]{hernandez2003markov}. 
The following theorem
demonstrates the approximation power of our construction in \eqref{eqn:mixcop}.

\begin{theorem}
\label{thm:appGCM}
For a weak-Feller Markov kernel $K:\mathcal T\times \mathscr B\rightarrow [0,1]$,
define the copula density
$$
c_K(\cdot\,;t) = \int_{\mathcal Q^m(\lambda_0)} c_G(\cdot\,;\mathbf R)K(t,d\mathbf R),\quad t\in\mathcal T.
$$
For any $\epsilon>0$, there exist correlation matrices $\{\mathbf R_h\}_{h=1}^H \subset \mathcal Q^m(\lambda_0)$ and continuous weights $\{\pi_h\}_{h=1}^H $ with finite $H$ such that
\begin{align*}
    \sup_{t\in\mathcal T}\bigg\lVert c_K(\cdot \,;t)-\sum_{h=1}^H \pi_h(t)c_G(\cdot\,;\mathbf R_h)\bigg\rVert_{L^1[0,1]^m}\le \epsilon.
\end{align*}
\end{theorem}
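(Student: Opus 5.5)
Theorem~\ref{thm:appGCM} does not say so explicitly, but we take $\mathcal T$ to be compact, as in Theorem~\ref{thm:appGC}. The plan is to discretize the parameter space $\mathcal Q^m(\lambda_0)$ and replace the kernel $K(t,\cdot)$ by the masses it assigns to the cells of a finite partition. These cell masses are generally not continuous in $t$, so they will be smoothed by a partition of unity on $\mathcal T$.

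\textbf{Step 1: discretizing $\mathcal Q^m(\lambda_0)$.} The set $\mathcal Q^m(\lambda_0)$ is a closed and bounded subset of the symmetric matrices, so it is compact. Next we need the map $\mathbf R\mapsto c_G(\cdot\,;\mathbf R)\in L^1[0,1]^m$ to be continuous. Pointwise continuity of the density is clear. Scheffé's lemma then gives $L^1$ continuity, since each $c_G(\cdot\,;\mathbf R)$ is a probability density. Alternatively, one can change variables $u=\Phi(z)$; this turns the $L^1$ distance between copula densities into the $L^1$ distance between $\mathrm N_m(\mathbf 0,\mathbf R)$ densities, which is continuous in $\mathbf R$ on matrices with eigenvalues at least $\lambda_0$.

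By compactness this map is uniformly continuous. Choose $\delta>0$ such that $\lVert\mathbf R-\mathbf R'\rVert<\delta$ implies $\lVert c_G(\cdot\,;\mathbf R)-c_G(\cdot\,;\mathbf R')\rVert_{L^1}<\epsilon/2$. Partition $\mathcal Q^m(\lambda_0)$ into finitely many Borel sets $A_1,\dots,A_H$ of diameter less than $\delta$, and pick $\mathbf R_h\in A_h$. Here we assume $\mathscr B$ contains the Borel sets, i.e.\ it is the Borel $\sigma$-field.

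\textbf{Step 2: the discretized mixture, and where weak-Feller enters.} By Minkowski's integral inequality,
\[
\Bigl\lVert c_K(\cdot\,;t)-\sum_{h}K(t,A_h)c_G(\cdot\,;\mathbf R_h)\Bigr\rVert_{L^1}\le \sum_h\int_{A_h}\lVert c_G(\cdot\,;\mathbf R)-c_G(\cdot\,;\mathbf R_h)\rVert_{L^1}K(t,d\mathbf R)<\epsilon/2
\]
uniformly in $t$. The weights $t\mapsto K(t,A_h)$ need not be continuous, however. We therefore use weak-Feller through the $L^1$-valued continuous function $g(\mathbf R)=c_G(\cdot\,;\mathbf R)$.

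Take a countable dense set $\{\phi_j\}$ in $L^\infty[0,1]^m$ restricted to the unit ball; continuous functions are dense in the weak-$*$ sense. For each $j$, the scalar function $\mathbf R\mapsto\int\phi_j\,g(\mathbf R)$ is bounded and continuous, so $t\mapsto\int\phi_j\,c_K(\cdot\,;t)$ is continuous. This gives weak continuity of $t\mapsto c_K(\cdot\,;t)$.

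To get norm continuity, approximate $g$ uniformly on the compact $\mathcal Q^m(\lambda_0)$ by finite sums $\sum_l f_l(\mathbf R)\psi_l$, where the $f_l$ are scalar continuous functions and the $\psi_l\in L^1$. This is possible because the image of $g$ is compact in $L^1$, so a partition of unity subordinate to a finite net works. Weak-Feller then makes $t\mapsto\sum_l\bigl(\int f_l\,dK(t,\cdot)\bigr)\psi_l$ norm-continuous, and it lies within $\eta$ of $c_K(\cdot\,;t)$ in $L^1$ uniformly in $t$. Hence $t\mapsto c_K(\cdot\,;t)$ is continuous from $\mathcal T$ into $L^1$, and by compactness of $\mathcal T$ it is uniformly continuous.

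\textbf{Step 3: smoothing over $t$.} Choose a finite $\rho$-net $t_1,\dots,t_J$ of $\mathcal T$ such that $\lVert c_K(\cdot\,;t)-c_K(\cdot\,;t_j)\rVert_{L^1}<\epsilon/2$ whenever $|t-t_j|<\rho$. Let $\{\omega_j\}$ be a continuous partition of unity subordinate to the balls $B(t_j,\rho)$. Define
\[
\pi_h(t)=\sum_j\omega_j(t)K(t_j,A_h).
\]
These weights are continuous, nonnegative, and sum to one over $h$. By the triangle inequality and convexity,
\[
\Bigl\lVert c_K(\cdot\,;t)-\sum_h\pi_h(t)c_G(\cdot\,;\mathbf R_h)\Bigr\rVert_{L^1}\le\sum_j\omega_j(t)\Bigl(\lVert c_K(\cdot\,;t)-c_K(\cdot\,;t_j)\rVert+\epsilon/2\Bigr)<\epsilon,
\]
where the second term comes from applying Step 2 at each $t_j$. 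Note that Step 2 is applied only at the finitely many points $t_j$, so the discontinuity of $K(t,A_h)$ in $t$ never matters.

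The main obstacle is Step 2's upgrade of weak-Feller, which is a statement about scalar test functions, to norm continuity of $t\mapsto c_K(\cdot\,;t)$ in $L^1$. The finite-rank uniform approximation of $g$ on the compact set $\mathcal Q^m(\lambda_0)$ handles this.
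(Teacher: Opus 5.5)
Your argument is correct as far as it goes, but it proves the theorem only under the compactness of $\mathcal T$ that you impose at the start. Theorem~\ref{thm:appGCM} makes no such assumption. Step~3 genuinely needs it, because it uses a finite $\rho$-net of $\mathcal T$ and uniform continuity of $t\mapsto c_K(\cdot\,;t)$ in $L^1$. Neither is available for, e.g., $\mathcal T=\mathbb R$. You could patch Step~3 with a locally finite partition of unity on $\mathcal T$, subordinate to neighborhoods on which $c_K$ moves by less than $\epsilon/2$, but there is a far simpler fix.

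The fix is already in your Step~2. In the finite-rank approximation, take $\psi_l=c_G(\cdot\,;\mathbf R_l)$, where $\{\mathbf R_l\}$ is a finite net of $\mathcal Q^m(\lambda_0)$ and $\{f_l\}$ is a continuous partition of unity on $\mathcal Q^m(\lambda_0)$ subordinate to small balls around the $\mathbf R_l$. By your uniform continuity this gives $\lVert c_G(\cdot\,;\mathbf R)-\sum_l f_l(\mathbf R)c_G(\cdot\,;\mathbf R_l)\rVert_{L^1}\le\epsilon$ for every $\mathbf R$. Now set $\pi_l(t)=\int f_l(\mathbf R)\,K(t,d\mathbf R)$. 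Weak-Feller makes each $\pi_l$ continuous, the $\pi_l$ sum to one, and Minkowski's inequality gives the bound $\epsilon$ at every $t\in\mathcal T$ at once, with no condition on $\mathcal T$.

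This is exactly the paper's proof. The paper uses bump functions $\psi_h^\ast$ on a $\delta$-net and the Lipschitz estimate \eqref{eqn:GCbo}, where you use Scheff\'e plus uniform continuity; your version is slightly more elementary and equally sufficient. The Borel partition, the discontinuous weights $K(t,A_h)$, the norm continuity of $c_K$, and the smoothing over $t$ are all a detour. Integrating \emph{continuous} rather than indicator functions of $\mathbf R$ against $K(t,\cdot)$ is precisely what lets weak-Feller deliver continuous weights directly.
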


The weak-Feller property is a mild continuity condition on the kernel 
$K$. It requires $K(t,\cdot)$ to vary continuously with $t$ in the weak sense and is satisfied by many standard constructions.
In particular, if $K(t,\cdot)=\delta_{\tilde{\mathbf R}(t)}$ for some $\tilde{\mathbf R}:\mathcal T\rightarrow \mathcal Q^m(\lambda_0)$, where $\delta_{\mathbf R_0}$ is the Dirac measure at $\mathbf R_0\in\mathcal Q^m(\lambda_0)$, then the copula density reduces to $c_K(\cdot;t)=c_G(\cdot;\tilde{\mathbf R}(t))$. If $\tilde{\mathbf R}$ is continuous, this kernel $K$ is weak-Feller, and the result of Theorem~\ref{thm:appGC} follows from Theorem~\ref{thm:appGCM}. The proofs of Theorems~\ref{thm:appGC} and~\ref{thm:appGCM} are provided in Section~\ref{app:sec2} of the supplementary material.

\section{Prior Specification}\label{sec3}

\subsection{Prior for the Marginal Distributions}
\label{sec:prior1}

The marginal distribution $F_k$ of the response vector $\mathbf{y}_i \in \mathbb{R}^m$ is fully characterized by the covariate-varying linear predictor $\mathbf{x}_i^\top \boldsymbol{\beta}_k(t_i)$ and the additional parameters $\boldsymbol{\xi}_k$.
We first discuss the prior specification for the varying coefficients  $\boldsymbol\beta_k$.
There are numerous Bayesian approaches to modeling univariate nonparametric functions, such as Gaussian process regression and Bayesian kernel regression. Among these possibilities, we use spline basis expansion with adaptive knot selection to avoid bias caused by relying on a specific knot configuration and to achieve spatial adaptation to local features \citep{smith1996basis,denison1998automatic,dimatteo2001bayesian}.
Specifically, we employ the NCS basis functions proposed by \citet{kang2024gam}, which provide a convenient framework in which knot selection is equivalent to basis selection, thereby enabling fast computation. Let $\{\tau^L,\tau^U\}$ denote the boundary knots, and let $\{\tau_1,\dots,\tau_L\}$ be the set of $L$ interior knots such that $\tau^L<\tau_1<\cdots<\tau_L<\tau^U$.
We set $\tau^L=\min_{1\le i\le n}t_i$ and $\tau^U=\max_{1\le i\le n}t_i$. The interior knots are chosen as the quantiles of the unique design points $t_i$.
The NCS basis functions defined by \citet{kang2024gam} are given by
\begin{align}
\begin{split}
N(u;\tau^L,\tau^U,\tau_{\ell})
    & \equiv \frac{(u-\tau_{\ell})^3_+-(u-\tau^U)^3_+}{\tau^U-\tau_{\ell}}- \frac{(u-\tau^L)^3_+-(u-\tau^U)^3_+}{\tau^U-\tau^L},\quad \ell=1,\ldots,L.
\end{split}
\label{eq:basisdef}
\end{align}
Although they appear different from conventional NCS basis functions, the basis functions in \eqref{eq:basisdef}, combined with the constant and linear terms, span the same NCS space while being more efficient for knot selection \citep{kang2024gam}. This is because each basis term depends only on the corresponding knot $\tau_\ell$.
To enhance stability, we define the centered basis functions as
\begin{align}
    b_{\text{lin}}(u)&=u-\frac{1}{n}\sum_{i=1}^n t_i,\\
    b_{\ell}(u) &= N(u,\tau^L,\tau^U,\tau_{\ell})-\frac{1}{n}\sum_{i=1}^n N(t_i,\tau^L,\tau^U,\tau_{\ell}),\quad \ell=1,\ldots,L.
    \label{eqn:basisset}
\end{align} 
Each varying coefficient can be expressed using this set of basis terms along with an intercept.

As discussed, we employ adaptive knot selection to prevent overfitting that may arise from using all basis terms. For the varying coefficient $\beta_{jk}$, we introduce binary latent variables
$\tilde{\boldsymbol{\gamma}}_{jk}= (\tilde \gamma_{jk1},\ldots,\tilde \gamma_{jkL})^\top$, where
$\tilde{\gamma}_{jk\ell}=1$ indicates that the basis term $b_{\ell}$ is included for $\beta_{jk}$, and $\tilde{\gamma}_{jk\ell}=0$ indicates that it is excluded. This approach is equivalent to selecting important knots $\tau_\ell$ due to the structure of the basis term in \eqref{eq:basisdef}.
The varying coefficient $\beta_{jk}$ is expressed as
\begin{align*}
     \beta_{jk}(t)  =\tilde\alpha_{jk0} +  \tilde{\boldsymbol \alpha}_{jk \backslash 0,\tilde{\boldsymbol{\gamma}}_{jk}}^\top \mathbf b_{jk,\tilde{\boldsymbol{\gamma}}_{jk}}(t),
\end{align*}
where $\mathbf b_{jk,\tilde{\boldsymbol{\gamma}}_{jk}}(t) = (b_{\text{lin}}(t),\{ b_\ell(t)\}_{\ell :\tilde{\gamma}_{jk\ell}=1})^\top\in\mathbb R^{1+|\tilde{\boldsymbol{\gamma}}_{jk}|}$ is the vector of basis terms chosen by $\tilde{\boldsymbol{\gamma}}_{jk}$, $\tilde\alpha_{jk0}\in\mathbb R$ and $ \tilde{\boldsymbol \alpha}_{jk \backslash 0,\tilde{\boldsymbol{\gamma}}_{jk}}\in\mathbb R^{1+|\tilde{\boldsymbol{\gamma}}_{jk}|}$ are the coefficients, and $|\tilde{\boldsymbol{\gamma}}_{jk}| = \sum_{\ell=1}^L \tilde{\gamma}_{jk\ell}$ denotes the number of chosen knots.
To assign a reasonable prior on the coefficients, define the matrix 
$\tilde{\mathbf{W}}_{jk,\tilde{\boldsymbol{\gamma}}_{jk}}  = [ 
    x_{1j} \mathbf b_{jk,\tilde{\boldsymbol{\gamma}}_{jk}}(t_1),\dots, x_{nj} \mathbf b_{jk,\tilde{\boldsymbol{\gamma}}_{jk}}(t_n)
]^\top\in\mathbb R^{n\times(1+|\tilde{\boldsymbol{\gamma}}_{jk}|)}$, which represent the design matrix for the varying coefficient $\beta_{jk}$ excluding the intercept term.
For each $j$ and $k$, we then assign the following prior distributions
motivated by Zellner's $g$-prior:
\begin{align}
\begin{split}
 \tilde \alpha_{jk0} & \sim {\text N}(0, \nu^2),\\
    \tilde{\boldsymbol \alpha}_{jk \backslash 0,\tilde{\boldsymbol{\gamma}}_{jk}}\mid \tilde{\boldsymbol{\gamma}}_{jk},\tilde g_{jk} & \sim {\text N}_{1+|\tilde{\boldsymbol{\gamma}}_{jk}|}\!\left(\mathbf{0}, \tilde g_{jk}(\tilde{\mathbf{W}}_{jk,\tilde{\boldsymbol{\gamma}}_{jk}}^{\top} \tilde{\mathbf{W}}_{jk,{\boldsymbol{\tilde\gamma}_{jk}}})^{-1}\right), \\
    \tilde g_{jk} &\sim \text{IG}({1/2}, {n/2}),
\end{split}
    \label{eqn:prior1}
\end{align}
where $\nu>0$ is a predetermined constant. We take $\nu$ to be sufficiently large so that the intercept is treated separately and has minimal influence on knot selection.
Similar to the $g$-prior, the prior specification in \eqref{eqn:prior1} has desirable invariance properties.
If $\nu\rightarrow\infty$, the induced prior for the linear predictor
$\mathbf x_i^\top \boldsymbol{\beta}_k(t_i)$ is invariant to the scale of $x_{ij}$
and to  nonsingular basis transformations of the form
$(1,\mathbf b_{jk,\tilde{\boldsymbol{\gamma}}_{jk}}(t))\mapsto
(a,\mathbf c+\mathbf M \mathbf b_{jk,\tilde{\boldsymbol{\gamma}}_{jk}}(t))$
for $a\ne0$ and $\mathbf M$ nonsingular, so that the choice of basis is irrelevant
provided that the constant term is treated separately.
The inverse gamma prior for $\tilde g_{jk}$ is motivated by the Zellner-Siow Cauchy prior \citep{zellner1980posterior}.
For the binary latent variables $\tilde{\boldsymbol \gamma}_{jk}$, we first assign a truncated geometric prior on its effective dimension $|\tilde{\boldsymbol \gamma}_{jk}|$ and then a uniform prior on $\tilde{\boldsymbol \gamma}_{jk}$ conditional on $|\tilde{\boldsymbol \gamma}_{jk}|$. Specifically, for each $j$ and $k$,
\begin{align*}
p(\tilde{\boldsymbol{\gamma}}_{jk}) = q_{\varpi_{jk}}(|\tilde{\boldsymbol{\gamma}}_{jk}|) {L \choose |\tilde{\boldsymbol{\gamma}}_{jk}| }^{-1}, \quad |\tilde{\boldsymbol{\gamma}}_{jk}|=0,1,\dots, L,
\end{align*}
where $\varpi_{jk}\in(0,1)$ denotes the success probability associated with the corresponding $j$ and $k$, and $q_a$ denotes the geometric density with success probability $a\in(0,1)$, truncated to $\{0,1,\dots,L\}$.

The prior specification for the additional parameters ${\boldsymbol{\xi}}_k$ depends on the marginal distribution. Specifically, $\xi_k$ denotes the variance for a Gaussian marginal, for which we assign the conventional $\textrm{IG}(\varepsilon,\varepsilon)$ prior with small $\varepsilon>0$. For negative binomial and gamma marginals, $\xi_k$ corresponds to the size parameter (or the number of successes) and the shape parameter, respectively. Given that these parameters have positive support, we assign normal priors with zero mean and large variance to $\log\xi_k$. For the threshold parameters ${\boldsymbol\xi}_k=(\xi_{k1},\dots,\xi_{k,B_k-1})^\top$ in an ordinal probit model with $B_k$ categories, we define $\xi^\ast_{kb}=\log(\xi_{k,b+1}-\xi_{kb})$ for $b=1,\dots,B_k-2$, while fixing $\xi_{k1}=0$, and assign normal priors with zero mean and large variance to $\xi^\ast_{kb}$.

\subsection{Prior for the Dependence Structure}
\label{sec:prior2}

To represent the underlying dependence structure, we specify the prior distribution for the copula mixture with varying weights in \eqref{eqn:mixcop}. We employ the PSBP proposed by \citep{rodriguez2011nonparametric} with a slight adaptation incorporating the knot selection procedure described in Section~\ref{sec:prior1}.
In this prior construction, the weights of the copula mixtures are parameterized through the probit transformation of Gaussian prior functions. Specifically, the covariate-dependent weights $\pi_h$ in \eqref{eqn:mixcop} are constructed as
\begin{align} \label{eq:probitdep}
    \pi_h(t) = \Phi( f_h(t))\prod_{\eta<h}[1-\Phi( f_{\eta}(t))],\quad h=1,\dots,\infty,
\end{align}
where $f_h:\mathbb R\rightarrow\mathbb R$ are univariate functions following Gaussian processes. This stochastic partitioning ensures that $\sum_{h=1}^\infty \pi_h(t)=1$ for any $t\in\mathbb R$. 
For practical reasons the infinite mixture model is often approximated by a finite truncation with $H$ mixture components set to a moderately large value such as $H=30$. To ensure the sum is unity, the last mixture index must satisfy $\Phi(f_H(t))=1$, and $f_h$ should be specified for the remaining $h=1,\dots,H-1$. 
 The strength of the PSBP lies in its latent variable representation for \eqref{eq:probitdep}, which enables data augmentation for an efficient updating scheme. Let $s_i$ denote the copula mixture component assigned to the $i$th observation. Then, for latent variables $z_{i\eta}^\ast \sim \mathrm{N}(f_\eta(t_i),1)$, $\eta=1,\dots,H-1$, the construction in \eqref{eq:probitdep} implies that the assignment $s_i=h$ for $h< H$ is stochastically equivalent to $z_{i h}^\ast \ge 0$ and $z_{i\eta}^\ast < 0$ for all $\eta<h$. For $s_i=H$, we only need $z_{i\eta}^\ast < 0$ for all $\eta<H$ because $\Phi(f_H(t))=1$ by construction.

Using the basis functions in \eqref{eqn:basisset} and the latent variables
$\boldsymbol{\gamma}_h^\ast = (\gamma_{h1}^\ast,\dots,\gamma_{hL}^\ast)^\top$,
where each $\gamma_{h\ell}^\ast$ is set to 1 if the corresponding knot is included and 0 otherwise,
we parameterize the functions $f_h$ as
\begin{align*}
     f_h(t)  =\alpha_{h0}^\ast +  \boldsymbol{\alpha}_{h\backslash 0,\boldsymbol{\gamma}_h^*}^{\ast\top} \mathbf b_{h,{\boldsymbol{\gamma}}_{h}^\ast}(t),
\end{align*}
where $\mathbf b_{h,{\boldsymbol{\gamma}}_{h}^\ast}(t)=(b_{\text{lin}}(t),\{ b_\ell(t)\}_{\ell :{\gamma}_{h\ell}^\ast=1})^\top\in\mathbb R^{1+|{\boldsymbol{\gamma}}_{h}^\ast|}$  is the vector of basis terms chosen by ${\boldsymbol{\gamma}}_{h}^\ast$,  $\alpha_{h0}^\ast\in\mathbb R$ and $ \boldsymbol{\alpha}_{h\backslash 0,\boldsymbol{\gamma}_h^*}^{\ast\top}\in\mathbb R^{1+|{\boldsymbol{\gamma}}_{h}^\ast|}$ are the coefficients, and $|{\boldsymbol{\gamma}}_{h}^\ast| = \sum_{\ell=1}^L {\gamma}_{h\ell}^\ast$ denotes the number of chosen knots.
Similar to Section~\ref{sec:prior1}, we define the matrix
$\mathbf{W}_{h,{\boldsymbol{\gamma}}_{h}^\ast}^{\ast} = [ 
    \mathbf b_{h,{\boldsymbol{\gamma}}_{h}^\ast}(t_1),\dots, \mathbf b_{h,{\boldsymbol{\gamma}}_{h}^\ast}(t_n)
]^\top\in\mathbb R^{n\times (1+|{\boldsymbol{\gamma}}_{h}^\ast|)}$, and elicit the prior distributions as
\begin{align}
\begin{split}
    \alpha^*_{h0} &\sim \text{N}(0,1), \\ 
    \boldsymbol{\alpha}^*_{h\backslash 0,\boldsymbol{\gamma}_h^*}\mid \boldsymbol{\gamma}^*_h, g_h^* &\sim \text{N}_{1+|\boldsymbol{\gamma}^*_h|}(\mathbf{0},g^*_h(\mathbf{W}^{*\top}_{h,\boldsymbol{\gamma}^*_h}\mathbf{W}^*_{h,\boldsymbol{\gamma}^*_h})^{-1}), \\
    g^*_h &\sim \text{IG}(1/2, n/ 2), \\
     p(\boldsymbol{\gamma}^*_{h}) &\propto q_{\varpi_h}(|\boldsymbol{\gamma}^*_h|) {L \choose |\boldsymbol{\gamma}^*_h| }^{-1},\quad |\boldsymbol{\gamma}^*_h|=0,1,\dots,L,
\end{split}
\label{eqn:prior2}
\end{align}
where $\varpi_h\in(0,1)$.
The Gaussian priors in \eqref{eqn:prior2} are conjugate to the latent variables $z_{i\eta}^\ast$, enabling efficient posterior computation.
The key difference from \eqref{eqn:prior1} is that the prior for $\alpha_{h0}^\ast$ is set to the standard normal distribution with unit variance. This choice ensures that the baseline case with no predictors aligns with the standard Dirichlet process prior with unit concentration \citep{chung2009nonparametric}.

It remains to specify the prior distributions for the correlation matrices $\mathbf R_h$, $h=1,\dots, H$. Following \citet{alexopoulos2021bvsgcr}, we adopt the data augmentation strategy proposed by \citet{talhouk2012corr}, which facilitates efficient posterior sampling as discussed in Section~\ref{sec3}. To this end, we parameterize 
$\boldsymbol{\Sigma}_h =\mathbf{D}_h\mathbf{R}_h\mathbf{D}_h$, where $\mathbf{D}_h=\text{diag}(d_{h1},\dots,d_{hm})$ is an expansion parameter with $d_{hk}>0$. We then place an inverse Wishart prior on $\boldsymbol{\Sigma}_h$, which provides a conditionally conjugate distribution; that is, $\boldsymbol{\Sigma}_h\sim \text{IW}_\kappa(\mathbf I_m)$, where $\text{IW}_\kappa(\boldsymbol\Psi)$ denotes an inverse wishart distribution with $\kappa\ge m$ degrees of freedom and a $m\times m$ scale matrix $\boldsymbol\Psi$.
We assign a prior on $\mathbf R_h$ as
$$
p(\mathbf R_h)\propto |\mathbf R_h|^{-(\kappa+m+1)/2}\!\left(\prod_{k=1}^m (\mathbf R_h^{-1})_{k,k}\right)^{-\kappa/2},
$$ where $(\mathbf R_h^{-1})_{k,k}$ is the $k$th diagonal entry of $\mathbf R_h^{-1}$. This prior is motivated by the fact that it leads to the marginally uniform prior when $\kappa=m+1$ \citep{barnard2000corr}.
The corresponding induced prior for $\mathbf{D}_h$ is given by
$$
d_{hk}^2 \mid \mathbf R_h \sim \text{IG}(\kappa/2,(\mathbf R_h^{-1})_{k,k}/2),\quad k=1,\dots, m.
$$
 
\section{Posterior Sampling via Markov Chain Monte Carlo}\label{sec4}

In this section, we describe the MCMC procedure for sampling from the joint posterior distribution.
For the $i$th response vector $\mathbf y_i\in(y_{i1},\ldots,y_{im})^\top\in\mathbb{R}^m$, we let $\mathbf{y}=\{\mathbf{y}_i\}_{i=1}^n$ denote the collection of all observations. 
For each $k=1,\dots,m$, we define $\tilde{\boldsymbol{\gamma}}_k=(\tilde{\boldsymbol{\gamma}}_{1k},\ldots,\tilde{\boldsymbol{\gamma}}_{pk})^\top \in \mathbb R^{pL}$, $\tilde{\boldsymbol{\alpha}}_{k,\tilde{\boldsymbol{\gamma}}_k} = (\tilde \alpha_{1k0},\tilde{\boldsymbol{\alpha}}_{1k\backslash 0,\tilde{\boldsymbol{\gamma}}_{1k}}^\top,\ldots, \tilde{\alpha}_{pk0},\tilde{\boldsymbol{\alpha}}^\top_{pk\backslash0,\tilde{\boldsymbol{\gamma}}_{pk}})^\top \in \mathbb R^{2p + \sum_j |\tilde{\boldsymbol{\gamma}}_{jk}|}$, and  $\tilde{\mathbf{g}}_k = (\tilde g_{1k},\ldots, \tilde g_{pk})^\top \in \mathbb R^p$, and their collections are denoted by $\tilde{\boldsymbol{\gamma}}=\{\tilde{\boldsymbol{\gamma}}_k\}_{k=1}^m$, $\tilde{\boldsymbol{\alpha}}_{\tilde{\boldsymbol{\gamma}}}=\{\tilde{\boldsymbol{\alpha}}_{k,\tilde{\boldsymbol{\gamma}}_k}\}_{k=1}^{m}$, and $\tilde{\mathbf{g}} = \{\tilde{\mathbf{g}}_k\}_{k=1}^m$, respectively.
The collection of additional parameters is $\boldsymbol{\xi}=\{\boldsymbol{\xi}_k\}_{k=1}^m$, and the collection of Gaussian copula latent variables is $\tilde{\mathbf{z}} = \{\tilde{\mathbf{z}}_i \}_{i=1}^n$. For the probit stick-breaking process, we define
$\boldsymbol{\gamma}^\ast=\{\boldsymbol{\gamma}_h^\ast\}_{h=1}^{H-1}$,
$\boldsymbol{\alpha}^*_{\boldsymbol{\gamma}^*}=\{\boldsymbol{\alpha}_{h,\boldsymbol{\gamma}_h^\ast}^*\}_{h=1}^{H-1}$ with $\boldsymbol\alpha_{h,\boldsymbol\gamma_h^\ast}^\ast = (\alpha_{h0}^\ast,\boldsymbol\alpha_{h\backslash0,\boldsymbol\gamma_h^\ast}^{\ast\top})^\top\in\mathbb R^{2+|\boldsymbol\gamma_h^\ast|}$, $\mathbf{g}^*=(g_1^*,\ldots,g_{H-1}^*)^\top\in \mathbb R^{H-1}$, and $\mathbf{z}^* = \{\mathbf{z}_i^*\}^n_{i=1}$ with $\mathbf{z}_i^*=(z_{i1}^*,\ldots,z_{is_i}^*)^\top\in \mathbb R^{s_i}$. 
In addition, \(\mathbf{R} = \{\mathbf{R}_h\}_{h=1}^H\), \(\mathbf{D} = \{\mathbf{D}_h\}_{h=1}^H\), and $\mathbf{s}=(s_1,\ldots,s_n)^\top$. 
The target posterior distribution is $p(\tilde{\boldsymbol{\alpha}}_{\tilde{\boldsymbol{\gamma}}},\tilde{\boldsymbol{\gamma}},\tilde{\mathbf{g}}, \boldsymbol{{\xi}}, \tilde{\mathbf{z}}, \boldsymbol{\alpha}^*_{\boldsymbol{\gamma}^*},\boldsymbol{\gamma}^*,\mathbf{g}^*,\mathbf{s},\mathbf{z}^*, \mathbf{D}, \mathbf{R}\mid \mathbf{y})$. Each step of the sampling procedure is detailed below.
Throughout, the expression $A\mid\mathrm{rest}$ denotes conditioning on all model components except for the argument represented by $A$.

\paragraph{Step 1: Update of $(\tilde{\boldsymbol{\alpha}}_{\tilde{\boldsymbol{\gamma}}},\tilde{\boldsymbol{\gamma}})$.} We employ a strategy that depends on the type of response variable. For Gaussian, binary, and ordinal responses, we use a Gaussian latent variable representation, which enables the construction of an efficient sampler. For response types that do not admit such a representation, we adopt a Metropolis–Hastings algorithm with an appropriately designed proposal distribution.
In both cases, we introduce a latent variable ${z}_{ik}$ that reparameterizes the Gaussian copula latent variable $\tilde z_{ik}$ as
\begin{align}
\label{eq:reparamet}
    z_{ik} = \delta_{ik}\tilde z_{ik}+ \mathbf{x}_i^\top {\boldsymbol{\beta}}_k (t_i),\quad i=1,\dots,n,\quad k=1,\dots,m,
\end{align}
where $\delta_{ik}$ is a predetermined scaling factor.
\begin{itemize}
    \item {\it Gaussian, binary, and ordinal responses.} For a Gaussian response variable, $z_{ik}$ equals the observation $y_{ik}$ with $\delta_{ik}=\sqrt{\xi_k}$. For binary and ordinal cases, setting $\delta_{ik}=1$ makes $z_{ik}$ a Gaussian latent variable for data augmentation \citep{albert1993binary}. In these cases, we exploit the Gaussian complete-data likelihood, which provides conditional conjugacy for $\tilde{\boldsymbol{\alpha}}_{\tilde{\boldsymbol{\gamma}}}$. The target conditional posterior density of $(\tilde{\boldsymbol{\alpha}}_{k,\tilde{\boldsymbol{\gamma}}_k},\tilde{\boldsymbol{\gamma}}_k)$ is 
\begin{align}
\begin{split}
&p(\tilde{\boldsymbol{\alpha}}_{k,\tilde{\boldsymbol{\gamma}}_k},\tilde{\boldsymbol{\gamma}}_k \mid   \mathrm{rest} )  \propto p(\mathbf{z}_{\cdot k}\mid \tilde{\boldsymbol{\alpha}}_{k,\tilde{\boldsymbol{\gamma}}_k},\tilde{\boldsymbol{\gamma}}_k, \boldsymbol{\xi}_k, \tilde{\mathbf{z}}_{\cdot \backslash k},\mathbf s, \mathbf R )p(\tilde{\boldsymbol{\alpha}}_{k,\tilde{\boldsymbol{\gamma}}_k}\mid \tilde{\boldsymbol{\gamma}}_k,\tilde{\mathbf{g}}_k)p(\tilde{\boldsymbol{\gamma}}_k).
\end{split} \label{eq:gibbs_samp}
\end{align}
This shows that the conditional posterior of $\tilde{\boldsymbol{\alpha}}_{k,\tilde{\boldsymbol{\gamma}}_k}$ given  $\tilde{\boldsymbol{\gamma}}_k$ is a Gaussian distribution. Accordingly, for each $k=1,\dots,m$, we sample each element of $\tilde{\boldsymbol{\gamma}}_k$ via collapsed Gibbs sampling after marginalizing out $\tilde{\boldsymbol{\alpha}}_{k,\tilde{\boldsymbol{\gamma}}_k}$, and then draw $\tilde{\boldsymbol{\alpha}}_{k,\tilde{\boldsymbol{\gamma}}_k}$ conditional on the updated $\tilde{\boldsymbol{\gamma}}_k$. 
Specifically, $\tilde{\boldsymbol\gamma}_{k}$ is updated by iteratively sampling $\tilde{\gamma}_{k(\ell)}$, the $\ell$th entry of $\tilde{\boldsymbol{\gamma}}_k$, with probability
\begin{align*}
&\Pr\!\left(\tilde{\gamma}_{k(\ell)}=1\mid \mathrm{rest} \setminus \tilde{\boldsymbol{\alpha}}_{k,\tilde{\boldsymbol{\gamma}}_k}  \right) \\
    &\quad=  \!\left(1+\frac{p(\mathbf{z}_{\cdot k}\mid  {\tilde{\gamma}_{k(\ell)}}=0, \tilde{\boldsymbol{ \gamma}}_{k\backslash (\ell)},\tilde{\mathbf{g}}_k, \boldsymbol{{\xi}}_k,\tilde{\mathbf{z}}_{\cdot \backslash k}, \mathbf s,\mathbf R)p(\tilde{\gamma}_{k(\ell)}=0\mid \tilde{\boldsymbol{\gamma}}_{k\backslash (\ell )})}{p(\mathbf{z}_{\cdot k}\mid {\tilde{\gamma}_{k(\ell)}}=1, \tilde{\boldsymbol{ \gamma}}_{k\backslash (\ell)},\tilde{\mathbf{g}}_k, \boldsymbol{{\xi}}_k,\tilde{\mathbf{z}}_{\cdot \backslash k}, \mathbf s,\mathbf R)p(\tilde{\gamma}_{k(\ell)}=1\mid\tilde{\boldsymbol{\gamma}}_{k\backslash (\ell )}) }\right)^{-1},
\end{align*}
where 
$
p(\mathbf{z}_{\cdot k}\mid  \tilde{\boldsymbol\gamma}_{k},\tilde{\mathbf{g}}_k, \boldsymbol{{\xi}}_k,\tilde{\mathbf{z}}_{\cdot \backslash k}, \mathbf s,\mathbf R)=\int p(\mathbf{z}_{\cdot k}\mid\tilde{\boldsymbol{\alpha}}_{k,\tilde{\boldsymbol{\gamma}}_k},\tilde{\boldsymbol{\gamma}}_k, \boldsymbol{\xi}_k, \tilde{\mathbf{z}}_{\cdot \backslash k},\mathbf s, \mathbf R )p(\tilde{\boldsymbol{\alpha}}_{k,\tilde{\boldsymbol{\gamma}}_k}\mid\tilde{\boldsymbol{\gamma}}_k,\tilde{\mathbf{g}}_k)d \tilde{\boldsymbol{\alpha}}_{k,\tilde{\boldsymbol{\gamma}}_k}
$,
whose explicit form is provided in \eqref{eq:lhd_gamma} of the supplementary material. Conditional on the updated $\tilde{\boldsymbol\gamma}_{k}$, we draw $\tilde{\boldsymbol{\alpha}}_{k,\tilde{\boldsymbol{\gamma}}_k}$ from the Gaussian distribution:
\begin{align}
\tilde{\boldsymbol{\alpha}}_{k,\tilde{\boldsymbol{\gamma}}_k}\mid\mathrm{rest}\sim\text{N}_{2p + \sum_j |\tilde{\boldsymbol{\gamma}}_{jk}|}(\boldsymbol{\mu}_{\tilde{\boldsymbol{\alpha}}_k},\mathbf{V}_{\tilde{\boldsymbol{\alpha}}_k}),
    \label{eqn:gaussalpha}
\end{align}
with parameters $
\mathbf{V}_{\tilde{\boldsymbol{\alpha}}_k}= (\overline{\mathbf{W}}_{k,\tilde{\boldsymbol{\gamma}}_k}^\top \boldsymbol{\Psi}_k\overline{\mathbf{W}}_{k,\tilde{\boldsymbol{\gamma}}_k}  + \mathbf{V}_{k,0}^{-1})^{-1}$ and $
\boldsymbol{\mu}_{\tilde{\boldsymbol{\alpha}}_k}= \mathbf{V}_{\tilde{\boldsymbol{\alpha}}_k}\overline{\mathbf{W}}_{k,\tilde{\boldsymbol{\gamma}}_k}^\top({\boldsymbol{\zeta}}_k+ \boldsymbol{\Psi}_k \mathbf{z}_{\cdot k})$,
where 
\begin{align*}
\overline{\mathbf{W}}_{k,\tilde{\boldsymbol{\gamma}}_k} &= [\mathbf x_{\cdot 1},\tilde{\mathbf{W}}_{1k,\tilde{\boldsymbol{\gamma}}_{1k}},\ldots,\mathbf x_{\cdot p},\tilde{\mathbf{W}}_{pk,\tilde{\boldsymbol{\gamma}}_{pk}} ]\in\mathbb R^{n\times (2p + \sum_j |\tilde{\boldsymbol{\gamma}}_{jk}|)},\\
\boldsymbol{\Psi}_k &= \text{diag}(\delta_{1k}^{-2}(\mathbf{R}^{-1}_{s_1})_{k,k}, \ldots, \delta_{nk}^{-2}(\mathbf{R}^{-1}_{s_n})_{k,k})\in\mathbb R^{n\times n},\\
{\boldsymbol{\zeta}}_k &= (\delta_{1k}^{-1}(\mathbf{R}_{s_1}^{-1})_{k,\backslash k} \tilde{\mathbf z}_{1,\backslash  k}, \ldots, \delta_{nk}^{-1}(\mathbf{R}_{s_n}^{-1})_{k,\backslash k} \tilde{\mathbf z}_{n,\backslash k})^\top\in\mathbb R^{n},
\end{align*} 
$\mathbf x_{\cdot j}=(x_{1j},\ldots,x_{nj})^\top\in\mathbb R^n$,
 $(\mathbf R_{s_i}^{-1})_{k,\backslash k}$ denotes the $k$th row of $\mathbf R_{s_i}^{-1}$ with its $k$th entry removed,
$\tilde{\mathbf z}_{i,\backslash k}$ denotes $\tilde{\mathbf z}_i$ without its $k$th component,
and $\mathbf V_{k,0}$ denotes the prior covariance of $\tilde{\boldsymbol{\alpha}}_{k,\tilde{\boldsymbol{\gamma}}_k}$ induced from \eqref{eqn:prior1}.
The derivation is provided in Section~\ref{app:sec1} of the supplementary material.

\item {\it Binomial, negative binomial, and gamma responses.} As conditional conjugacy does not hold and marginalization is intractable, we employ a Metropolis-Hastings step to jointly update $(\tilde{\boldsymbol{\alpha}}_{k,\boldsymbol{\tilde{\gamma}}_k}, \tilde{\boldsymbol{\gamma}}_k, \tilde{\mathbf{z}}_{\cdot k})$. (Although the primary goal of this step is to update $(\tilde{\boldsymbol{\alpha}}_{k,\boldsymbol{\tilde{\gamma}}_k}, \tilde{\boldsymbol{\gamma}}_k)$, we update $\tilde{\mathbf{z}}_{\cdot k}$ jointly because our proposal distribution is constructed based on $\tilde{\mathbf{z}}_{\cdot k}$, which in turn depends on $\tilde{\boldsymbol{\alpha}}_{k,\boldsymbol{\tilde{\gamma}}_k}$.) The target conditional posterior density is
\begin{align} 
\begin{split}
p(\tilde{\boldsymbol{\alpha}}_{k,\tilde{\boldsymbol{\gamma}}_k},\tilde{\boldsymbol{\gamma}}_k, \tilde{\mathbf{z}}_{\cdot k}\mid \mathrm{rest}) 
& \propto  p(\mathbf{y}_{\cdot k}\mid \tilde{\boldsymbol{\alpha}}_{k,\tilde{\boldsymbol{\gamma}}_k},\tilde{\boldsymbol{\gamma}}_k, \boldsymbol{\xi}_k, \tilde{\mathbf{z}}_{\cdot k})p(\tilde{\mathbf{z}}_{\cdot k} \mid\tilde{\mathbf{z}}_{\cdot \backslash k},\mathbf s, \mathbf R )p(\tilde{\boldsymbol{\alpha}}_{k,\tilde{\boldsymbol{\gamma}}_k}\mid\tilde{\boldsymbol{\gamma}}_k,\tilde{\mathbf{g}}_k)p(\tilde{\boldsymbol{\gamma}}_k),
\label{eq:MH_target}
\end{split}
\end{align} 
 where $p(\mathbf{y}_{\cdot k}\mid \tilde{\boldsymbol{\alpha}}_{k,\tilde{\boldsymbol{\gamma}}_k},\tilde{\boldsymbol{\gamma}}_k, \boldsymbol{\xi}_k, \tilde{\mathbf{z}}_{\cdot k})$ is omitted for continuous responses because of the one-to-one correspondence  between  $\mathbf{y}_{\cdot k}$ and $\tilde{\mathbf{z}}_{\cdot k}$. 
A proposal $\tilde{\boldsymbol{\gamma}}_k^\star$ of $\tilde{\boldsymbol{\gamma}}_k$ is generated by randomly selecting one of its entries and proposing a binary value with equal probability, which yields a symmetric proposal density $q(\tilde{\boldsymbol{\gamma}}_k^\star\mid\tilde{\boldsymbol{\gamma}}_k)$.
Given the proposed $\tilde{\boldsymbol{\gamma}}_k^\star$, the proposal distribution of the 
coefficients is constructed as a normal distribution by exploiting the latent variable 
representation in \eqref{eq:reparamet},  namely,
\begin{align}
q(\tilde{\boldsymbol{\alpha}}_{k,\tilde{\boldsymbol{\gamma}}^\star}^\star\mid\tilde{\boldsymbol{\gamma}}_k^\star,\boldsymbol{\xi}_k,\tilde{\mathbf{g}}_k, \mathbf{z}_{\cdot k}, \tilde{\mathbf{z}}_{\cdot \backslash k},\mathbf s,\mathbf R) & \propto p(\mathbf{z}_{\cdot k}\mid \tilde{\boldsymbol{\alpha}}_{k,\tilde{\boldsymbol{\gamma}}_k^\star}^\star, \tilde{\boldsymbol{\gamma}}_k^\star,\boldsymbol{\xi}_k,\tilde{\mathbf{z}}_{\cdot \backslash k},\mathbf s,\mathbf R)p(\tilde{\boldsymbol{\alpha}}_{k,\tilde{\boldsymbol{\gamma}}_k^\star}^\star\mid\tilde{\boldsymbol{\gamma}}_k^\star,\tilde{\mathbf{g}}_k).
\label{eq:marginaldens_z}
\end{align}
This corresponds to the density of the Gaussian distribution in \eqref{eqn:gaussalpha}.  The proposal variance parameter $\delta_{ik}^2$ is adjusted during the burn-in period to achieve a reasonable acceptance rate and is fixed thereafter. Note that $\mathbf{z}_{\cdot k}$ depends on $(\tilde{\boldsymbol{\alpha}}_{k,\boldsymbol{\tilde{\gamma}}_k}, \tilde{\boldsymbol{\gamma}}_k, \tilde{\mathbf{z}}_{\cdot k})$, and therefore must be taken into account when computing the acceptance probability. Finally, we propose $\tilde{\mathbf z}_{\cdot k}^\star$ from its target conditional distribution, as described in Step~\ref{par:tildez-update}. The proposed triple $(\tilde{\boldsymbol{\alpha}}_{k,\boldsymbol{\tilde{\gamma}}_k}^\star, \tilde{\boldsymbol{\gamma}}_k^\star, \tilde{\mathbf{z}}_{\cdot k}^\star)$ is accepted with probability
\begin{align}
\label{eq:acceptratio}
 1\land \frac{p(\mathbf{y}_{\cdot k} \mid\tilde{\boldsymbol{\alpha}}_{k,\tilde{\boldsymbol{\gamma}}_k^\star}^\star,\tilde{\boldsymbol{\gamma}}_k^\star,  \boldsymbol{{\xi}}_k,\tilde{\mathbf{z}}_{\cdot \backslash k},\mathbf s,\mathbf R)p(\tilde{\boldsymbol{\alpha}}_{k,\tilde{\boldsymbol{\gamma}}_k^\star}^\star\mid\tilde{\boldsymbol{\gamma}}_k^\star,\tilde{\mathbf{g}}_k)
p(\tilde{\boldsymbol{\gamma}}_k^\star) q(\tilde{\boldsymbol{\alpha}}_{k,\tilde{\boldsymbol{\gamma}}_k}\mid\tilde{\boldsymbol{\gamma}}_k,\tilde{\mathbf{g}}_k, \boldsymbol{\xi}_k,  \mathbf{z}_{\cdot k}^\star,\tilde{\mathbf{z}}_{\cdot \backslash k},\mathbf s,\mathbf R)}{p(\mathbf{y}_{\cdot k} \mid\tilde{\boldsymbol{\alpha}}_{k,\tilde{\boldsymbol{\gamma}}_k},\tilde{\boldsymbol{\gamma}}_k, \boldsymbol{{\xi}}_k,\tilde{\mathbf{z}}_{\cdot \backslash k},\mathbf s,\mathbf R)p(\tilde{\boldsymbol{\alpha}}_{k,\tilde{\boldsymbol{\gamma}}_k}\mid\tilde{\boldsymbol{\gamma}}_k,\tilde{\mathbf{g}}_k)
p(\tilde{\boldsymbol{\gamma}}_k) q(\tilde{\boldsymbol{\alpha}}^\star_{k,\tilde{\boldsymbol{\gamma}}_k^\star}\mid\tilde{\boldsymbol{\gamma}}_k^\star,\tilde{\mathbf{g}}_k, \boldsymbol{\xi}_k,  \mathbf{z}_{\cdot k},\tilde{\mathbf{z}}_{\cdot \backslash k},\mathbf s,\mathbf R)},
\end{align}
where $\mathbf{z}_{\cdot k}^\star = (z^\star_{1k},\ldots, z^\star_{nk})^\top \in \mathbb R^n$ and ${z}^\star_{ik} = \delta_{ik}\tilde{z}^\star_{ik} + \overline{\mathbf{W}}_{ik, \boldsymbol{\tilde \gamma}^\star_k}\tilde{\boldsymbol{\alpha}}^\star_{k,\tilde{\boldsymbol{\gamma}}_{k}^\star}$. Note that the term
$p(\mathbf{y}_{\cdot k} \mid\tilde{\boldsymbol{\alpha}}_{k,\tilde{\boldsymbol{\gamma}}_k},\tilde{\boldsymbol{\gamma}}_k, \boldsymbol{\xi}_k, \tilde{\mathbf{z}}_{\cdot \backslash k},\mathbf s, \mathbf R )$ denotes the conditional likelihood obtained by marginalizing out  $\tilde{\mathbf{z}}_{\cdot k}$ for discrete responses \citep{pitt2006gcr}, whereas for continuous responses it coincides with the usual conditional likelihood (see Section~\ref{app:sec1} of the supplementary material).
\end{itemize}

\paragraph{Step 2: Update of $\tilde {\mathbf g}$.} 
The conditional posterior distribution of $\tilde{{g}}_{jk}$ is inverse gamma by semi-conjugacy. For $j=1,\ldots,p$ and $k=1,\ldots,m$,
\begin{align*}
    \tilde{{g}}_{jk}\mid  \mathrm{rest} \sim \text{IG}\!\left({1\over 2}\!\left({|\boldsymbol{\tilde\gamma}_{jk}|+2}\right), {1\over 2}\!\left({\tilde{\boldsymbol{\alpha}}_{jk\backslash 0, \tilde{\boldsymbol{\gamma}}_{jk}}^\top\tilde{\mathbf{W}}^\top_{jk,\boldsymbol{\tilde \gamma}_{jk}}\tilde{\mathbf{W}}_{jk,\boldsymbol{\tilde \gamma}_{jk}}\tilde{\boldsymbol{\alpha}}_{jk\backslash 0, \tilde{\boldsymbol{\gamma}}_{jk}}+n}\right)\right).
\end{align*}

\paragraph{Step 3: Update of $ {\boldsymbol \xi}$.} 
The target conditional posterior density of $\boldsymbol{\xi}_k$ is 
\begin{align*}
    p(\boldsymbol{\xi}_k\mid \mathrm{rest}) \propto p(\mathbf{y}_{\cdot k}\mid \tilde{\boldsymbol{\alpha}}_{k,\tilde{\boldsymbol{\gamma}}_k}, \tilde{\boldsymbol{\gamma}}_k, \boldsymbol{\xi}_k, \tilde{\mathbf{z}}_{\cdot \backslash k},\mathbf{s}, \mathbf{R})p(\boldsymbol{\xi}_k),
\end{align*} for $k=1,\ldots,m$.
Because $\boldsymbol{\xi}_k$ depends on the response type, its updating rule is response-specific. 
For a Gaussian response with variance
$\xi_k$, the conditional posterior of $\xi_k$ is not inverse gamma under the copula dependence. We therefore employ an independent Metropolis–Hastings algorithm, 
using an inverse gamma proposal derived from the copula-free model.
For the other cases we employ random walk Metropolis–Hastings updates with appropriate proposal distributions. Further details are provided in Section~\ref{app:sec1} of the supplementary material.

\paragraph{Step 4: Update of $\tilde {\mathbf z}$.}\label{par:tildez-update} 

The updating scheme for the Gaussian latent variables $\tilde{\mathbf{z}}_{\cdot k}$
depends on whether $\mathbf y_{\cdot k}$ is continuous or discrete.
For a continuous response, the update is deterministic with $\tilde z_{ik}=\Phi^{-1}(F_k(y_{ik} ; {\boldsymbol{\beta}}_{k}, {\boldsymbol{\xi}}_k,\mathbf{x}_i,t_i))$ for $i=1,\ldots,n$. If the response is discrete, we need to update it by sampling from its conditional posterior distribution, which is given by  
\begin{align}\label{eq:updatetildez}
  \tilde z_{ik} \mid\mathrm{rest}
   \sim  \text{TN}_{(l_{ik},u_{ik}]}(\tilde{\mu}_{i,s_i(k,\backslash  k)}, \tilde{\sigma}^2_{s_i(k, \backslash  k)}),
\end{align}
where $\mathrm{TN}_A$ denotes a normal distribution truncated to $A$, with truncation bounds 
$l_{ik} = \Phi^{-1}\{F_k(y_{ik}-1;{\boldsymbol{\beta}}_{k}, {\boldsymbol{\xi}}_k,\mathbf{x}_i,t_i )\}$ and $u_{ik} = \Phi^{-1}\{F_k(y_{ik};{\boldsymbol{\beta}}_{k}, {\boldsymbol{\xi}}_k,\mathbf{x}_i,t_i )\}$. The parameters are given by
$\tilde{\mu}_{i,s_i(k,\backslash k)} = (\mathbf{R}_{s_i})_{k, \backslash  k}(\mathbf{R}_{s_i})^{-1}_{\backslash  k, \backslash  k}\tilde{\mathbf{z}}_{i,\backslash k}$ and $\tilde{\sigma}_{s_i(k, \backslash  k)}^2 = 1-(\mathbf{R}_{s_i})_{k, \backslash  k}(\mathbf{R}_{s_i})^{-1}_{\backslash  k,\backslash  k}(\mathbf{R}_{s_i})_{\backslash  k,k}$, where $(\mathbf{R}_{s_i})_{k,\backslash k}$ denotes the $k$th row of matrix $\mathbf{R}_{s_i}$ excluding its $k$th element, and 
$(\mathbf{R}_{s_i})^{-1}_{\backslash  k,\backslash  k}$ denotes the inverse of the submatrix of $\mathbf{R}_{s_i}$ obtained by removing its $k$-th row and $k$-th column.

\paragraph{Step 5: Update of ${\mathbf s}$.} 
The clustering membership variable $s_i$ is updated from the discrete distribution
\begin{align*}
    \Pr({s}_i= h\mid \mathrm{rest}) = {|\mathbf{R}_h|^{-1/2} \exp\!\left\{-{1\over 2}\tilde {\mathbf{z}}_i^\top\mathbf{R}_h^{-1}\tilde {\mathbf{z}}_i\right\}\pi_h(t_i)\over \sum_{\eta=1}^H |\mathbf{R}_{\eta}|^{-1/2} \exp\!\left\{-{1\over 2}\tilde {\mathbf{z}}_i^\top\mathbf{R}_{\eta}^{-1}\tilde {\mathbf{z}}_i\right\}\pi_{\eta}(t_i)},\quad h=1,\dots,H.
\end{align*}

\paragraph{Step 6: Update of $\mathbf{z}^*$.} 
The updating of $\mathbf{z}^*$ is based on a truncated normal distribution. For $i=1,\ldots,n$ and $\eta=1,\ldots,s_i$, 
\begin{gather} \label{eq:truncnorm}
    z_{i\eta}^* \mid \mathrm{rest}\sim  \begin{cases}
\text{TN}_{(0,\infty)}(f_\eta(t_i)
,1), & \mbox{if }\eta=s_i \\
\text{TN}_{(-\infty,0]}(f_\eta(t_i) ,1), & \mbox{if }\eta<s_i
\end{cases}
\end{gather}
where $f_\eta(t_i)  =\alpha_{\eta0}^\ast +  \boldsymbol{\alpha}_{\eta\backslash 0,\boldsymbol{\gamma}_\eta^\ast}^{\ast\top} \mathbf b_{\eta,{\boldsymbol{\gamma}}_{\eta}^\ast}(t_i)$.

\paragraph{Step 7: Update of $(\boldsymbol{\alpha}^*_{\boldsymbol{\gamma}^*},\boldsymbol{\gamma}^*)$.} 
The introduction of the latent variables $\mathbf{z}^*$ leads to a tractable Gaussian posterior of $\boldsymbol{\alpha}^*_
{h,\boldsymbol{\gamma}_h^*}$ given $\mathbf{z}^*$. Let $\textbf{z}_{\cdot h}^* = \{z_{ih}^*\}_{i : s_i \ge h}$.
After integrating out $\boldsymbol{\alpha}^*_{h,\boldsymbol{\gamma}_h^*}$, the Gibbs update for $\boldsymbol{\gamma}_h^*$ is given as follows. For $\ell=1,\ldots, L$ and $h=1,\ldots,H-1$, 
\begin{align*}
    \Pr\!\left(\gamma_{h\ell}^* =1 \mid \mathrm{rest} \setminus \boldsymbol{\alpha}_{h,\boldsymbol{\gamma}_h^*}^*\right)= \!\left(1 + \frac{p(\gamma^*_{h\ell}=0\mid \boldsymbol{\gamma}^*_{h\backslash \ell})\,
    p(\mathbf{z}^*_{\cdot h}\mid\gamma^*_{h\ell}=0, \boldsymbol{\gamma}^*_{h\backslash \ell},g_h^*)}
    {p(\gamma^*_{h\ell}=1\mid \boldsymbol{\gamma}^*_{h\backslash \ell})\,
    p(\mathbf{z}^*_{\cdot h}\mid\gamma^*_{h\ell}=1, \boldsymbol{\gamma}^*_{h\backslash \ell},g_h^*)}
    \right)^{-1},
\end{align*}
where
$
p(\mathbf{z}^*_{\cdot h}\mid\boldsymbol{\gamma}_h^*,g_h^*) = \int p(\mathbf{z}^*_{\cdot h}\mid \boldsymbol{\alpha}^*_{h,\boldsymbol{\gamma}_h^*},\boldsymbol{\gamma}_h^*)p(\boldsymbol{\alpha}^*_{h,\boldsymbol{\gamma}^*_h}\mid\boldsymbol{\gamma}^*_h,g^*_h) d \boldsymbol{\alpha}^*_{h,\boldsymbol{\gamma}^*_h}
$, whose closed-form expression is provided in {\eqref{app:zstar-lhd} of the supplementary material}.
Conditional on $\boldsymbol{\gamma}_h^*$, the posterior for $\boldsymbol{\alpha}^*_{h,\boldsymbol{\gamma}_h^*}$ is easily derived as a Gaussian distribution. Let $\widehat{\mathbf{W}}_{h, \boldsymbol{\gamma}_h^*}^{*}\in\mathbb R^{(\sum_{\eta \ge h} n_\eta) \times (1+|\boldsymbol{\gamma}_h^*|)}$ denote the sub-matrix of $\mathbf{W}_{h, \boldsymbol{\gamma}_h^*}^*$ whose rows are the $i$th row of $\mathbf{W}_{h, \boldsymbol{\gamma}_h^*}^*$ for those indices $i$ satisfying $s_i \ge h$, where $n_\eta=\sum_{i=1}^n \mathbb{I}(s_i=\eta)$. Then, for $h=1,\ldots,H-1$,
\begin{gather*}
\boldsymbol{\alpha}_{h,\boldsymbol{\gamma}_h^*}^*\mid\mathrm{rest}
\sim  \text{N}_{2+|\boldsymbol{\gamma}_h^*|}(\boldsymbol{\mu}_{\boldsymbol{\alpha}_h^*},\mathbf{V}_{\boldsymbol{\alpha}_h^*} ),
\end{gather*}
where
\begin{align*}
\mathbf{V}_{\boldsymbol{\alpha}^*_h}&=\begin{pmatrix}
             \sum_{\eta\ge h}n_\eta + 1 & \mathbf{1}^\top\widehat{\mathbf{W}}^*_{h,\boldsymbol{\gamma}_h^*}\\
             \widehat{\mathbf{W}}^{*\top}_{h,\boldsymbol{\gamma}_h^{*}}\mathbf{1} & \widehat{\mathbf{W}}^{*\top}_{h,\boldsymbol{\gamma}_h^{*}} \widehat{\mathbf{W}}^*_{h,\boldsymbol{\gamma}_h^*}
             + {g_h^*}^{-1}\mathbf{W}^{*\top}_{h,\boldsymbol{\gamma}_h^*}\mathbf{W}^{*}_{h,\boldsymbol{\gamma}_h^*}
         \end{pmatrix}^{-1}, \\
    \boldsymbol{\mu}_{\boldsymbol{\alpha}^*_h} &= \mathbf{V}_{\boldsymbol{\alpha}_h^*}\begin{pmatrix}
             \mathbf{1}^\top
             \\ \widehat{\mathbf{W}}^{*\top}_{h,\boldsymbol{\gamma}_h^*}
         \end{pmatrix}\mathbf{z}_{\cdot h}^*.
\end{align*}

\paragraph{Step 8: Update of $\mathbf{g}^*$.} 
The conditional posterior distribution of $g_h^\ast$ is inverse gamma: for $h=1,\ldots,H-1$,
\begin{align*}
   g_h^*\mid \mathrm{rest} \sim \text{IG}\!\left( {1\over 2}\!\left({|\boldsymbol{\gamma}_h^*|+2}\right),{1\over 2}\!\left({\boldsymbol{\alpha}_{h\backslash 0,\boldsymbol{\gamma}_h^*}^{*\top}\mathbf{W}_{h,\boldsymbol{\gamma}_h^*}^{*\top}\mathbf{W}_{h,\boldsymbol{\gamma}_h^*}^{*}\boldsymbol{\alpha}_{h\backslash 0,\boldsymbol{\gamma}_h^*}^{*} + n }\right)\right).
\end{align*}

%--------------------------------

\paragraph{Step 9: Update of $(\mathbf D,\mathbf R)$.} 
As discussed in Section~\ref{sec:prior2},
we employ a parameter expansion strategy for the correlation matrix $\mathbf R_h$. 
The conditional posterior density of $(\mathbf{D}_h,\mathbf{R}_h)$ is
\begin{align*}
p(\mathbf D_h,\mathbf R_h \mid \mathrm{rest})\propto 
p(\mathbf{D}_h \mid \mathbf{R}_h)p(\mathbf{R}_h)\prod_{i:s_i=h}p(\tilde{\mathbf{z}}_i\mid s_i,\mathbf{R}_{s_i}),
\end{align*} for $h=1,\ldots,H$.
Expanding $\tilde{\mathbf{z}}_i \sim N(0,\mathbf{R}_{s_i})$ to  $ \tilde{\mathbf{z}}^\top_i\mathbf{D}_{s_i}$  yields the semi-conjugate posterior of $\boldsymbol{\Sigma}_h$, enabling an efficient and concise update of $\mathbf{R}_h$. The sampling steps are as follows. 
\begin{enumerate}[label=\roman*)]
    \item Draw $d_{hk}^2\mid\mathbf{R}_h \overset{\text{iid}}{\sim} \text{IG}((m+1)/2,(\mathbf{R}_h^{-1})_{k,k}/2)$  for $k=1,\ldots,m$ and set $\mathbf{D}_{h} = \text{diag}(d_{h1},\ldots,d_{hm})$.
    \item Set $\tilde{\mathbf{z}}^\star_h = \{ \tilde{\mathbf{z}}_i^\top\mathbf{D}_{s_i}\}_{i:s_i=h}$ and then sample $\boldsymbol{\Sigma}_{h}\sim \text{IW}_{n_h+m+1}\!\left((\mathbf{I}_m + \tilde{\mathbf{z}}_h^{\star\top} \tilde{\mathbf{z}}_h^\star)^{-1} \right)$.
    \item Set $\mathbf{D}_h = \sqrt{\text{diag}(\boldsymbol{\Sigma}_h)}$ and calculate $\mathbf{R}_h = \mathbf{D}^{-1}_h \boldsymbol{\Sigma}_h\mathbf{D}^{-1}_h$.
    \item For each \(i\), update
$\tilde{\mathbf{z}}_{i}
=
\tilde{\mathbf{z}}^\star_{is_i}\mathbf{D}_{s_i}^{-1}$.
\end{enumerate}
Technically, this sampling scheme corresponds to marginal data augmentation \citep{meng1999seeking,van2001art}.

\section{Numerical Study}\label{sec5}
In this section, we validate the proposed method through two numerical studies.
First, we consider an illustrative example under a correctly specified data-generating process that exactly matches the modeling framework of the proposed method. This experiment is intended to examine its internal validity. In this setting, we do not include comparisons with other methods, as no directly comparable approach is currently available.
Second, to evaluate robustness to model misspecification and to assess the flexibility of the proposed covariate-dependent copula modeling, we compare our method with the approach of \citet{marra2017bivariate}, implemented in the R package \GJRM, under copula misspecification. Owing to the modeling limitations of \GJRM, however, the comparison is restricted to bivariate copulas with continuous responses.

\subsection{Illustrative Example}

\label{sec:example}

We consider numerical examples under a correctly specified data-generating mechanism to evaluate the proposed model and assess estimation accuracy.
Two datasets with sample sizes $n\in \{2{,}000,5{,}000\}$ are considered.
We set $m=6$ and assume that the $k$th response $y_{ik}$ follows the $k$th distribution listed in Table~\ref{tab:dist}, using the parameterization specified therein, with distribution-specific additional parameters.
Specifically, given the linear predictors specified below, $y_{i1}$ is Gaussian with variance $\xi_1=0.1^2$, $y_{i2}$ is gamma with shape parameter $\xi_2=10$, $y_{i3}$ is binary with a probit link, $y_{i4}$ is binomial for $10$ trials with a logit link, $y_{i5}$ is negative binomial with $\xi_5=5$ given successes, and $y_{i6}$ is ordinal probit for 4 categories, for which the cut points are $\boldsymbol{\xi}_6=(\xi_{61},\xi_{62},\xi_{63})^\top=(0,1,2)^\top$.
We set $p=2$, and the varying linear predictor is given by $\beta_{1k}(t_i)+x_i \beta_{2k}(t_i)$ for $k=1,\dots,6$, where $t_i\in[-1,1]$ and $x_i\in[-1,1]$.
For $i=1,\dots,n$, the covariate $t_i$ and the explanatory variable $x_i$ are drawn independently from $\mathrm{Unif}(-1,1)$. The varying coefficients $\beta_{1k},\beta_{2k}:[-1,1]\to\mathbb{R}$ are chosen as 
\begin{align}
\begin{split}
    \beta_{11}(t) & = {(3t+1.5)^3\over 400}+{(3t-2.5)^2\over 20e^{-3t-1.5}}\sin\!\left({(4t+1.5)^2\pi\over3}\right)\mathbbm{1}(-0.4\le t\le 0.85), \\
    \beta_{12}(t) & = \sin^3\!\left({\pi(t+2)^2\over 3}\right) +0.5,  \\
    \beta_{13}(t) & = t, \\
    \beta_{14}(t) & = \sin^3\!\left({\pi(t+1)^3\over 4}\right), \\
    \beta_{15}(t) & = 0.5(2t^5 + 3t^2 + \cos(3\pi t)-1) + 0.5, \\
    \beta_{16}(t) & = \sin({\pi(t+1)}), \\
    \beta_{21}(t) & = 0.5\sin^2(2\pi(t+1)), \\  
    \beta_{22}(t) & = -0.5(t+0.5)^2, \\
    \beta_{23}(t) & = -2t, \\
    \beta_{24}(t) & = -\sin^2(\pi(t+1)) + 0.5, \\
    \beta_{25}(t) & = \sin(0.8\pi(t+1)) - 0.5, \\
    \beta_{26}(t) & = \sin(0.5\pi(t+1)) - 0.5. 
\end{split}
\label{eqn:vcf}
\end{align}
To induce dependence across the responses, we employ a three–component mixture of Gaussian copulas with correlation matrices $(\mathbf{R}_1,\mathbf{R}_2,\mathbf{R}_3)$ and varying mixture weights $(\pi_1(t),\pi_2(t),\pi_3(t))$, where 
\begin{align*}
    (\mathbf{R}_1)_{k,k'} &=\mathbbm 1(k=k')+0.8 \mathbbm 1(k \ne k'), \\
    (\mathbf{R}_2)_{k,k'} &=\mathbbm 1(k=k')+ 0.9(-1)^{|k-k'|}  \mathbbm 1(k \ne k'), \\
    (\mathbf{R}_3)_{k,k'} &=\mathbbm 1(k=k')+ 0.3(-1)^{|k-k'|}  \mathbbm 1(k \ne k').
\end{align*}
and the weights are generated via the PSBP construction as
\begin{align*}
    \pi_1(t)& = \Phi(3t),\\
    \pi_2(t)& = \Phi(2\sin(3\pi(t+0.5)))[1-\pi_1(t)],\\
    \pi_3(t)& =1-\pi_1(t)-\pi_2(t).
\end{align*}

For the spline basis, we use $L=40$ interior knot candidates for each function. We set the prior inclusion probabilities for the number of selected bases to $\varpi_{jk}=0.2$ and $\varpi_h=0.2$, which induce reasonable prior decay. The mixture size is truncated at an upper bound of $H=30$ components. We run 10,000 MCMC iterations and discard the first 5,000 as burn-in.

%--------------------------------------------
% n = 2000
\begin{figure}[t]
    \centering
    \begin{subfigure}[t]{\linewidth}
        \centering
        \includegraphics[width=1\textwidth]{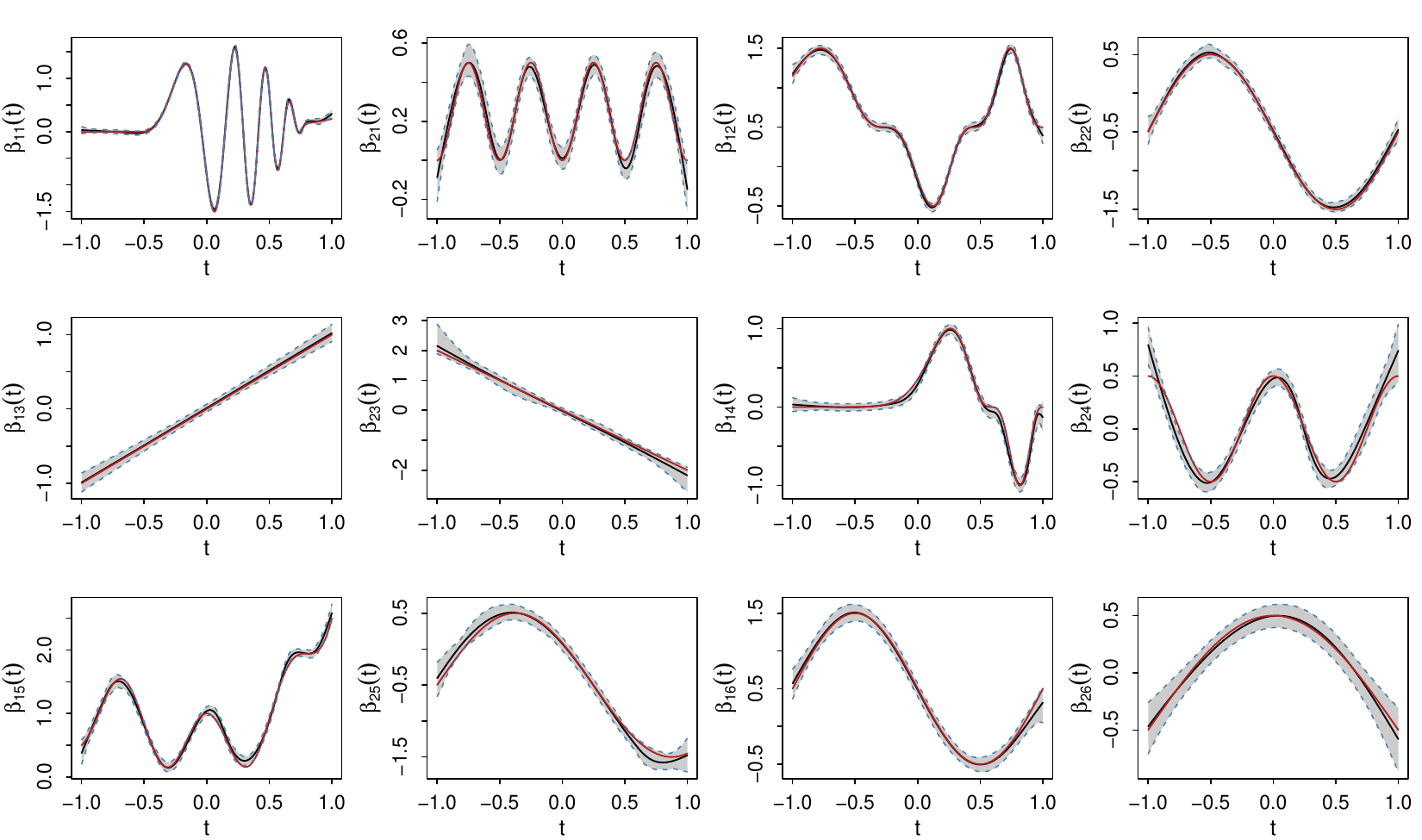}
        \caption{Results for the varying coefficients $\beta_{1k}(t)$ and $\beta_{2k}(t)$, $k=1,\dots,6$. 
        }
        \label{fig:marginal_n2000_a}
    \end{subfigure}
    \begin{subfigure}[t]{\linewidth}
        \centering
     \vspace{0.6em}
        \includegraphics[width=0.85\textwidth]{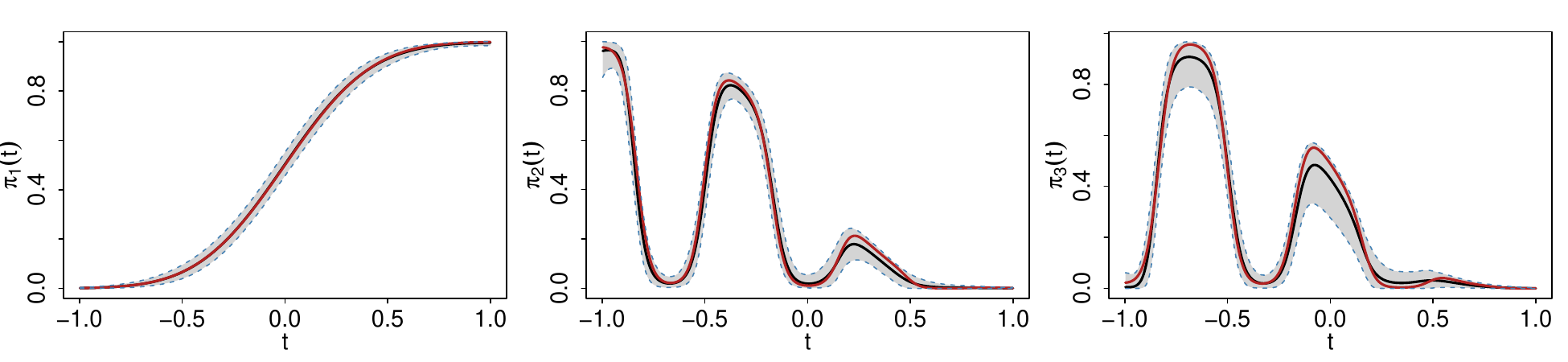}
        \caption{Results for the varying mixture weights $\pi_h(t)$, $h=1,2,3$.}
        \label{fig:pi_n2000_b}
    \end{subfigure}
    \caption{Results for $n=2{,}000$. Posterior means (black) and 95\% pointwise credible bands (gray), with the true curves shown in red.}
    \label{fig:sim_n2000}
\end{figure}
%--------------------------------------------

%--------------------------------------------
% n = 5000
\begin{figure}[t]
    \centering
    \begin{subfigure}[t]{\linewidth}
        \centering
        \includegraphics[width=1\linewidth]{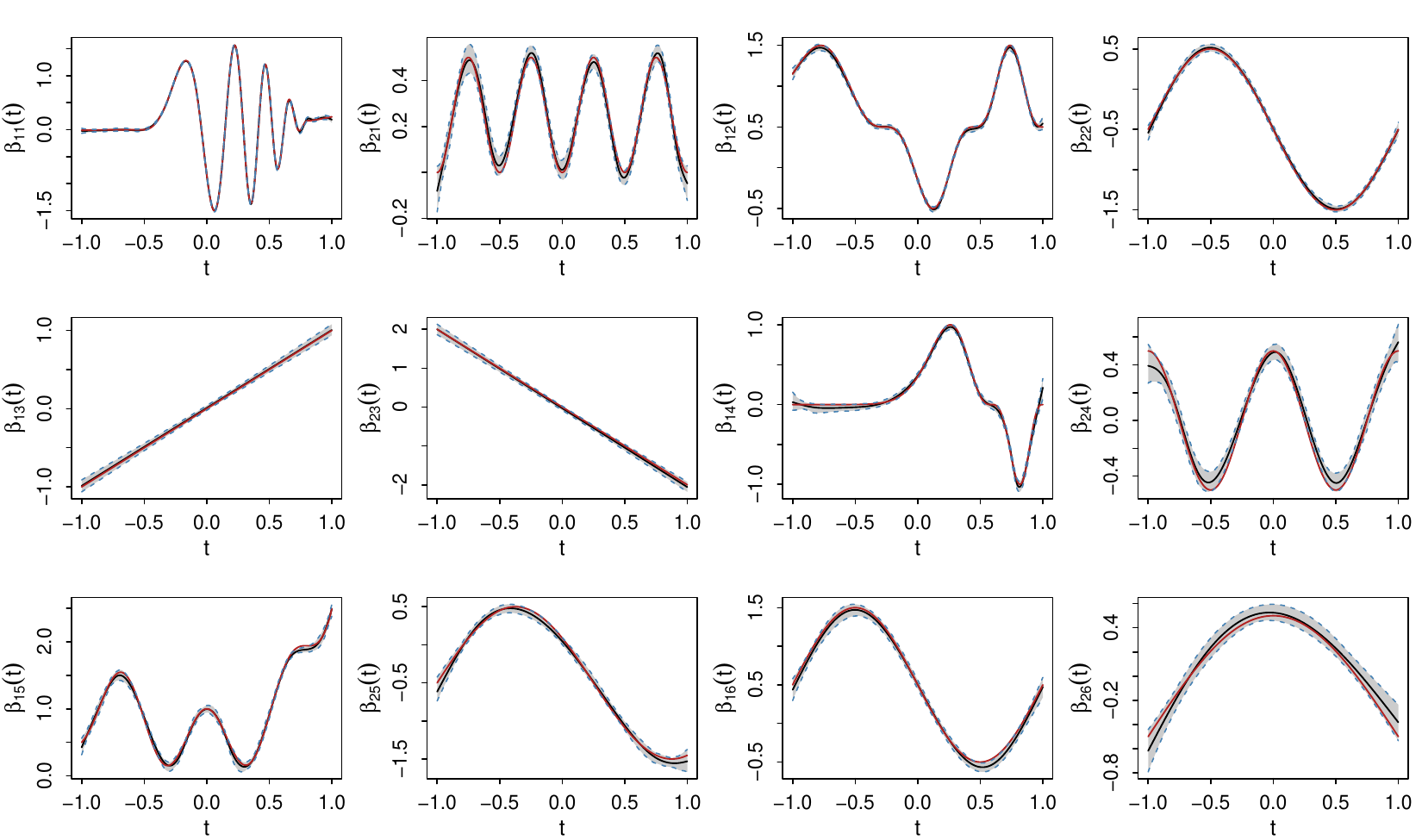}
        \caption{Results for the varying coefficients $\beta_{1k}(t)$ and $\beta_{2k}(t)$, $k=1,\dots,6$. 
       }
        \label{fig:marginal_n5000_a}
    \end{subfigure}
    \begin{subfigure}[t]{\linewidth}
        \centering
        \vspace{0.6em}
        \includegraphics[width=0.9\textwidth]{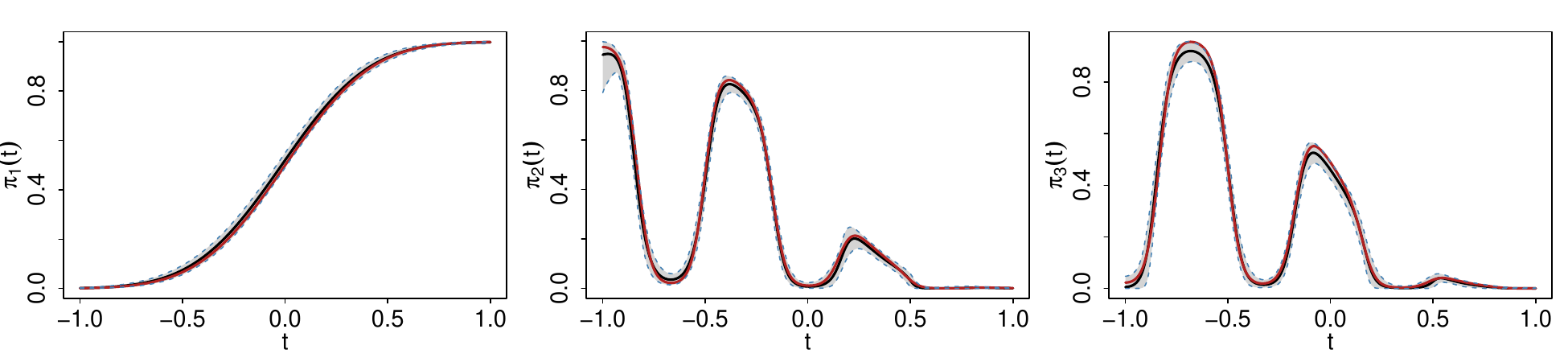}
        \caption{Results for the varying mixture weights $\pi_h(t)$, $h=1,2,3$.}
        \label{fig:pi_n5000_b}
    \end{subfigure}
    \caption{Results for $n=5{,}000$. Posterior means (black) and 95\% pointwise credible bands (gray), with the true curves shown in red.}
    \label{fig:sim_n5000}
\end{figure}
%--------------------------------------------

The results are summarized in Figures~\ref{fig:sim_n2000}--\ref{fig:sim_n5000} for $n=2{,}000$ and $n=5{,}000$, respectively.
Specifically, Figures~\ref{fig:marginal_n2000_a} and \ref{fig:marginal_n5000_a} compare the estimated varying coefficients with the true functions, while
Figures~\ref{fig:pi_n2000_b} and \ref{fig:pi_n5000_b} display the estimated mixture weights for the three clusters.
The results indicate that the proposed method identifies important knots among the candidates, thereby avoiding overfitting while achieving local adaptation. This behavior is consistent with findings in the knot-selection literature \citep[e.g.,][]{jeong2022bayesian}.
The true functions are well covered by the 95\% pointwise credible bands, supporting the validity of the proposed method.

\subsection{Simulation under Copula Misspecification}
\label{sec:simmis}

We next evaluate the flexibility of the proposed copula modeling by examining robustness to misspecification. For a meaningful comparison, we benchmark our method against the R package \GJRM. While our method captures covariate-varying dependence through a dependent mixture of Gaussian copulas, \GJRM adopts the modeling framework $t\mapsto C_G(\cdot\,;\tilde{\mathbf R}(t))$, discussed as an alternative approach in Section~\ref{sec:copula}. We consider two sample sizes, $n\in\{2{,}000,5{,}000\}$. Due to the more limited modeling scope of \GJRM, we restrict attention to bivariate continuous responses with Gaussian and gamma marginal distributions. Specifically, the data are generated in the same manner as $y_{i1}$ and $y_{i2}$ in Section~\ref{sec:example}, using additional parameters $\xi_1=0.1^2$ and $\xi_2=10$ and varying coefficients $\beta_{11}$, $\beta_{21}$, $\beta_{12}$, and $\beta_{22}$ provided in \eqref{eqn:vcf}. 
The explanatory variables $x_i$ and $t_i$ are drawn independently from $\mathrm{Unif}(-1,1)$. To generate misspecified dependence structures between $y_{i1}$ and $y_{i2}$, we consider the following four scenarios for the copula model $t\mapsto C(\cdot\,;t)$ used to generate the data.
\begin{itemize}
\item \textit{Scenario 1: $t$-copula with a functional correlation matrix}.
We consider a bivariate $t$-copula 
$$
C(\cdot\,;t)=C_{t,5}(\cdot\,;\tilde{\mathbf R}(t)),\quad
\tilde{\mathbf R}(t)_{1,2}=\sin\!\left(\frac{13\pi}{40}\sin^3\!\left({\pi(t+1)^3\over 4}\right)\right),
$$
where $C_{t,\nu}(\cdot \, ;\mathbf R)$ denotes the $t$-copula with $\nu$ degrees of freedom and correlation matrix $\mathbf R$.

\item \textit{Scenario 2: Gaussian copula with a functional correlation matrix}.
We consider a bivariate Gaussian copula 
$$
C(\cdot\,;t)=C_G(\cdot;\tilde{\mathbf R}(t)),
$$
with the same functional correlation matrix as in Scenario~1. This scenario corresponds to the correctly specified setting for \GJRM.

\item \textit{Scenario 3: Mixture of $t$-copulas with covariate-varying weights}.
We consider a two-component mixture of bivariate $t$-copulas 
$$
C(\cdot\,; t)=\sum_{h=1}^{2}\pi_h(t)\,C_{t,5}\big(\cdot;\mathbf R_h\big),\quad
\pi_1(t)=\Phi\!\left(1.5\sin^3\!\left(\frac{\pi(t+1)^3}{4}\right)\right),\quad
\pi_2(t)=1-\pi_1(t),
$$
with $({\mathbf R_1})_{1,2}=-0.8$ and $({\mathbf R_2})_{1,2}=0.8$.

\item \textit{Scenario 4: Mixture of Gaussian copulas with covariate-varying weights}.
We consider a two-component mixture of bivariate Gaussian copulas 
$$
C(\cdot\,;t)=\sum_{h=1}^{2}\pi_h(t)\,C_G\big(\cdot;\mathbf R_h\big),
$$
with the same  mixture weights $\pi_h(t)$ and correlation matrices $\mathbf R_h$ as in Scenario~3. This scenario corresponds to the correctly specified setting for the proposed method.
\end{itemize}
Other hyperparameter settings and MCMC specifications are identical to those in Section~\ref{sec:example}.
However, in contrast to Section~\ref{sec:example}, we generate 100 replicated datasets for each simulation setting
to properly account for sampling uncertainty in the data-generating mechanism.
All performance measures reported in this section are based on these 100 replications.

Because the two methods adopt different copula parameterizations, a direct comparison of copula parameters is not meaningful. To assess robustness to copula misspecification, we therefore compare pointwise estimates of the covariate-varying bivariate copula density $(u_1,u_2,t)\mapsto c(u_1,u_2;t)$ obtained from each method, where $c$ is the true copula density in Scenarios~1--4. Let $\hat c$ denote the corresponding pointwise estimate. We then define the $L^2$ estimation error for the copula density as
$
\{
\int_{-1}^1\int_0^1\int_0^1[c(u_1,u_2;t) - \hat c(u_1,u_2;t) ]^2 d u_1 d u_2 dt\}^{1/2}
$.
Similarly, we obtain pointwise estimates of the varying coefficients $\beta_{jk}$, denoted by $\hat\beta_{jk}$, and define the corresponding $L^2$ estimation error as
$
\{\int_{-1}^1[\beta_{jk}(t) - \hat\beta_{jk}(t)]^2 dt\}^{1/2}.
$
For the additional parameters $\xi_1$ and $\xi_2$, corresponding to the variance of $y_{i1}$ and the shape parameter $y_{i2}$, respectively, we define the estimation error as the absolute bias $|\xi_k-\hat \xi_k|$ for $k=1,2$, where $\hat \xi_k$ denotes the estimate.

\begin{figure}
    \centering
    \includegraphics[width=1\textwidth]{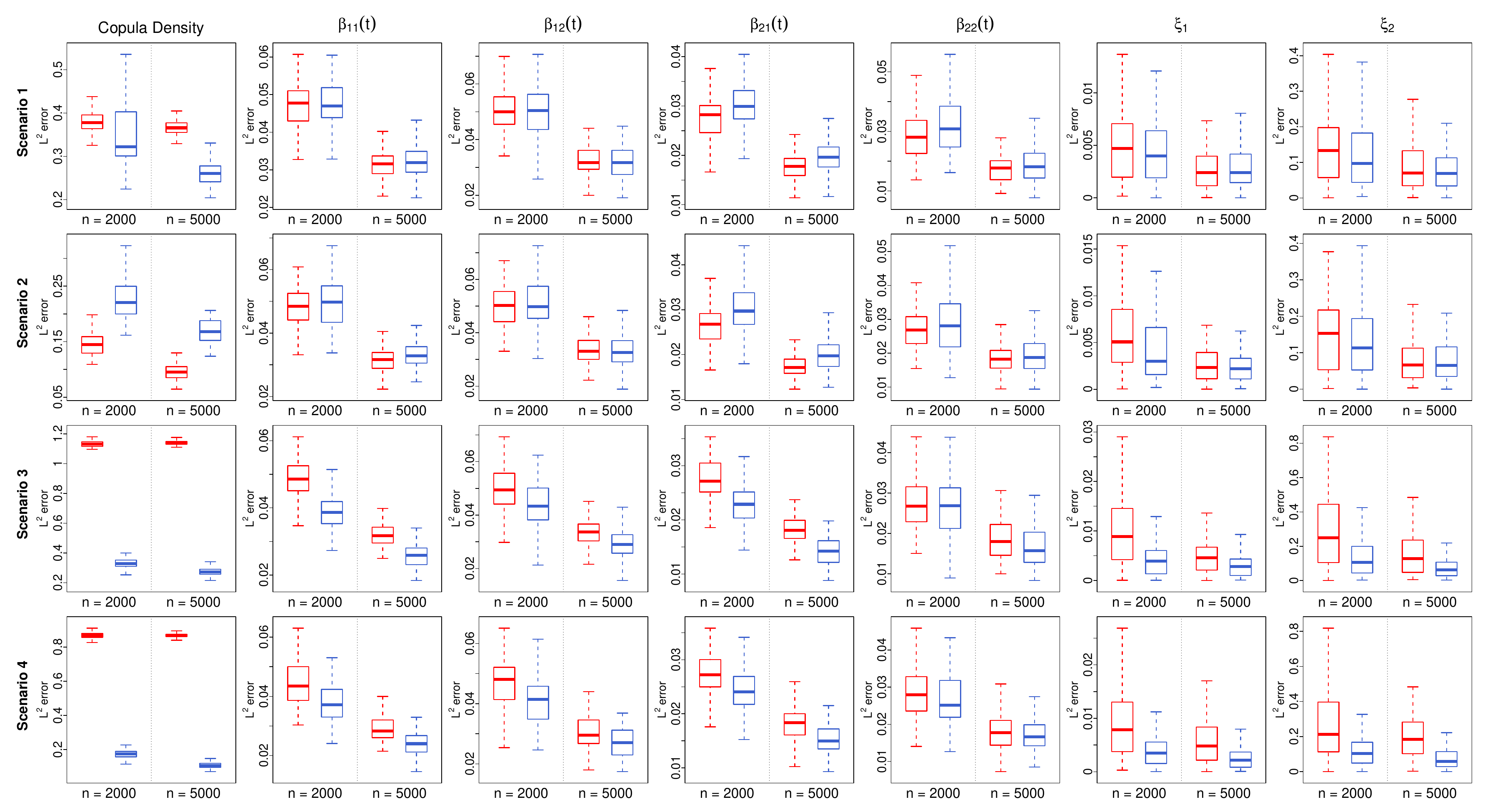}
    \caption{
Boxplots of estimation errors comparing \GJRM (red) and the proposed method (blue). Columns represent the target parameters: the copula density, the varying coefficients, and the two additional parameters (the variance for $y_{i1}$ and the shape parameter for $y_{i2}$). Rows represent Scenarios~1--4.
}

    \label{fig:RMSE}
\end{figure}

Figure~\ref{fig:RMSE} compares the estimation errors of the parameters across methods under each scenario. The results indicate that the relative performance depends primarily on the underlying dependence structure. When the data are generated from copulas with functional correlation matrices (Scenarios~1 and~2), both methods perform reasonably well, with only modest differences overall, except for copula density estimation. In particular, the proposed method achieves clearly better copula estimation accuracy in Scenario~1, whereas it performs worse in Scenario~2. Since Scenario~2 corresponds to the correctly specified setting for \GJRM, this behavior is expected. These results empirically suggest that the proposed mixture-based copula model can approximate copulas with functional correlation matrices reasonably well under misspecification.
In sharp contrast, when the data are generated from mixtures of copulas with covariate-dependent weights (Scenarios~3 and~4), the proposed method clearly outperforms \GJRM, especially in terms of copula density estimation. Notably, the proposed method attains substantially smaller copula density errors not only in Scenario~4, which is its correctly specified setting, but also under misspecification in Scenario~3. This indicates that modeling dependence solely through a covariate-varying correlation matrix is inadequate for capturing mixture-induced heterogeneity, whereas the proposed framework remains robust to the choice of copula family and yields stable estimation accuracy even for mixtures of $t$-copulas.
Although $t$-copulas are not exactly infinite mixtures of Gaussian copulas, the simulation results indicate that the proposed framework can nevertheless approximate such dependence structures well. Consequently, the results in this section provide empirical support for the approximation power of the proposed covariate-dependent copula modeling established in Theorems~\ref{thm:appGC} and~\ref{thm:appGCM}.

%-----------------------------------------------------------------------

\section{Application to BRFSS 2023 Data}
\label{sec6}
In this section, we apply the proposed model to the BRFSS 2023 data described in Section~\ref{sec:mot}. The dataset comprises self-reported information on health-related risk behaviors, chronic conditions, and preventive service use, collected through a dual-frame telephone survey of $n = 433{,}323$ U.S. adults conducted by the CDC. The core questionnaire covers domains including health status, lifestyle behaviors, chronic disease prevalence, mental health, and healthcare access. 
Table~\ref{tab:predictors} summarizes the explanatory variables $\mathbf x_i$ used in our analysis, together with their definitions and coding schemes. For each individual $i$, we define the multivariate response vector as $\mathbf{y}_i = (y_{i1}, \ldots, y_{i8})^\top$, as specified in Section~\ref{sec:mot}.
Age is reported in 5-year intervals from 25 to 80, with open-ended categories below 25 and above 80. We assign to each category a representative value, using the midpoint for the bounded intervals and the sample median within each open-ended category (21 and 82 for the lower and upper groups, respectively). The resulting 12 distinct values are used as the covariate $t_i$.
Individuals with missing observations are excluded from the analysis.
Given that only 12 distinct values of $t_i$ are available, we adopt a slightly conservative prior for knot selection. Specifically, we consider only $L=10$ candidate knots and set the decay hyperparameters to $\varpi_{jk}=0.4$ and $\varpi_h=0.4$.

\begin{table}[t!]
\caption{The explanatory variables used for each individual \(i\), with their possible value codes.}
\centering
\resizebox{1\textwidth}{!}{%
\begin{tabular}{@{} l  p{0.45\textwidth}  p{0.55\textwidth} @{}}
\toprule
{Variable}               & {Description}                                                & {Possible values}              \\ 
\midrule
$\texttt{Gender}_i$            & Gender indicator                  & 0 = Female; 1 = Male      \\
$\texttt{Race}_i$            &  Self-identified race indicator                      & 0 = Non-white or Hispanic; 1 = Non-Hispanic white \\
$\texttt{Marriage}_i$            &  Marital status indicator                    & 0 = Not married; 1 = Married \\
$\texttt{Education}_i$            & Highest level of education attained.                                & Ordered categories, 1--6       
\\
$\texttt{NumChild}_i$        & Number of children in the household.                                & Integers, 1--6                  \\
$\texttt{Employment}_i$        & Employment status indicator.                                & 0 = Not employed; 1 = Employed or self-employed                  \\
$\texttt{Income}_i$        & Income level.                                & Ordered categories, 1--7                  \\
$\texttt{OwnHome}_i$        & Home ownership indicator.                                & 0 = Rent or other arrangement; 1 = Own home                  \\
$\texttt{Smoking}_i$              & Indicator for current smoking status.                               & 0 = No; 1 = Yes \\
$\texttt{HeavyDrink}_i$              & Indicator for heavy drinker.                               & 0 = No; 1 = Yes \\
$\texttt{Insurance}_i$            & Indicator for health insurance coverage.                           & 0 = No; 1 = Yes                       \\
$\texttt{PhyRec}_i$        & Indicator for aerobic activity and muscle-strengthening activity. & 0 = None; 1 = One condition; 2 = Both conditions
                      \\
$\texttt{Urban}_i$   &  Urban residence indicator  
                     & 0 = Rural; 1 = Urban \\
\bottomrule
\end{tabular}
}
\label{tab:predictors}
\end{table}

%--------------------------------------------------- 
Figure~\ref{fig:plot_realmarg} displays the estimated varying coefficients $\beta_{jk}(t)$ for $j=1,\ldots,14$ and $k=1,\ldots,8$, along with pointwise 95\% credible bands.
The estimated varying coefficients illustrate how the effects of the explanatory variables vary with age. The overall age-adaptive patterns are similar to those in Figure~\ref{fig:motiv_marginal}. However, the coefficient magnitudes are not directly comparable to those in Figure~\ref{fig:motiv_marginal} because of the different response models: the present analysis employs probit and ordinal probit modeling, whereas the preliminary analysis in Section~\ref{sec:mot} was based on logit and ordinal logit models.
The baseline intercept effects are straightforward to interpret. As expected, for most health status outcomes, the baseline effects increase with age, indicating that health risks tend to rise as individuals grow older. The only exceptions are asthma and obesity (BMI), which exhibit inverted-U shapes. These patterns are consistent with existing findings that asthma and obesity are more prevalent in middle-aged adults than in younger or older groups \citep{busse2020asthma,flegal2016trends}.
Apart from the intercepts, the signs and magnitudes of the varying coefficients reflect the directions and strengths of the effects of the explanatory variables. For example, with the exception of arthritis and asthma, the estimated gender effect is generally positive for most outcomes, indicating higher disease risks for men than for women. In contrast, the higher prevalence of arthritis and asthma among women than men is well aligned with existing evidence \citep{shah2018sex,fallon2023prevalence}.
The age-varying gender effect is particularly clear for several outcomes. For example, diabetes and stroke show little or no gender difference at younger ages, whereas the disparity increases nearly linearly with age, so that middle-aged and older men exhibit higher risks. This pattern is consistent with findings from large cohort studies \citep{mnatzaganian2024sex,appelros2009sex}. Obesity (BMI) exhibits a similar age-increasing prevalence pattern for men. However, this result should be interpreted with caution because self-reported anthropometric measures are often misreported, which may lead to bias in obesity estimates \citep{ng2019biases}.
In contrast, high blood pressure exhibits a reversed and nonlinear pattern, indicating that men are more susceptible at younger ages, whereas the gender difference diminishes with age, consistent with previous studies \citep{sandberg2012sex}. Other responses and explanatory variables can be interpreted in a similar manner, but we omit further discussion for brevity.

%------------------------- marginal plots
\begin{figure}[p!]
  \centering

 \includegraphics[width=1\textwidth]{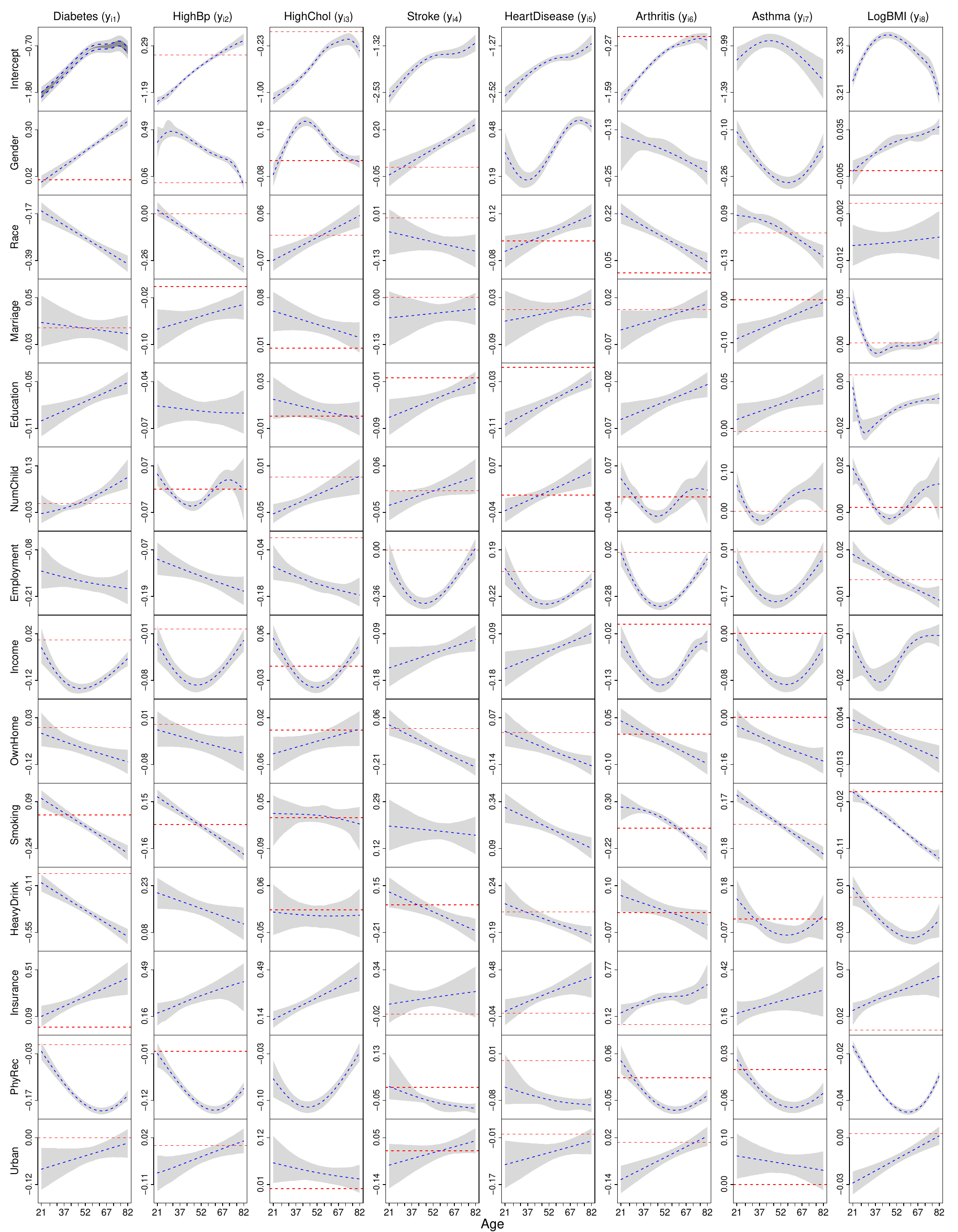}
%}
  \caption{The pointwise posterior means (blue dashed curve) and pointwise 95\% credible intervals (gray shade) from the Bayesian copula regression analyses of the BRFSS 2023 data. Each column corresponds to one of the eight health outcomes,  $k=1,\ldots,8$, and each row to one of the 14 demographic variables,  $j=1,\ldots,14$. The panel in row $j$ and column $k$ displays the varying coefficient $\beta_{jk}(t_i)$, showing how the effect of the explanatory variable $j$ on outcome $k$ evolves over $t$ under the joint model. 
  For the diabetes variable, the two intercept curves represent $-\Phi^{-1}(\Pr\{y_{i1}\le b \mid t_i\}) = \beta_{11}(t_i) - \xi_{1b}$, $b=1,2$, when $\mathbf{x}_i$ is zero, with $\xi_{11}=0$ for identifiability.}
  
  \label{fig:plot_realmarg}
\end{figure}

%------------------------- corr plot
\begin{figure}[t!]
    \centering
    
    \includegraphics[width=\textwidth]{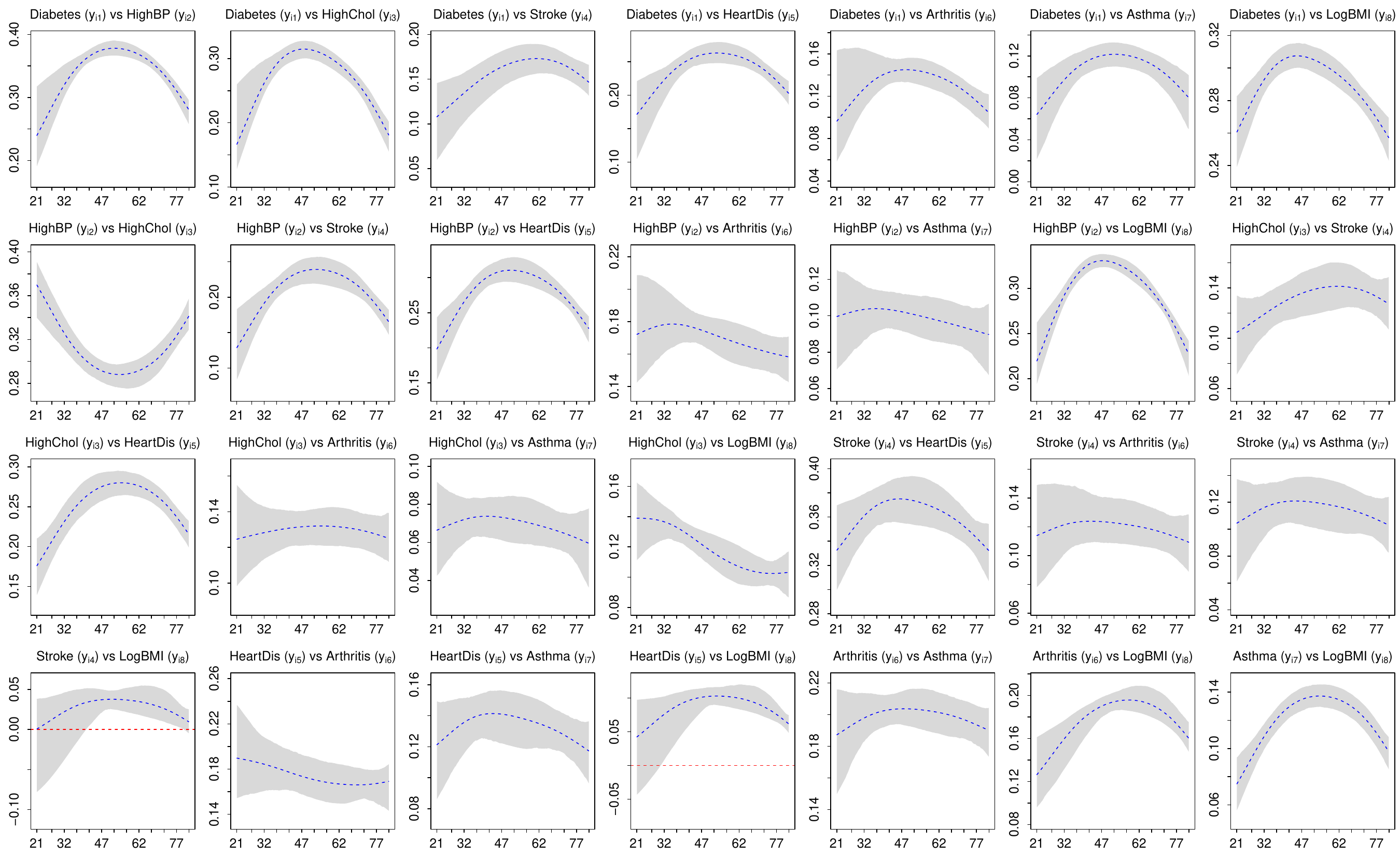}
    \caption{The pointwise posterior means (blue dashed curves) and 95\% pointwise credible intervals (gray shade) for the time-varying pairwise correlation functions among the latent Gaussian copula variables $\tilde z_{ik}$, $k=1,\ldots,8$.}
    \label{fig:real_corr}
\end{figure}

Figure~\ref{fig:real_corr} shows the estimated copula-latent correlation functions between the response models together with their pointwise 95\% credible bands. These correlations are not Pearson correlations between the responses and should therefore be interpreted with caution. Nevertheless, their signs and relative magnitudes remain informative for assessing the strength of dependence.
The estimated correlation patterns are closely related to those in Figure~\ref{fig:motiv_corr} from the preliminary analysis, although they are not directly comparable because the latter are based on residuals.
The cross-condition dependence varies systematically with age. For several pairs of health conditions, the latent correlations strengthen toward midlife and weaken again at older ages, forming an inverted-U pattern. Such patterns are consistent with the life-course evolution of multimorbidity, in which chronic conditions tend to accumulate and co-occur during middle adulthood \citep{chmiel2014spreading,hong2024age}. In contrast, the correlation between high blood pressure and high cholesterol exhibits a U-shaped pattern, reaching its minimum around midlife while remaining positive across all ages. This pair therefore exhibits a distinct age-dependent dependence pattern compared with the other condition combinations.

\section{Discussion}\label{sec7}
This study proposes a Bayesian nonparametric framework for multivariate conditional copula regression with varying coefficients. By combining adaptive spline-based marginals with a probit stick-breaking mixture of Gaussian copulas, the model accommodates heterogeneous response types and covariate-dependent dependence in a unified way. Theoretical and numerical results support its flexibility, showing good recovery under correct specification and robust performance under copula misspecification. In the BRFSS 2023 analysis, the fitted model yielded age-varying marginal effects and dependence patterns that were broadly consistent with existing empirical findings, while also providing a coherent joint view of multimorbidity that is not readily captured by separate marginal analyses alone. In particular, the proposed framework allows the strength and pattern of association among multiple health outcomes to evolve smoothly with age, thereby enabling a more nuanced assessment of how health risks and co-occurrence patterns vary with age. These features illustrate the practical value of the model for studying complex population health data with mixed outcome types. Future work may explore richer base copula families for representing more irregular, asymmetric, or tail-sensitive dependence structures.

\bibliographystyle{apalike}
\bibliography{ref}

\newgeometry{
 total={170mm,257mm},
 left=20mm,
 top=30mm,
 right=20mm,
 bottom=25mm
}

\subfile{supplement.tex}

%----------------------------------

\end{document}

%% file: supplement.tex
\title{Supplement to ``Bayesian Nonparametric Modeling  for Multivariate Conditional Copula Regression with Varying Coefficients''}

\date{}
\emptythanks
\maketitle
%----------------------------------

\appendix
\renewcommand{\thesection}{S\arabic{section}}
\renewcommand{\thesubsection}{S\arabic{section}.\arabic{subsection}}
\renewcommand{\thetable}{S\arabic{table}}
\renewcommand{\thefigure}{S\arabic{figure}}
\renewcommand{\theequation}{S\arabic{equation}}
\setcounter{section}{0}
\setcounter{subsection}{0}
\setcounter{table}{0}
\setcounter{figure}{0}
\setcounter{equation}{0}
% --------------------------------

\begingroup
  \fontsize{10}{11}\selectfont

%----------------------------------------------------
\section{Posterior Computation and MCMC Details} \label{app:sec1}

This section provides the details of the MCMC scheme used in the proposed model. 
For each response type, we specify the response-specific likelihood term and the copula-induced conditional density of the latent variables as needed for posterior computation. 
Using these building blocks, we derive the target posterior distribution
$p(\tilde{\boldsymbol{\alpha}}_{\tilde{\boldsymbol{\gamma}}},\tilde{\boldsymbol{\gamma}},\tilde{\mathbf{g}}, \boldsymbol{\xi}, \tilde{\mathbf{z}}, \boldsymbol{\alpha}^*_{\boldsymbol{\gamma}^*},\boldsymbol{\gamma}^*,\mathbf{g}^*,\mathbf{s},\mathbf{z}^*, \mathbf{D}, \mathbf{R}\mid \mathbf{y})$
and present the conditional posterior distributions of all parameters in the order in which they are sampled in the MCMC algorithm.
\paragraph{Step 1: Update of $(\tilde{\boldsymbol{\alpha}}_\gamma, \tilde{\boldsymbol{\gamma}})$.}
The MCMC updates for $(\tilde{\boldsymbol{\alpha}}_\gamma, \tilde{\boldsymbol{\gamma}})$ are constructed according to the response type. 
For Gaussian, binary, and ordinal responses, we adopt a Gaussian latent variable representation, which yields an efficient sampler. 
For response types that do not admit this representation, we use a Metropolis--Hastings update with an appropriately chosen proposal distribution.
In both cases, we introduce a latent variable $z_{ik}$ by transforming the Gaussian copula latent variable $\tilde z_{ik}$ as specified in \eqref{eq:reparamet}.
\begin{itemize}
    \item \textit{Gaussian, binary, and ordinal responses.} For these response types, introducing the latent variable $z_{ik}$ enables Gibbs updates for $(\tilde{\boldsymbol{\alpha}}_{k,\tilde{\boldsymbol{\gamma}}_k}, \tilde{\boldsymbol{\gamma}}_k)$. Accordingly, the joint posterior of $(\tilde{\boldsymbol{\alpha}}_{k,\tilde{\boldsymbol{\gamma}}_k}, \tilde{\boldsymbol{\gamma}}_k)$ can be decomposed as
    \begin{align}
        \label{app:jointposterior_gamma_alpha}p(\tilde{\boldsymbol{\alpha}}_{k,\tilde{\boldsymbol{\gamma}}_k}, \tilde{\boldsymbol{\gamma}}_k \mid \mathrm{rest}) \propto p( \tilde{\boldsymbol{\gamma}}_k \mid \mathrm{rest}\setminus \tilde{\boldsymbol{\alpha}}_{k,\tilde{\boldsymbol{\gamma}}_k}) p(\tilde{\boldsymbol{\alpha}}_{k,\tilde{\boldsymbol{\gamma}}_k} \mid \mathrm{rest}) 
    \end{align}
    We first update $\tilde{\boldsymbol{\gamma}}_k$ componentwise via Gibbs sampling. For the $\ell$th component, 
\begin{align}\label{eq:sup-gamma}
    p({\tilde{\gamma}_{k(\ell)}}\mid   \mathrm{rest}\setminus \tilde{\boldsymbol{\alpha}}_{k,\tilde{\boldsymbol{\gamma}}_k})
    & = p(\tilde{\gamma}_{k(\ell)}\mid \tilde{\boldsymbol{\gamma}}_{k\backslash (\ell )}) \int p(\mathbf{z}_{\cdot k}\mid   \tilde{\boldsymbol{\alpha}}_{k,\tilde{\boldsymbol{\gamma}}_k},\tilde{\boldsymbol{\gamma}}_k,\boldsymbol{{\xi}}_k,\tilde{\mathbf{z}}_{\cdot \backslash k},\mathbf{s},\mathbf{R})p(\tilde{\boldsymbol{\alpha}}_{k,\tilde{\boldsymbol{\gamma}}_k}\mid \tilde{\boldsymbol{\gamma}}_k,\tilde{\mathbf{g}}_k)d\tilde{\boldsymbol{\alpha}}_{k,\tilde{\boldsymbol{\gamma}}_k}.
\end{align}
For Gaussian responses, \eqref{eq:sup-gamma} holds because $\mathbf{z}_{\cdot k}=\mathbf{y}_{\cdot k}$. For binary and ordinal responses, it also holds because $\mathbf{y}_{\cdot k}$ is deterministic given $\mathbf{z}_{\cdot k}$ and $\boldsymbol{\xi}_k$.
Specifically, 
\begin{align*}
    & p({\tilde{\gamma}_{k(\ell)}}\mid   \mathrm{rest}\setminus \tilde{\boldsymbol{\alpha}}_{k,\tilde{\boldsymbol{\gamma}}_k})\nonumber\\& \propto p(\tilde{\gamma}_{k(\ell)}\mid  \tilde{\boldsymbol{\gamma}}_{k\backslash (\ell )})
    p(\mathbf{y}_{\cdot k},\mathbf{z}_{\cdot k}\mid {\tilde{\gamma}_{k(\ell)}},\tilde{\boldsymbol{ \gamma}}_{k\backslash (\ell)}, \boldsymbol{{\xi}}_k,\tilde{\mathbf{z}}_{\cdot \backslash k},\mathbf{s},\mathbf{R}),\nonumber\\
    & = p(\tilde{\gamma}_{k(\ell)}\mid \tilde{\boldsymbol{\gamma}}_{k\backslash (\ell )}) \int p(\mathbf{y}_{\cdot k},\mathbf{z}_{\cdot k}\mid  \tilde{\boldsymbol{\alpha}}_{k,\tilde{\boldsymbol{\gamma}}_k},\tilde{\boldsymbol{\gamma}}_k,\boldsymbol{{\xi}}_k,\tilde{\mathbf{z}}_{\cdot \backslash k},\mathbf{s},\mathbf{R})p(\tilde{\boldsymbol{\alpha}}_{k,\tilde{\boldsymbol{\gamma}}_k}\mid \tilde{\boldsymbol{\gamma}}_k,\tilde {\mathbf{g}}_k)d\tilde{\boldsymbol{\alpha}}_{k,\tilde{\boldsymbol{\gamma}}_k}, \nonumber\\
    & = p(\tilde{\gamma}_{k(\ell)}\mid \tilde{\boldsymbol{\gamma}}_{k\backslash (\ell )})p(\mathbf{y}_{\cdot k}\mid \mathbf{z}_{\cdot k},\boldsymbol{\xi}_k)\int p(\mathbf{z}_{\cdot k}\mid  \tilde{\boldsymbol{\alpha}}_{k,\tilde{\boldsymbol{\gamma}}_k},\tilde{\boldsymbol{\gamma}}_k, \boldsymbol{{\xi}}_k, \tilde{\mathbf{z}}_{\cdot \backslash k},\mathbf{s},\mathbf{R})p(\tilde{\boldsymbol{\alpha}}_{k,\tilde{\boldsymbol{\gamma}}_k}\mid \tilde{\boldsymbol{\gamma}}_k,\tilde {\mathbf{g}}_k) d\tilde{\boldsymbol{\alpha}}_{k,\tilde{\boldsymbol{\gamma}}_k}.\nonumber  
\end{align*}
We now derive the induced latent Gaussian kernel of $\mathbf{z}_{\cdot k}$ under the transformation \eqref{eq:reparamet}.
We begin by stating the full conditional kernel of $\tilde{\mathbf{z}}_{\cdot k}$:
\begin{align*}
     & p(\tilde{\mathbf{z}}_{\cdot k}\mid   \tilde{\boldsymbol{\alpha}}_{k,\tilde{\boldsymbol{\gamma}}_k},\tilde{\boldsymbol{\gamma}}_k,\boldsymbol{\xi}_k,\tilde{\mathbf{z}}_{\cdot \backslash k},\mathbf{s},\mathbf{R}), \\ & \propto \prod_{h=1}^H 
\prod_{i:s_i=h}\exp \!\left\{{1\over 2}\tilde{\mathbf{z}}_i^\top \!\left(\mathbf{I}_m - \mathbf{R}_{s_i}^{-1}\right)\tilde{\mathbf{z}}_i\right\}\prod_{i=1}^n\exp\!\left(-\frac{1}{2}\tilde z_{ik}^2\right),\\
& \propto \prod_{h=1}^H \exp\!\left\{{1\over 2}  \sum_{i:s_i =h} \!\left(1-\!\left(\mathbf R_{s_i}^{-1}\right)_{k,k}\right)\tilde z_{ik}^2 - \sum_{i:s_i =h} \sum_{k':k'\ne k}  \!\left(\mathbf R_{s_i}^{-1}\right)_{k,k'}\tilde z_{ik} \tilde z_{ik'}\right\}\prod_{i=1}^n\exp\!\left(-\frac{1}{2}\tilde z_{ik}^2\right),\\
& \propto \prod_{h=1}^H \exp\!\left\{-{1\over 2}  \sum_{i:s_i =h} \!\left(\mathbf R_{s_i}^{-1}\right)_{k,k}\tilde z_{ik}^2 - \sum_{i:s_i =h} \sum_{k':k'\ne k}  \!\left(\mathbf R_{s_i}^{-1}\right)_{k,k'}\tilde z_{ik} \tilde z_{ik'}\right\},
\end{align*}
where $\overline{\mathbf W}_{ik,\tilde{\boldsymbol{\gamma}}_k}$ denote the $i$th row of
$\overline{\mathbf W}_{k,\tilde{\boldsymbol{\gamma}}_k}$.
By the transformation of \eqref{eq:reparamet},
\begin{align*}
& p({\mathbf{z}}_{\cdot k}\mid \tilde{\boldsymbol{\alpha}}_{k,\tilde{\boldsymbol{\gamma}}_k},
\tilde{\boldsymbol{\gamma}}_k,\boldsymbol{\xi}_k,\tilde{\mathbf{z}}_{\cdot \backslash k},
\mathbf{s},\mathbf{R}) \\
& \propto \prod_{h=1}^H \exp\Biggl\{
-\frac{1}{2}\sum_{i:s_i=h}
\delta_{ik}^{-2}\!\left(\mathbf R_{s_i}^{-1}\right)_{k,k}
\!\left(
z_{ik}-\overline{\mathbf W}_{ik,\tilde{\boldsymbol{\gamma}}_k}
\tilde{\boldsymbol{\alpha}}_{k,\tilde{\boldsymbol{\gamma}}_k}
\right)^2 \\
& \quad 
-\sum_{i:s_i=h}\sum_{k':k'\ne k}
\delta_{ik}^{-1}\!\left(\mathbf R_{s_i}^{-1}\right)_{k,k'}
\!\left(
z_{ik}-\overline{\mathbf W}_{ik,\tilde{\boldsymbol{\gamma}}_k}
\tilde{\boldsymbol{\alpha}}_{k,\tilde{\boldsymbol{\gamma}}_k}
\right)\tilde z_{ik'}
\Biggr\}, \\
& \propto \exp\!\left\{
-\frac{1}{2}\!\left(\mathbf{z}_{\cdot k} - \overline{\mathbf{W}}_{k, \tilde{\boldsymbol{\gamma}}_k}\tilde{\boldsymbol{\alpha}}_{k,\tilde{\boldsymbol{\gamma}}_k}\right)^\top
\boldsymbol{\Psi}_k
\!\left(\mathbf{z}_{\cdot k} - \overline{\mathbf{W}}_{k, \tilde{\boldsymbol{\gamma}}_k}\tilde{\boldsymbol{\alpha}}_{k,\tilde{\boldsymbol{\gamma}}_k}\right)
-\!\left(\mathbf{z}_{\cdot k} - \overline{\mathbf{W}}_{k, \tilde{\boldsymbol{\gamma}}_k}\tilde{\boldsymbol{\alpha}}_{k,\tilde{\boldsymbol{\gamma}}_k}\right)^\top
\boldsymbol{\zeta}_k
\right\}.
\end{align*}
where $\delta_{ik}=1$ for binary and ordinal responses, and $\delta_{ik}=\sqrt{\xi_k}$ for Gaussian responses. The marginal likelihood of $\mathbf{z}_{\cdot k}$ in \eqref{eq:sup-gamma}, obtained by integrating out $\tilde{\boldsymbol{\alpha}}_{k,\tilde{\boldsymbol{\gamma}}_k}$, is given by
\begin{align}
\begin{split}
    & p(\mathbf{z}_{\cdot k}\mid \tilde{\boldsymbol{ \gamma}}_{k},  \tilde{\mathbf{g}}_k, \boldsymbol{\xi}_k, \tilde{\mathbf{z}}_{\cdot \backslash k}, \mathbf{s},\mathbf{R}) \\ 
    & = \int p(\mathbf{z}_{\cdot k}\mid   \tilde{\boldsymbol{\alpha}}_{k,\tilde{\boldsymbol{\gamma}}_k},\tilde{\boldsymbol{\gamma}}_k,\boldsymbol{\xi}_k,\tilde{\mathbf{z}}_{\cdot \backslash k},\mathbf{s},\mathbf{R})p(\tilde{\boldsymbol{\alpha}}_{k,\tilde{\boldsymbol{\gamma}}_k}\mid \tilde{\boldsymbol{\gamma}}_k,\tilde{\mathbf{g}}_k)d\tilde{\boldsymbol{\alpha}}_{k,\tilde{\boldsymbol{\gamma}}_k}, \\
    & \propto \int \exp\!\left\{-{1\over 2}\!\left(\mathbf{z}_{\cdot k} - \overline{\mathbf{W}}_{k, \tilde{\boldsymbol{\gamma}}_k}\tilde{\boldsymbol{\alpha}}_{k,\tilde{\boldsymbol{\gamma}}_k}\right)^\top\boldsymbol{\Psi}_k\!\left(\mathbf{z}_{\cdot k} - \overline{\mathbf{W}}_{k, \tilde{\boldsymbol{\gamma}}_k}\tilde{\boldsymbol{\alpha}}_{k,\tilde{\boldsymbol{\gamma}}_k}\right) -(\mathbf{z}_{\cdot k} - \overline{\mathbf{W}}_{k, \tilde{\boldsymbol{\gamma}}_k}\tilde{\boldsymbol{\alpha}}_{k,\tilde{\boldsymbol{\gamma}}_k})^\top  {\boldsymbol{\zeta}}_k \right\} \\& 
   \quad \times \!\left|\mathbf{V}_{{0}}\right|^{-{1\over 2}}\exp\!\left\{-{1\over 2} \tilde{\boldsymbol{\alpha}}_{k,\tilde{\boldsymbol{\gamma}}_k}^\top \mathbf{V}_{0}^{-1}  \tilde{\boldsymbol{\alpha}}_{k,\tilde{\boldsymbol{\gamma}}_k} \right\} \tilde{\boldsymbol{\alpha}}_{k,\tilde{\boldsymbol{\gamma}}_k}, \\
   & \propto \!\left|\overline{\mathbf{W}}_{k,\tilde{\boldsymbol{\gamma}}_k^\star}^\top \boldsymbol{\Psi}_k\overline{\mathbf{W}}_{k,\tilde{\boldsymbol{\gamma}}_k^\star}  + \mathbf{V}^{-1}_{0}\right|^{-{1\over 2}}\!\left|\mathbf{V}_{{0}}\right|^{-{1\over 2}}\\
   & \quad \times \exp\!\left\{{1\over2 }\!\left({\boldsymbol{\zeta}}_k+\boldsymbol{\Psi}_k\mathbf{z}_{\cdot k}\right)^\top \overline{\mathbf{W}}_{k, \tilde{\boldsymbol{\gamma}}_k}\!\left(\overline{\mathbf{W}}_{k,\tilde{\boldsymbol{\gamma}}_k^\star}^\top \boldsymbol{\Psi}_k\overline{\mathbf{W}}_{k,\tilde{\boldsymbol{\gamma}}_k^\star}  + \mathbf{V}^{-1}_{0}\right)^{-1}\overline{\mathbf{W}}_{k, \tilde{\boldsymbol{\gamma}}_k}^{\top}\!\left({\boldsymbol{\zeta}}_k+\boldsymbol{\Psi}_k\mathbf{z}_{\cdot k}\right)\right\}.
\end{split}
\label{eq:lhd_gamma}
\end{align}
By substituting \eqref{eq:lhd_gamma} into \eqref{eq:sup-gamma}, the $\ell$th entry of $\tilde{\boldsymbol{\gamma}}_k$ is updated according to the following posterior inclusion probability:
\begin{align*}
    & \mathrm{Pr}(\tilde{\gamma}_{k(\ell)}=1\mid  \mathrm{rest}\setminus \tilde{\boldsymbol{\alpha}}_{k,\tilde{\boldsymbol{\gamma}}_k})   = \!\left( 1 + {p(\mathbf{z}_{\cdot k}\mid  {\tilde{\gamma}_{k(\ell)}}=0, \tilde{\boldsymbol{ \gamma}}_{k\backslash (\ell)},\tilde{\mathbf{g}}_k, \boldsymbol{{\xi}}_k,\tilde{\mathbf{z}}_{\cdot \backslash k}, \mathbf{s},\mathbf{R})p(\tilde{\gamma}_{k(\ell)}=0 \mid \tilde{\boldsymbol{\gamma}}_{k\backslash (\ell )})\over p(\mathbf{z}_{\cdot k} \mid  {\tilde{\gamma}_{k(\ell)}}=1, \tilde{\boldsymbol{ \gamma}}_{k\backslash (\ell)},\tilde{\mathbf{g}}_k, \boldsymbol{{\xi}}_k,\tilde{\mathbf{z}}_{\cdot \backslash k}, \mathbf{s},\mathbf{R})p(\tilde{\gamma}_{k(\ell)}=1\mid \tilde{\boldsymbol{\gamma}}_{k\backslash (\ell )})}\right)^{-1},
\end{align*}
where 
$p(\tilde{\gamma}_{k(\ell)}=1\mid \tilde{\boldsymbol{\gamma}}_{k\backslash (\ell )})=(1-\varpi_{jk})(|\tilde{\boldsymbol\gamma}_{jk\backslash \tilde{\ell}}|+1) (L-|\tilde{\boldsymbol\gamma}_{jk\backslash \tilde{\ell}}|  + (1-\varpi_{jk})(|\tilde{\boldsymbol\gamma}_{jk\backslash \tilde{\ell}}|+1))^{-1}$. Here, \(j\) is the predictor-block index such that
\(\ell \in \{(j-1)L+1,\ldots,jL\}\), and
\(\tilde{\ell}=\ell-(j-1)L\) is the within-block index.
Accordingly, \(|\tilde{\boldsymbol\gamma}_{jk\backslash \tilde{\ell}}|\) denotes the number of selected knots
in the \(j\)th predictor block excluding the \(\tilde{\ell}\)th component.

Given the updated  $\tilde{\boldsymbol{\gamma}}_k$, the conditional posterior of $\tilde{\boldsymbol{\alpha}}_{k,\tilde{\boldsymbol{\gamma}}_k}$ is Gaussian, and is derived as follows.
\begin{align*}
    p(& \tilde{\boldsymbol{\alpha}}_{k,\tilde{\boldsymbol{\gamma}}_k}\mid   \mathrm{rest}), \\
    &  \propto p(\mathbf{z}_{\cdot k}\mid  \tilde{\boldsymbol{\alpha}}_{k,\tilde{\boldsymbol{\gamma}}_k},\tilde{\boldsymbol{\gamma}}_k,{\boldsymbol{\xi}}_k,\tilde{\mathbf{z}}_{\cdot \backslash k}, \mathbf{s},\mathbf{R})p(\tilde{\boldsymbol{\alpha}}_{k,\tilde{\boldsymbol{\gamma}}_k}\mid \tilde{\boldsymbol{\gamma}}_k, \tilde{\mathbf{g}}_k),\\
    & \propto \exp\!\left\{-{1\over 2}\!\left(\mathbf{z}_{\cdot k} - \overline{\mathbf{W}}_{k, \tilde{\boldsymbol{\gamma}}_k}\tilde{\boldsymbol{\alpha}}_{k,\tilde{\boldsymbol{\gamma}}_k}\right)^\top\boldsymbol{\Psi}_k\!\left( \mathbf{z}_{\cdot k} -  \overline{\mathbf{W}}_{k, \tilde{\boldsymbol{\gamma}}_k}\tilde{\boldsymbol{\alpha}}_{k,\tilde{\boldsymbol{\gamma}}_k}\right) -(\mathbf{z}_{\cdot k} - \overline{\mathbf{W}}_{k, \tilde{\boldsymbol{\gamma}}_k}\tilde{\boldsymbol{\alpha}}_{k,\tilde{\boldsymbol{\gamma}}_k})^\top {\boldsymbol{\zeta}}_k \right\}\\
    & \quad \times \exp\!\left\{-{1\over 2} \tilde{\boldsymbol{\alpha}}_{k,\tilde{\boldsymbol{\gamma}}_k}^\top \mathbf{V}_{0}^{-1}  \tilde{\boldsymbol{\alpha}}_{k,\tilde{\boldsymbol{\gamma}}_k} \right\}.
\end{align*}
Therefore, 
\begin{gather}
    \label{eq:condpost_alpha}\tilde{\boldsymbol{\alpha}}_{k,\tilde{\boldsymbol{\gamma}}_k}\mid   \mathrm{rest} \sim \text{N}_{2+\sum_j|\tilde{\boldsymbol{\gamma}}_{jk}|}(\boldsymbol{\mu}_{\tilde{\boldsymbol{\alpha}}_k},\mathbf{V}_{\tilde{\boldsymbol{\alpha}}_k}),\nonumber
\end{gather}
where
\begin{align*}
\mathbf{V}_{\tilde{\boldsymbol{\alpha}}_k} & = \!\left(\overline{\mathbf{W}}_{k,\tilde{\boldsymbol{\gamma}}_k^\star}^\top \boldsymbol{\Psi}_k\overline{\mathbf{W}}_{k,\tilde{\boldsymbol{\gamma}}_k}  + \mathbf{V}^{-1}_{0}\right)^{-1},\\
  \boldsymbol{\mu}_{\tilde{\boldsymbol{\alpha}}_k} & = \mathbf{V}_{\tilde{\boldsymbol{\alpha}}_k}\overline{\mathbf{W}}_{k,\tilde{\boldsymbol{\gamma}}_k}^\top\!\left({\boldsymbol{\zeta}}_k+ \boldsymbol{\Psi}_k \mathbf{z}_{\cdot k}\right).
\end{align*}

%-------------------

\item \textit{Binomial, negative binomial, and gamma responses.} We utilize a Metropolis-Hastings step to jointly sample $(\tilde{\boldsymbol{\alpha}}_{k,\tilde{\boldsymbol{\gamma}}_k}, \tilde{\boldsymbol{\gamma}}_k, \tilde{\mathbf{z}}_{\cdot k})$ since marginalization is intractable.  The target conditional posterior density is
\begin{align} 
\begin{split}
p(\tilde{\boldsymbol{\alpha}}_{k,\tilde{\boldsymbol{\gamma}}_k},\tilde{\boldsymbol{\gamma}}_k, \tilde{\mathbf{z}}_{\cdot k}\mid \mathrm{rest}) 
& \propto  p(\mathbf{y}_{\cdot k}\mid \tilde{\boldsymbol{\alpha}}_{k,\tilde{\boldsymbol{\gamma}}_k},\tilde{\boldsymbol{\gamma}}_k, \boldsymbol{\xi}_k, \tilde{\mathbf{z}}_{\cdot k})p(\tilde{\mathbf{z}}_{\cdot k} \mid\tilde{\mathbf{z}}_{\cdot \backslash k},\mathbf s, \mathbf R )p(\tilde{\boldsymbol{\alpha}}_{k,\tilde{\boldsymbol{\gamma}}_k}\mid\tilde{\boldsymbol{\gamma}}_k,\tilde{\mathbf{g}}_k)p(\tilde{\boldsymbol{\gamma}}_k).
\label{app:MH_target}
\end{split}
\end{align} 
After proposing $\tilde{\boldsymbol{\gamma}}_k$ by randomly selecting one entry and setting it to 0 or 1 with equal probability, we propose $\tilde{\boldsymbol{\alpha}}_{k,\tilde{\boldsymbol{\gamma}}_k}$ from the Gaussian distribution in \eqref{eqn:gaussalpha}, and $\tilde{\mathbf z}_{\cdot k}$ from $p(\tilde{\mathbf z}_{\cdot k}\mid \mathrm{rest})$, the full conditional distribution described in Step~\ref{par:tildez-update}.
 The acceptance probability is as follows:
 \begin{align}
 1\land \frac{p(\mathbf{y}_{\cdot k} \mid\tilde{\boldsymbol{\alpha}}_{k,\tilde{\boldsymbol{\gamma}}_k^\star}^\star,\tilde{\boldsymbol{\gamma}}_k^\star,  \boldsymbol{{\xi}}_k,\tilde{\mathbf{z}}_{\cdot \backslash k},\mathbf s,\mathbf R)p(\tilde{\boldsymbol{\alpha}}_{k,\tilde{\boldsymbol{\gamma}}_k^\star}^\star\mid\tilde{\boldsymbol{\gamma}}_k^\star,\tilde{\mathbf{g}}_k)
p(\tilde{\boldsymbol{\gamma}}_k^\star) q(\tilde{\boldsymbol{\alpha}}_{k,\tilde{\boldsymbol{\gamma}}_k}\mid\tilde{\boldsymbol{\gamma}}_k,\tilde{\mathbf{g}}_k, \boldsymbol{\xi}_k,  \mathbf{z}_{\cdot k}^\star,\tilde{\mathbf{z}}_{\cdot \backslash k},\mathbf s,\mathbf R)}{p(\mathbf{y}_{\cdot k} \mid\tilde{\boldsymbol{\alpha}}_{k,\tilde{\boldsymbol{\gamma}}_k},\tilde{\boldsymbol{\gamma}}_k, \boldsymbol{{\xi}}_k,\tilde{\mathbf{z}}_{\cdot \backslash k},\mathbf s,\mathbf R)p(\tilde{\boldsymbol{\alpha}}_{k,\tilde{\boldsymbol{\gamma}}_k}\mid\tilde{\boldsymbol{\gamma}}_k,\tilde{\mathbf{g}}_k)
p(\tilde{\boldsymbol{\gamma}}_k) q(\tilde{\boldsymbol{\alpha}}^\star_{k,\tilde{\boldsymbol{\gamma}}_k^\star}\mid\tilde{\boldsymbol{\gamma}}_k^\star,\tilde{\mathbf{g}}_k, \boldsymbol{\xi}_k,  \mathbf{z}_{\cdot k},\tilde{\mathbf{z}}_{\cdot \backslash k},\mathbf s,\mathbf R)},
\label{app:acceptratio}
\end{align}
where $\mathbf{z}_{\cdot k}^\star = (z^\star_{1k},\ldots, z^\star_{nk})^\top \in \mathbb R^n$ and ${z}^\star_{ik} = \delta_{ik}\tilde{z}^\star_{ik} + \overline{\mathbf{W}}_{ik, \boldsymbol{\tilde \gamma}^\star_k}\tilde{\boldsymbol{\alpha}}^\star_{k,\tilde{\boldsymbol{\gamma}}_{k}^\star}$.
The acceptance probability of the form given in \eqref{app:acceptratio} is obtained because 
\begin{align}
    \frac{p(\mathbf{y}_{\cdot k} \mid\tilde{\boldsymbol{\alpha}}_{k,\tilde{\boldsymbol{\gamma}}_k^\star}^\star,\tilde{\boldsymbol{\gamma}}_k^\star,  \boldsymbol{{\xi}}_k,\tilde{\mathbf{z}}_{\cdot  k}^\star)p(\tilde{\mathbf{z}}_{\cdot k}^\star \mid \tilde{\mathbf{z}}_{\cdot \backslash k} ,\mathbf{s}, \mathbf{R})p(\tilde{\mathbf{z}}_{\cdot k} \mid \mathrm{rest})
    }{p(\mathbf{y}_{\cdot k} \mid\tilde{\boldsymbol{\alpha}}_{k,\tilde{\boldsymbol{\gamma}}_k},\tilde{\boldsymbol{\gamma}}_k,  \boldsymbol{{\xi}}_k,\tilde{\mathbf{z}}_{\cdot  k})p(\tilde{\mathbf{z}}_{\cdot k} \mid \tilde{\mathbf{z}}_{\cdot \backslash k} ,\mathbf{s}, \mathbf{R})p(\tilde{\mathbf{z}}_{\cdot k}^\star \mid {\mathrm{rest}^\star})
} = \frac{
p(\mathbf{y}_{\cdot k} \mid\tilde{\boldsymbol{\alpha}}_{k,\tilde{\boldsymbol{\gamma}}_k^\star}^\star,\tilde{\boldsymbol{\gamma}}_k^\star,  \boldsymbol{{\xi}}_k,\tilde{\mathbf{z}}_{\cdot \backslash k},\mathbf s,\mathbf R)
}
{
p(\mathbf{y}_{\cdot k} \mid\tilde{\boldsymbol{\alpha}}_{k,\tilde{\boldsymbol{\gamma}}_k},\tilde{\boldsymbol{\gamma}}_k,  \boldsymbol{{\xi}}_k,\tilde{\mathbf{z}}_{\cdot \backslash k},\mathbf s,\mathbf R)
}.
\end{align}
Here, $\mathrm{rest}$ denotes the conditioning set in the full conditional distribution
of $\tilde{\mathbf z}_{\cdot k}$ at the current state, and ${\mathrm{rest}^\star}$
denotes the same conditioning set evaluated at the proposed state, that is, with
$(\tilde{\boldsymbol{\alpha}}_{k,\tilde{\boldsymbol{\gamma}}_k},
\tilde{\boldsymbol{\gamma}}_k)$ replaced by
$(\tilde{\boldsymbol{\alpha}}^\star_{k,\tilde{\boldsymbol{\gamma}}^\star_k},
\tilde{\boldsymbol{\gamma}}^\star_k)$. 
The conditional likelihood  $p(\mathbf{y}_{\cdot k} \mid\tilde{\boldsymbol{\alpha}}_{k,\tilde{\boldsymbol{\gamma}}_k},\tilde{\boldsymbol{\gamma}}_k,  \boldsymbol{{\xi}}_k,\tilde{\mathbf{z}}_{\cdot \backslash k},\mathbf s,\mathbf R)$ in \eqref{app:acceptratio} is specified separately for discrete and continuous responses. For discrete responses, the conditional likelihood is obtained by marginalizing out $\tilde{\boldsymbol{z}}_{\cdot k}$, which yields
\begin{align} \label{app:musigmaeq}
    p(\mathbf{y}_{\cdot k} \mid \tilde{\boldsymbol{\alpha}}_{k,\tilde{\boldsymbol{\gamma}}_k},\tilde{\boldsymbol{\gamma}}_k,  \boldsymbol{\xi}_k,\tilde{\mathbf{z}}_{\cdot \backslash k},\mathbf{s},\mathbf{R})   \propto \prod_{i=1}^n \!\left\{ \Phi\!\left( 
{u_{ik} -\tilde{\mu}_{i,s_i(k,\backslash k)} \over \tilde \sigma_{s_i(k,\backslash  k)}}\right) - \Phi \!\left( 
 {l_{ik} -\tilde{\mu}_{i,s_i(k,\backslash  k)} \over \tilde \sigma_{s_i(k,\backslash k)}} \right)\right\},
\end{align}
where $l_{ik} = \Phi^{-1}\{F_k(y_{ik}-1;{\boldsymbol{\beta}}_{k}, {\boldsymbol{\xi}}_k,\mathbf{x}_i,t_i )\}$ and $u_{ik} = \Phi^{-1}\{F_k(y_{ik};{\boldsymbol{\beta}}_{k}, {\boldsymbol{\xi}}_k,\mathbf{x}_i,t_i )\}$ with $\tilde{\mu}_{i,s_i(k,\backslash k)} = (\mathbf{R}_{s_i})_{k, \backslash  k}(\mathbf{R}_{s_i})^{-1}_{\backslash  k, \backslash  k}\tilde{\mathbf{z}}_{i,\backslash k}$ and $\tilde{\sigma}_{s_i(k, \backslash  k)}^2 = 1-(\mathbf{R}_{s_i})_{k, \backslash  k}(\mathbf{R}_{s_i})^{-1}_{\backslash  k,\backslash  k}(\mathbf{R}_{s_i})_{\backslash  k,k}$. Note that $(\mathbf{R}_{s_i})_{k,\backslash k}$ denotes the $k$th row of matrix $\mathbf{R}_{s_i}$ excluding its $k$th element,
$(\mathbf{R}_{s_i})^{-1}_{\backslash  k,\backslash  k}$ denotes the inverse of the submatrix of $\mathbf{R}_{s_i}$ obtained by removing its $k$-th row and $k$-th column,
$\tilde{\mathbf{z}}_{i,\backslash k}$ is the vector $\tilde{\mathbf{z}}_{i}$ excluding its $k$th component.
For continuous responses, it coincides with the usual conditional likelihood:
\begin{align}
\begin{split}\label{app:contilikelihood}
&p(\mathbf{y}_{\cdot k} \mid  \tilde{\boldsymbol{\alpha}}_{k,\tilde{\boldsymbol{\gamma}}_k},\tilde{\boldsymbol{\gamma}}_k,   \boldsymbol{{\xi}}_k, \tilde{\mathbf{z}}_{\cdot \backslash k},\mathbf{s},\mathbf{R}) \\ 
&  \propto \prod_{h=1}^H 
\!\left(\prod_{i:s_i=h}\exp \!\left\{{1\over 2}\tilde{\mathbf{z}}_i^\top\!\left(\mathbf{I}_m - \mathbf{R}_{s_i}^{-1}\right)\tilde{\mathbf{z}}_i\right\} \right)
\prod_{i=1}^n f_k(y_{ik}; {\boldsymbol{\beta}}_{k}, {\boldsymbol{\xi}}_k,\mathbf{x}_i,t_i), \\
&  \propto \prod_{h=1}^H \exp\!\left\{{1\over 2}  \sum_{i:s_i =h}\!\left(1-\!\left(\mathbf R_{s_i}^{-1}\right)_{k,k}\right) \tilde z_{ik}^2 - \sum_{i:s_i =h} \sum_{k':k'\ne k}  \!\left(\mathbf R_{s_i}^{-1}\right)_{k,k'}\tilde z_{ik} \tilde z_{ik'}\right\} \prod_{i=1}^n f_k(y_{ik}; {\boldsymbol{\beta}}_{k}, {\boldsymbol{\xi}}_k,\mathbf{x}_i,t_i),
\end{split}
\end{align}  
where $(\mathbf{R}_{s_i}^{-1})_{k,k'}$ denotes the $(k,k')$th element of the matrix $\mathbf{R}_{s_i}^{-1}$, $\tilde z_{ik} = \Phi^{-1}\{F_k( y_{ik};{\boldsymbol{\beta}}_{k}, {\boldsymbol{\xi}}_k,\mathbf{x}_i,t_i )\}$, and \(f_k(\cdot ;\boldsymbol \beta_k,\boldsymbol\xi_k,\mathbf x_i,t_i)\)
denotes the density for the $k$th response.
\end{itemize}

\paragraph{Step 2: Update of  $\tilde{\mathbf{g}}$.}
The update for $\tilde{\mathbf{g}}_k$ is identical across all response types and is carried out via Gibbs sampling by semi-conjugacy.
For $j=1,\ldots,p$ and $k=1,\ldots,m$,
\begin{align*}
    p(\tilde{{g}}_{jk}\mid   \mathrm{rest}) 
    & \propto p(\tilde{\boldsymbol{\alpha}}_{k,\tilde{\boldsymbol{\gamma}}_k}\mid \tilde{\boldsymbol{\gamma}}_k, \tilde{\mathbf{g}}_k)p(\tilde g_{jk}),\\
    & \propto (\tilde g_{jk})^{-{1\over 2}}\!\left|\tilde g_{jk}(\tilde{\mathbf{W}}^\top_{jk,\boldsymbol{\tilde \gamma}_{jk}}\tilde{\mathbf{W}}_{jk,\boldsymbol{\tilde \gamma}_{jk}})^{-1}\right|^{-{1\over 2}}\\
    & \quad \times \exp\!\left\{-{1\over 2\tilde g_{jk}}\tilde{\boldsymbol{\alpha}}_{jk\backslash 0, \tilde{\boldsymbol{\gamma}}_{jk}}^\top\tilde{\mathbf{W}}^\top_{jk,\boldsymbol{\tilde \gamma}_{jk}}\tilde{\mathbf{W}}_{jk,\boldsymbol{\tilde \gamma}_{jk}}\tilde{\boldsymbol{\alpha}}_{jk\backslash 0, \tilde{\boldsymbol{\gamma}}_{jk}}\right\}\exp\!\left\{-{n\over 2\tilde g_{jk}}\right\}.
\end{align*}
Therefore, 
\begin{align*}
   \tilde{g}_{jk}\mid \mathrm{rest} \sim \text{IG}\!\left({1\over 2}\!\left({|\boldsymbol{\tilde\gamma}_{jk}|+2}\right),{1\over 2}\!\left({\tilde{\boldsymbol{\alpha}}_{jk\backslash 0, \tilde{\boldsymbol{\gamma}}_{jk}}^\top\tilde{\mathbf{W}}^\top_{jk,\boldsymbol{\tilde \gamma}_{jk}}\tilde{\mathbf{W}}_{jk,\boldsymbol{\tilde \gamma}_{jk}}\tilde{\boldsymbol{\alpha}}_{jk\backslash 0, \tilde{\boldsymbol{\gamma}}_{jk}}+n}\right)\right).
\end{align*}

%------------------------------
\paragraph{Step 3: Update of  $\boldsymbol{\xi}$.}

Each marginal response is associated with an additional distribution-specific
parameter $\boldsymbol{\xi}_k$, which is updated by a Metropolis--Hastings step.
Given the current state, we draw a proposal $\boldsymbol{\xi}_k^\star$ from a
response-specific proposal density $q_k$. The proposed value is then accepted with
probability
\begin{align}
\label{app:xiupdate-likelihood}
1 \land
\frac{
p(\mathbf{y}_{\cdot k} \mid
\tilde{\boldsymbol{\alpha}}_{k,\tilde{\boldsymbol{\gamma}}_k},
\tilde{\boldsymbol{\gamma}}_k,
\boldsymbol{\xi}_k^\star,
\tilde{\mathbf z}_{\cdot\backslash k},
\mathbf s,\mathbf R)\,
p(\boldsymbol{\xi}_k^\star)\,
q_k(\boldsymbol{\xi}_k \mid \boldsymbol{\xi}_k^\star)
}{
p(\mathbf{y}_{\cdot k} \mid
\tilde{\boldsymbol{\alpha}}_{k,\tilde{\boldsymbol{\gamma}}_k},
\tilde{\boldsymbol{\gamma}}_k,
\boldsymbol{\xi}_k,
\tilde{\mathbf z}_{\cdot\backslash k},
\mathbf s,\mathbf R)\,
p(\boldsymbol{\xi}_k)\,
q_k(\boldsymbol{\xi}_k^\star \mid \boldsymbol{\xi}_k)
}.
\end{align}

For Gaussian marginals, $\xi_k$ is the variance parameter. We assign the prior $\xi_k \sim \mathrm{IG}({\varepsilon}, {\varepsilon})$,
where $\varepsilon > 0$ is a small positive hyperparameter, and use the independence proposal
\[
\xi_k^\star \sim q_k\!\left(\cdot \mid
\mathbf y_{\cdot k},
\tilde{\boldsymbol{\alpha}}_{k,\tilde{\boldsymbol{\gamma}}_k},
\tilde{\boldsymbol{\gamma}}_k\right)
=
\mathrm{IG}(a_{\xi_k},b_{\xi_k}),
\]
where
\begin{align*}
a_{\xi_k} &= \frac{1}{2}n+\varepsilon,\\
b_{\xi_k} &=
\frac{1}{2}
(\mathbf y_{\cdot k}
-\overline{\mathbf W}_{k,\tilde{\boldsymbol{\gamma}}_k}
\tilde{\boldsymbol{\alpha}}_{k,\tilde{\boldsymbol{\gamma}}_k})^\top
(\mathbf y_{\cdot k}
-\overline{\mathbf W}_{k,\tilde{\boldsymbol{\gamma}}_k}
\tilde{\boldsymbol{\alpha}}_{k,\tilde{\boldsymbol{\gamma}}_k})
+\varepsilon.
\end{align*}
This proposal is derived from the copula-free model, and the update is an independence-chain Metropolis-Hastings step.

For gamma and negative-binomial marginals, $\xi_k$ denotes the shape parameter and the size parameter, respectively. In both cases, we use the log-scale random-walk proposal
\[
\log \xi_k^\star \mid \log \xi_k \sim \mathrm N(\log \xi_k,\tau_{\xi_k}^2).
\]

For ordinal probit marginals with $B_k$ categories,
$\boldsymbol{\xi}_k=(\xi_{k1},\dots,\xi_{k,B_k-1})^\top$ denotes the threshold vector, with $\xi_{k1}=0$ fixed for identifiability. Let
\[
\xi^\ast_{kb}=\log(\xi_{k,b+1}-\xi_{kb}), \quad b=1,\dots,B_k-2.
\]
We update the transformed thresholds jointly using the proposal
\[
\boldsymbol{\xi}_k^{\ast\star} \mid \boldsymbol{\xi}_k^\ast
\sim \mathrm N\bigl(\boldsymbol{\xi}_k^\ast,\,\tau_{\xi_k}^2\,\mathbf I_{B_k-2}\bigr).
\]

The likelihood term in \eqref{app:xiupdate-likelihood} is evaluated via \eqref{app:contilikelihood} for Gaussian and gamma responses, whereas \eqref{app:musigmaeq} is used for the negative binomial and ordinal probit cases. For random-walk proposals, the tuning parameter $\tau_{\xi_k}$ is adaptively adjusted during burn-in according to the observed acceptance probability and is fixed thereafter.

%----------------------------
\paragraph{Step 4: Update of ${\tilde{\mathbf{z}}}$.}\label{app:paragraph_step_tildez}
The Gaussian copula latent variable $\tilde{\mathbf{z}}$ maintains a one‐to‐one correspondence with continuous responses; however, for discrete responses, this mapping becomes many‐to‐one, so we update $\tilde{\mathbf{z}}$ by stochastically sampling from its conditional posterior distribution, which is given by
\begin{align*}
     p(\tilde z_{ik} \mid \mathrm{rest}) 
    & \propto p(y_{ik} \mid \tilde{\boldsymbol{ \alpha}}_{k,\tilde{\boldsymbol{ \gamma}}_k}, \tilde{\boldsymbol{ \gamma}}_k ,\boldsymbol{\xi}_k, \tilde{{z}}_{ik})p(\tilde z_{ik}\mid \tilde{\mathbf{z}}_{i,\backslash k},{s}_i,\mathbf{R}_{s_i}),\\
    & \propto \mathbb{I}\{l_{ik}<\tilde{z}_{ik}\le u_{ik}\}p(\tilde z_{ik}\mid \tilde{\mathbf{z}}_{i,\backslash k},{s}_i,\mathbf{R}_{s_i}),\\
    & \propto \mathrm N(\tilde z_{ik};\tilde{\mu}_{is_i(k,\backslash k)}, \tilde{\sigma}^2_{s_i(k,\backslash k)})\mathbb{I}\{ \tilde z_{ik} \in (l_{ik},u_{ik}]\},
\end{align*}
where 
\begin{align} \label{eq:discrete_y}
    p(y_{ik}\mid \tilde{\boldsymbol{\alpha}}_{k,\tilde{\boldsymbol{\gamma}}_k},\tilde{\boldsymbol{\gamma}}_k,\boldsymbol{\xi}_k, \tilde z_{ik})  & =\begin{cases}
1, & \mbox{if } F_k(y_{ik}-1;{\boldsymbol{\beta}}_{k}, {\boldsymbol{\xi}}_k,\mathbf{x}_i,t_i )< \Phi(\tilde z_{ik})\le F_k(y_{ik};{\boldsymbol{\beta}}_{k}, {\boldsymbol{\xi}}_k,\mathbf{x}_i,t_i ), \\
0, & \mbox{otherwise}.
\end{cases}
\end{align}
Here, $l_{ik}$, $u_{ik}$, $\tilde{\mu}_{i,s_i(k,\backslash k)}$, and $\tilde{\sigma}_{s_i(k,\backslash k)}^2$ are defined as in \eqref{app:musigmaeq}.

\paragraph{Step 5: Update of $\mathbf{s}$.}
We derive the posterior allocation probabilities for the indicator variables $\mathbf{s}$. For each $i=1,\ldots,n$, 
\begin{align*}
\Pr(s_i=h \mid \mathrm{rest})
&\propto p(\mathbf{y}_i \mid \tilde{\boldsymbol{\alpha}}_{\tilde{\boldsymbol{\gamma}}}, \tilde{\boldsymbol{\gamma}}, \boldsymbol{\xi}, \tilde{\mathbf{z}}_i)\,
p(\tilde{\mathbf{z}}_i \mid s_i=h,\mathbf{R}_h)\,
\Pr(s_i=h \mid \boldsymbol{\alpha}^*_{h,\boldsymbol{\gamma}_h^*},\boldsymbol{\gamma}_h^*) \\
&\propto p(\tilde{\mathbf{z}}_i \mid s_i=h,\mathbf{R}_h)\,
\Pr(s_i=h \mid \boldsymbol{\alpha}^*_{h,\boldsymbol{\gamma}_h^*},\boldsymbol{\gamma}_h^*) \\
&\propto |\mathbf{R}_h|^{-1/2}
\exp\!\left\{-\frac{1}{2}\tilde{\mathbf{z}}_i^\top \mathbf{R}_h^{-1}\tilde{\mathbf{z}}_i\right\}\pi_h(t_i),
\end{align*}
where
$
\pi_h(t_i)=\Phi(f_h(t_i))\prod_{\eta<h}[1-\Phi(f_\eta(t_i))],
$
with
$
f_\eta(t_i)=\alpha_{\eta0}^\ast+\boldsymbol{\alpha}_{\eta\backslash0,\boldsymbol{\gamma}_\eta^*}^{\ast\top}\mathbf{b}_{\eta,\boldsymbol{\gamma}_\eta^*}(t_i)$. 
Therefore, for $h=1,\ldots,H$, the normalized posterior allocation probability is
\begin{align*}
\Pr(s_i=h \mid \mathrm{rest})
=
\frac{
|\mathbf{R}_h|^{-1/2}
\exp\!\left\{-\frac{1}{2}\tilde{\mathbf{z}}_i^\top \mathbf{R}_h^{-1}\tilde{\mathbf{z}}_i\right\}\pi_h(t_i)
}{
\sum_{\eta=1}^H
|\mathbf{R}_\eta|^{-1/2}
\exp\!\left\{-\frac{1}{2}\tilde{\mathbf{z}}_i^\top \mathbf{R}_\eta^{-1}\tilde{\mathbf{z}}_i\right\}\pi_\eta(t_i)
}.
\end{align*}
We then sample $s_i$ from the multinomial distribution with these probabilities.

\paragraph{Step 6: Update of $\mathbf{z}^*$.}
Conditional on the updated $\mathbf{s}$, the latent variables $\mathbf{z}^*$ are sampled independently from truncated normal distributions. Their conditional distribution factorizes as
\begin{align*}
    p(\mathbf{z}^* \mid \mathrm{rest})
    = \prod_{i=1}^n \prod_{\eta=1}^{s_i}
    p(z_{i\eta}^* \mid \boldsymbol{\alpha}^*_{\eta,\boldsymbol{\gamma}_\eta^*}, \boldsymbol{\gamma}_\eta^*, s_i, \mathbf{R}_{s_i}).
\end{align*}
For $\eta=s_i$, the conditional density is
\begin{align*}
    p(z_{i\eta}^* \mid \boldsymbol{\alpha}^*_{\eta,\boldsymbol{\gamma}_\eta^*}, \boldsymbol{\gamma}_\eta^*, s_i, \mathbf{R}_{s_i})
    \propto
    \exp\!\left\{-\frac{1}{2}(z_{i\eta}^* - f_\eta(t_i))^2\right\}\mathbb{I}(z_{i\eta}^* \ge 0).
\end{align*}
For $\eta < s_i$, the conditional density is
\begin{align*}
    p(z_{i\eta}^* \mid \boldsymbol{\alpha}^*_{\eta,\boldsymbol{\gamma}_\eta^*}, \boldsymbol{\gamma}_\eta^*, s_i, \mathbf{R}_{s_i})
    \propto
    \exp\!\left\{-\frac{1}{2}(z_{i\eta}^* - f_\eta(t_i))^2\right\}\mathbb{I}(z_{i\eta}^* < 0).
\end{align*}
Therefore, for $i=1,\ldots,n$ and $\eta=1,\ldots,s_i$,
\begin{gather*}
    z_{i\eta}^* \mid \mathrm{rest} \sim
    \begin{cases}
        \mathrm{TN}_{(0,\infty)}\bigl(f_\eta(t_i),1\bigr), & \text{if } \eta=s_i,\\
        \mathrm{TN}_{(-\infty,0]}\bigl(f_\eta(t_i),1\bigr), & \text{if } \eta<s_i.
    \end{cases}
\end{gather*}

\paragraph{Step 7: Update of $(\boldsymbol{\alpha}^*_{\boldsymbol{\gamma}^*},\boldsymbol{\gamma}^*)$.}
By allowing the latent variable $\mathbf{z}^*$, 
we update $(\boldsymbol{\alpha}_{h,\boldsymbol{\gamma}_h^*}^*,\boldsymbol{\gamma}^*_h)$ using Gibbs sampling. For $h = 1, \ldots, H - 1$, 
\begin{align}
    \label{app:posterior_alpha_gamma_star}p(\boldsymbol{\alpha}_{h,\boldsymbol{\gamma}_h^*}^*, \boldsymbol{\gamma}_h^*\mid \mathrm{rest}) \propto p(\boldsymbol{\alpha}_{h,\boldsymbol{\gamma}_h^*}^*\mid \mathrm{rest})p( \boldsymbol{\gamma}_h^*\mid \mathrm{rest} \setminus \boldsymbol{\alpha}_{h,\boldsymbol{\gamma}_h^*}^*).
\end{align}
The posterior inclusion probability for $\boldsymbol{\gamma}_{h\ell}^*$ in \eqref{app:posterior_alpha_gamma_star} is derived as follows:
For $\ell=1,\ldots,L$,
\begin{align*}
    p({\gamma}_{h\ell}^* \mid  \mathrm{rest}\setminus \boldsymbol{\alpha}^\ast_{h,\boldsymbol{\gamma}_h^\ast})
    & \propto p(\mathbf{z}_{\cdot h}^* \mid   \boldsymbol{\gamma}_{h}^*,{g}_h^*) p({\gamma}_{h\ell}^*\mid \boldsymbol{\gamma}_{h\backslash \ell}^*),
\end{align*}
where
\begin{align}
\begin{split}
   &  p(  \mathbf{z}_{\cdot h}^*\mid   \boldsymbol{\gamma}_{h}^*,{g}_h^* ) \\
    & = \int p(\mathbf{z}_{\cdot h}^* \mid  \boldsymbol{\alpha}_{h,\boldsymbol{\gamma}_{h}^*}^\ast, \boldsymbol{\gamma}_{h}^*)p(\boldsymbol{\alpha}_{h,\boldsymbol{\gamma}_{h}^*}^\ast\mid \boldsymbol{\gamma}_h^*,{g}_h^*)d\boldsymbol{\alpha}_{h,\boldsymbol{\gamma}_{h}^*}^\ast,\\
   & \propto \!\left|g_h^*(\mathbf{W}^{*\top}_{h,\boldsymbol{\gamma}^*_h}  \mathbf{W}^{*}_{h,\boldsymbol{\gamma}^*_h})^{-1}\right|^{-1/2} \\
   &\quad  \times \int  \exp\!\left\{ -{1\over 2}(\mathbf{z}_{\cdot h}^* - \alpha_{h0}^*\mathbf{1} - \widehat{\mathbf{W}}_{h,\boldsymbol{\gamma}^*_h}^{*}\boldsymbol{\alpha}^*_{h\backslash 0, \boldsymbol{\gamma}_h^*})^\top(\mathbf{z}_{\cdot h}^* - \alpha_{h0}^*\mathbf{1} - \widehat{\mathbf{W}}_{h,\boldsymbol{\gamma}^*_h}^{*}\boldsymbol{\alpha}^*_{h\backslash 0, \boldsymbol{\gamma}_h^*}) \right\} \\
   & \quad \times \exp\!\left\{-{1\over 2}\boldsymbol{\alpha}_{h,\boldsymbol{\gamma}_h^*}^{\ast\top}\begin{pmatrix}
             1 & \mathbf{0}\\
             \mathbf{0} & {g_h^{\ast }}^{-1}\mathbf{W}^{\ast\top}_{h,\boldsymbol{\gamma}^\ast_h}\mathbf{W}^{\ast}_{h,\boldsymbol{\gamma}^\ast_h}
         \end{pmatrix}\boldsymbol{\alpha}_{h,\boldsymbol{\gamma}_h^*}^{\ast}\right\}d\boldsymbol{\alpha}_{h,\boldsymbol{\gamma}_h^*}^*,\\
& \propto \!\left|{g_h^{\ast-1}}(\mathbf{W}^{*\top}_{h,\boldsymbol{\gamma}^*_h}\mathbf{W}^{*}_{h,\boldsymbol{\gamma}^*_h})\right|^{1/2} \!\left|\begin{pmatrix}
             \sum_{\eta\ge h}n_\eta + 1 & \mathbf{1}^\top\widehat{\mathbf{W}}^*_{h,\boldsymbol{\gamma}_h^*}\\
         \widehat{\mathbf{W}}^{*\top}_{h,\boldsymbol{\gamma}_h^{*}}\mathbf{1} & \widehat{\mathbf{W}}^{*\top}_{h,\boldsymbol{\gamma}_h^{*}} \widehat{\mathbf{W}}^*_{h,\boldsymbol{\gamma}_h^*}
             + {g_h^*}^{-1}\mathbf{W}^{*\top}_{h,\boldsymbol{\gamma}_h^*}\mathbf{W}^{*}_{h,\boldsymbol{\gamma}_h^*}
             \end{pmatrix}\right|^{-1/2}\\
         & \quad \times \exp\!\left\{{1\over 2} \mathbf{z}_{\cdot h}^{*\top}\begin{pmatrix}
             \mathbf{1} & \widehat{\mathbf{W}}_{h,\boldsymbol{\gamma}_h^*}^{*}
         \end{pmatrix}\begin{pmatrix}
             \sum_{\eta\ge h}n_\eta + 1 & \mathbf{1}^\top\widehat{\mathbf{W}}^*_{h,\boldsymbol{\gamma}_h^*}\\
         \widehat{\mathbf{W}}^{*\top}_{h,\boldsymbol{\gamma}_h^{*}}\mathbf{1} & \widehat{\mathbf{W}}^{*\top}_{h,\boldsymbol{\gamma}_h^{*}} \widehat{\mathbf{W}}^*_{h,\boldsymbol{\gamma}_h^*}
             + {g_h^*}^{-1}\mathbf{W}^{*\top}_{h,\boldsymbol{\gamma}_h^*}\mathbf{W}^{*}_{h,\boldsymbol{\gamma}_h^*}
             \end{pmatrix}^{-1}\begin{pmatrix}
             \mathbf{1}^\top \\ \widehat{\mathbf{W}}_{h,\boldsymbol{\gamma}_h^*}^{*\top}
         \end{pmatrix}\mathbf{z}_{\cdot h}^*\right\}.
\end{split}
\label{app:zstar-lhd}
\end{align}
Therefore, 
\begin{align*}
 \mathrm{Pr}\!\left({\gamma}_{h\ell }^*=1\mid \mathrm{rest}\setminus \boldsymbol{\alpha}^\ast_{h,\boldsymbol{\gamma}_h^\ast}\right)  &  =  \!\left( 1 + {p(\mathbf{z}^*_{\cdot h}\mid\gamma^*_{h\ell}=0, \boldsymbol{\gamma}^*_{h\backslash \ell},{g}_h^*)p(\gamma_{h\ell}^*=0\mid \boldsymbol{\gamma}^*_{h\backslash \ell})\over p(\mathbf{z}^*_{\cdot h}\mid \gamma^*_{h\ell}=1, \boldsymbol{\gamma}^*_{h\backslash \ell},{g}_h^*)p(\gamma_{h\ell}^*=1\mid \boldsymbol{\gamma}^*_{h\backslash \ell})} \right)^{-1},
\end{align*}
where 
$p(\gamma^*_{h\ell}=1\mid \boldsymbol{\gamma}^*_{h\backslash \ell})={(1-\varpi_h)(|\boldsymbol{\gamma}^*_{h\backslash \ell}|+1)( L-|\boldsymbol{\gamma}^*_{h\backslash \ell}| +(1-\varpi_h)(|\boldsymbol{\gamma}^*_{h\backslash \ell}|+1))^{-1}}$.
Given the updated $\boldsymbol{\gamma}_h^*$, the conditional posterior distribution of $\boldsymbol{\alpha}_{h,\boldsymbol{\gamma}_h^*}^*$ is Gaussian. The posterior form is derived as follows.
\begin{align*}
     p(\boldsymbol{\alpha}^*_{h,\boldsymbol{\gamma}^*_h}\mid \mathrm{rest})  &   \propto   p(\mathbf{y},\mathbf{z}_{\cdot h}^*\mid \tilde{\boldsymbol{\alpha}}_{\tilde{\boldsymbol{\gamma}}}, \tilde{\boldsymbol{\gamma}}, \boldsymbol{\xi},\tilde{\mathbf{z}}, \boldsymbol{\alpha}^*_{h,\boldsymbol{\gamma}_h^*},\boldsymbol{\gamma}^*_h)p(\boldsymbol{\alpha}^*_{h,\boldsymbol{\gamma}^*_h}|\boldsymbol{\gamma}^*_h,g^*_h),\\
    &  = p(\mathbf{y}\mid \tilde{\boldsymbol{\alpha}}_{\tilde{\boldsymbol{\gamma}}}, \tilde{\boldsymbol{\gamma}}, \boldsymbol{\xi},\tilde{\mathbf{z}})p(\mathbf{z}^*_{\cdot h} \mid   \boldsymbol{\alpha}^*_{h,\boldsymbol{\gamma}_h^*},\boldsymbol{\gamma}^*_h) p(\boldsymbol{\alpha}^*_{h,\boldsymbol{\gamma}^*_h}\mid \boldsymbol{\gamma}^*_h,g^*_h),\\
    & \propto \prod_{i:s_i\ge h}p({z}_{ih}^* \mid {\alpha}_{h,\boldsymbol{\gamma}_h^*}^*) p(\boldsymbol{\alpha}^*_{h,\boldsymbol{\gamma}^*_h}\mid \boldsymbol{\gamma}^*_h,g^*_h),\\ 
       & \propto \exp\!\left\{ -{1\over 2}(\mathbf{z}_{\cdot h}^* - \alpha_{h0}^*\mathbf{1} - \widehat{\mathbf{W}}_{h,\boldsymbol{\gamma}^*_h}^{*}\boldsymbol{\alpha}^*_{h\backslash 0,\boldsymbol{\gamma}_h^*})^\top(\mathbf{z}_{\cdot h}^* - \alpha_{h0}^*\mathbf{1} - \widehat{\mathbf{W}}_{h,\boldsymbol{\gamma}^*_h}^{*}\boldsymbol{\alpha}^*_{h\backslash 0,\boldsymbol{\gamma}_h^*}) \right\} \\
       & \quad \times \exp\!\left\{-{1\over 2}\boldsymbol{\alpha}_{h,\boldsymbol{\gamma}_h^*}^{\ast\top}\begin{pmatrix}
             1 & \mathbf{0}\\
             \mathbf{0} & {g_h^{\ast }}^{-1}\mathbf{W}^{\ast\top}_{h,\boldsymbol{\gamma}^\ast_h}\mathbf{W}^{\ast}_{h,\boldsymbol{\gamma}^\ast_h}
         \end{pmatrix}\boldsymbol{\alpha}_{h,\boldsymbol{\gamma}_h^*}^{\ast}\right\}.
         \end{align*}
Hence
\begin{align*}
\boldsymbol{\alpha}_{h,\boldsymbol{\gamma}_h^\ast}^\ast \mid \mathrm{rest}
          \sim\  \text{N}_{2+|\boldsymbol{\gamma}_h^*|}\!\left(\boldsymbol{\mu}_{\boldsymbol{\alpha}_h^*},\mathbf{V}_{\boldsymbol{\alpha}_h^*} \right),
\end{align*}
where \begin{align*}
    \mathbf{V}_{\boldsymbol{\alpha}^*_h}& =\begin{pmatrix}
             \sum_{\eta\ge h}n_\eta + 1 & \mathbf{1}^\top\widehat{\mathbf{W}}^*_{h,\boldsymbol{\gamma}_h^*}\\
             \widehat{\mathbf{W}}^{*\top}_{h,\boldsymbol{\gamma}_h^{*}}\mathbf{1} & \widehat{\mathbf{W}}^{*\top}_{h,\boldsymbol{\gamma}_h^{*}} \widehat{\mathbf{W}}^*_{h,\boldsymbol{\gamma}_h^*}
             + {g_h^*}^{-1}\mathbf{W}^{*\top}_{h,\boldsymbol{\gamma}_h^*}\mathbf{W}^{*}_{h,\boldsymbol{\gamma}_h^*}
         \end{pmatrix}^{-1}, \\
         \boldsymbol{\mu}_{\boldsymbol{\alpha}^*_h} & = \mathbf{V}_{\boldsymbol{\alpha}_h^*}\begin{pmatrix}
             \mathbf{1}^\top
             \\ \widehat{\mathbf{W}}^{*\top}_{h,\boldsymbol{\gamma}_h^*}
         \end{pmatrix}\mathbf{z}_{\cdot h}^*.
\end{align*}
\paragraph{Step 8: Update of  $\mathbf{g}^*$.}
For $h=1,\ldots,H-1$, the conditional posterior distribution of $g_h^*$ is derived as
\begin{align*}
   p(g_h^* \mid \mathrm{rest}) 
   & \propto p(\boldsymbol{\alpha}_{h,\boldsymbol{\gamma}_h^*}^*\mid \boldsymbol{\gamma}^*_h,g_h^*)p(g_h^*),\\
   & \propto (g_h^*)^{-{1\over 2}} \!\left|g_h^*(\mathbf{W}_{h,\boldsymbol{\gamma}_h^*}^{*\top}\mathbf{W}_{h,\boldsymbol{\gamma}_h^*}^{*})^{-1}\right|^{-{1\over 2}}\exp\!\left\{-{1\over 2g_h^*} \boldsymbol{\alpha}_{h\backslash 0,\boldsymbol{\gamma}_h^*}^{*\top}\mathbf{W}_{h,\boldsymbol{\gamma}_h^*}^{*\top}\mathbf{W}_{h,\boldsymbol{\gamma}_h^*}^{*}\boldsymbol{\alpha}_{h\backslash 0,\boldsymbol{\gamma}_h^*}^{*}\right\}\exp\!\left\{-{n\over 2g_h^*}\right\}.
\end{align*}
Therefore, 
\begin{gather*} 
   g_h^\ast  \sim \text{IG}\!\left({1\over 2}({|\boldsymbol{\gamma}_h^*|+2}),{1\over 2}\!\left(\boldsymbol{\alpha}_{h\backslash 0,\boldsymbol{\gamma}_h^*}^{*\top}\mathbf{W}_{h,\boldsymbol{\gamma}_h^*}^{*\top}\mathbf{W}_{h,\boldsymbol{\gamma}_h^*}^{*}\boldsymbol{\alpha}_{h\backslash 0,\boldsymbol{\gamma}_h^*}^{*} + n \right)\right).
\end{gather*}
% -------------
\paragraph{Step 9: Update of  $(\mathbf{D},\mathbf{R})$.}
The correlation matrix $\mathbf{R}_h$ does not admit a conjugate prior, and is therefore updated using either a parameter expansion method or a Metropolis--Hastings algorithm. In this work, we adopt the parameter expansion approach, which expands $\mathbf{R}_h$ to a covariance matrix $\boldsymbol{\Sigma}$ that allows for a conjugate inverse Wishart prior. 
Let $\tilde{\mathbf{z}}^\star_h = \{ \tilde{\mathbf{z}}_i^\top\mathbf{D}_{h}\}_{i:s_i=h}$. Then each row of $\tilde{\mathbf{z}}_h^\star$, i.e. $ \tilde{\mathbf{z}}_{ih}^{\star\top} = (\tilde{\mathbf{z}}_i^\top\mathbf{D}_{h})^\top$ follows $N_m(\mathbf{0},\boldsymbol{\Sigma}_h)$, where $\boldsymbol{\Sigma}_h=\mathbf{D}_h\mathbf{R}_h\mathbf{D}_h$. For $h=1,\ldots,H$,
\begin{align*}
     p(\boldsymbol{\Sigma}_h\mid \mathrm{rest})
    & \propto p(\mathbf{y}\mid \tilde{\boldsymbol{\alpha}}_{\tilde{\boldsymbol{\gamma}}}, \tilde{\boldsymbol{\gamma}},\boldsymbol{\xi},\tilde{\mathbf{z}})p(\tilde{\mathbf{z}}_h^\star\mid \mathbf{s}, \boldsymbol{\Sigma}_h)p(\boldsymbol{\Sigma}_h),\\
    & \propto \prod_{i:s_i=h}p(\tilde{\mathbf{z}}^\star_{ih}\mid  {s}_i, \boldsymbol{\Sigma}_h)p(\boldsymbol{\Sigma}_h),\\
   & \propto |\boldsymbol{\Sigma}_h|^{-{n_h/ 2}}\exp\!\left\{-{1\over 2}\sum_{i:s_i=h}\tilde{\mathbf{z}}^\star_{ih}\boldsymbol{\Sigma}^{-1}_h\tilde{\mathbf{z}}^{\star\top}_{ih}\right\}|\boldsymbol{\Sigma}_h|^{-(m+1+m+1)/2}\exp\!\left\{ -{1\over 2}\text{tr}\!\left(\boldsymbol{\Sigma}_h^{-1}\right)\right\},\\
   & \propto |\boldsymbol{\Sigma}_h|^{-(n_h+m+1+m+1)/2}\exp\!\left\{-{1\over 2}\text{tr}\!\left(\!\left(\mathbf{I}_m+\tilde{\mathbf{z}}_h^{\star \top}\tilde{\mathbf{z}}_h^{\star }\right)\boldsymbol{\Sigma}_h^{-1}\right)\right\}.
   \end{align*}
Therefore, 
\begin{align*}
  \boldsymbol{\Sigma}_h \mid \mathrm{rest} \sim \text{IW}_{n_h+m+1}\!\left(\!\left(\mathbf{I}_m+\tilde{\mathbf{z}}_h^{\star \top}\tilde{\mathbf{z}}_h^{\star }\right)^{-1}\right).
\end{align*}
After sampling $\boldsymbol{\Sigma}_h$, we recover the auxiliary scale matrix and the correlation matrix through
$\mathbf{D}_h = \sqrt{\mathrm{diag}(\boldsymbol{\Sigma}_h)}$
and
$\mathbf{R}_h = \mathbf{D}_h^{-1}\boldsymbol{\Sigma}_h\mathbf{D}_h^{-1}$.
The expanded latent variables are then transformed back to the original copula scale using
$\tilde{\mathbf{z}}_i = \tilde{\mathbf{z}}_i^\star \mathbf{D}_{s_i}^{-1}$.

\section{Proofs}  \label{app:sec2}

\begin{proof}[Proof of Theorem~\ref{thm:appGC}]
We first show that for any $\mathbf R_1,\mathbf R_2 \in \mathcal Q^m$, there exists a constant $C$ depending only on $m$ and the minimum eigenvalue in $\mathcal Q^m$ such that
\begin{align}
     \lVert c_G(\cdot\,;\mathbf R_1) - c_G(\cdot\,;\mathbf R_2)\rVert_{L^1[0,1]^m} \le C  \lVert\mathbf R_1-\mathbf R_2\rVert_F,
     \label{eqn:GCbo}
\end{align}
where $\lVert\cdot\rVert_F$ is the Frobenius norm. To verify this, note that the substitution $u_k=\Phi(z_k)$ yields
$$
c_G(u_1,\dots,u_m;\mathbf R_j) = \frac{\phi(z_1,\dots, z_m;\mathbf R_j)}{\prod_{k=1}^m \phi(z_k)},\quad du_1\dots du_m = \prod_{k=1}^m \phi(z_k)dz_1\dots dz_m,
$$
which gives
    \begin{align*}
        \lVert c_G(\cdot\,;\mathbf R_1) - c_G(\cdot\,;\mathbf R_2)\rVert_{L^1[0,1]^m} &=   \lVert \phi(\cdot\,;\mathbf R_1) - \phi(\cdot\,;\mathbf R_2)\rVert_{L^1(\mathbb R^m)}.
 %       \label{eqn:gascop}
    \end{align*}
Observe that the Kullback–Leibler divergence between two zero-mean multivariate normal distributions with covariance matrices $\mathbf V_1$ and $\mathbf V_2$ satisfies
\begin{align}
\begin{split}
    KL(\phi(\cdot\,;\mathbf V_1), \phi(\cdot\,;\mathbf V_2)) &=\frac{1}{2}[\text{tr}(\mathbf V_2^{-1}\mathbf V_1)-m-\log\det(\mathbf V_2^{-1}\mathbf V_1)]\\
    &=\frac{1}{2}\sum_{k=1}^m ( \lambda_k-1-\log \lambda_k),
    \end{split}
   \label{eqn:KLdiv}
\end{align}
where $\lambda_k$ denote the eigenvalues of
$\mathbf V_2^{-1/2}\mathbf V_1\mathbf V_2^{-1/2}$.
Using the inequality
$x-1-\log x \le(x-1)^2/(2\min\{x,1\})$ for every $x>0$, the rightmost side of \eqref{eqn:KLdiv} is bounded by $C_1\sum_{k=1}^m (\lambda_k-1)^2$, where $C_1$ depends only on the eigenvalues of $\mathbf V_1$ and $\mathbf V_2$. By Lemma~A.1 of \citet{banerjee2015bayesian}, this term is further bounded by $C_2\lVert\mathbf V_1-\mathbf V_2\rVert_F^2$ for some $C_2$ depending only on those eigenvalues. Therefore, we obtain \eqref{eqn:GCbo} by Pinsker's inequality.

Now set $\delta=\epsilon/(2C)$ and let $\{\mathbf R_h\}_{h=1}^H$ be a finite $\delta$-net of the range $\tilde{\mathbf R}(\mathcal T)$ with respect to the Frobenius norm. The finiteness of $H$ follows since the range of a compact space under a continuous map is compact (Theorem 26.5 of \citet{munkres2000topology}), hence $\tilde{\mathbf R}(\mathcal T)$ is totally bounded, which implies that $H$ is finite.
Define the bump function
    $$
\psi_h(t)=\max\!\left\{0,1-\frac{\lVert\tilde{\mathbf R}(t)-\mathbf R_h\rVert_F}{2\delta}\right\},
    $$
which is continuous and satisfies $\psi_h(t)>0$ only if $\lVert\tilde{\mathbf R}(t)-\mathbf R_h\rVert_F<2\delta$.
Since the collection $\{\mathbf R_h\}_{h=1}^H$ forms a $\delta$-net, for every given $t\in\mathcal T$ there exists at least one $h$ with $\lVert\tilde{\mathbf R}(t)-\mathbf R_h\rVert_F\le\delta$, and hence $\psi_h(t)> 0$ for that $h$.
   We therefore normalize $\psi_h(t)$ to form a partition of unity: $\pi_h(t)=\psi_h(t)/\sum_{j=1}^H \psi_j(t)$.
Using \eqref{eqn:GCbo}, observe that
\begin{align*}
    \bigg\lVert c_G(\cdot \,;\tilde{\mathbf R}(t))-\sum_{h=1}^H \pi_h(t)c_G(\cdot\,;\mathbf R_h)\bigg\rVert_{L^1[0,1]^m}&\le \sum_{h=1}^H \pi_h(t) \lVert c_G(\cdot \,;\tilde{\mathbf R}(t))-c_G(\cdot\,;\mathbf R_h)\rVert_{L^1[0,1]^m}\\
    &\le C \sum_{h=1}^H \pi_h(t) \lVert \tilde{\mathbf R}(t)-\mathbf R_h\rVert_F.
\end{align*}
Since $\pi_h(t)>0$ only if $\lVert \tilde{\mathbf R}(t)-\mathbf R_h\rVert_F < 2\delta$, the last expression is bounded by $2C\delta=\epsilon$.
\end{proof}

\begin{proof}[Proof of Theorem~\ref{thm:appGCM}]
    Note that $\mathcal Q^m(\lambda_0)$ is compact. Let $\delta=\epsilon/(2C)$ using $C$ in \eqref{eqn:GCbo}. Then there exist a finite $\delta$-net $\{\mathbf R_h\}_{h=1}^H$ of $\mathcal Q^m(\lambda_0)$ with respect to the Frobenius norm. Now, define the bump functions $\psi_h:\mathcal Q^m(\lambda_0)\rightarrow[0,1]$ and their normalized versions $\psi_h^\ast$ by
    $$
\psi_h(\mathbf R) = \max\!\left\{0,1-\frac{\lVert \mathbf R-\mathbf R_h \rVert_F}{2\delta}\right\}, \quad \psi_h^\ast(\mathbf R) = \frac{\psi_h(\mathbf R)}{\sum_{j=1}^H \psi_j(\mathbf R)}.
    $$
The normalization $\psi_h^\ast$ is well defined because, for every $\mathbf R\in \mathcal Q^m(\lambda_0)$, there exists at least one $h$ such that $\lVert \mathbf R-\mathbf R_h \rVert_F \le \delta$, which ensures $\psi_h(\mathbf R)>0$.
Define the weight functions $\pi_h(t) = \int_{\mathcal Q^m(\lambda_0)}\psi_h^\ast(\mathbf R)K(t,d\mathbf R)$, which satisfy $\sum_{h=1}^H \pi_h(t)=1$ since $K$ is a Markov kernel. Moreover, $\pi_h$ is continuous owing to the weak-Feller property.
Now observe that
\begin{align*}
       \bigg\lVert c_K(\cdot\,;t) - \sum_{h=1}^H \pi_h(t)c_G(\cdot\,; \mathbf R_h) \bigg\rVert_{L^1[0,1]^m}  &=  \bigg\lVert \int \sum_{h=1}^H \psi_h^\ast(\mathbf R)\Big[c_G(\cdot\,;\mathbf R)-c_G(\cdot\,;\mathbf R_h)\Big] K(t, d\mathbf R)\bigg\rVert_{L^1[0,1]^m} \\
       & \le\int \sum_{h=1}^H \psi_h^\ast(\mathbf R)\Big\lVert c_G(\cdot\,;\mathbf R)-c_G(\cdot\,;\mathbf R_h)\Big\rVert_{L^1[0,1]^m}  K(t, d\mathbf R)\\
        & \le C\int \sum_{h=1}^H \psi_h^\ast(\mathbf R)\lVert \mathbf R - \mathbf R_h\rVert_F K(t, d\mathbf R),
    \end{align*}
where the last inequality follows from \eqref{eqn:GCbo}.
Since $\psi_h^\ast(\mathbf R)>0$ only if $\lVert \mathbf R-\mathbf R_h\rVert_F < 2\delta$, the last expression is bounded by $2C\delta=\epsilon$.
\end{proof}

\endgroup

%-------------------------------------------------------------------